\title{Hierarchical Time-Dependent Oracles\footnote{%
	This work is partially supported by the EU FP7/2007-2013, under grant agreements no.~609026 (project MOVESMART) and no.~621133 (project HoPE), and by DFG under grant agreement no. FOR-2083.
	}
}
\titlerunning{Hierarchical Time-Dependent Oracles} 
\author[1,4]{Spyros Kontogiannis}
\author[2]{Dorothea Wagner}
\author[3,4]{Christos Zaroliagis}
\affil[1]{%
	Comp. Sci. \& Eng. Dept.,
	University of Ioannina, 
	Greece
  \hfill[\texttt{kontog@cse.uoi.gr}]}
\affil[2]{%
	Karlsruhe Institute of Technology,
	Germany
	\hfill[\texttt{dorothea.wagner@kit.edu}]}
\affil[3]{%
	Comp. Eng. \& Inf. Dept.,
	University of Patras, 
	Greece
	\hfill[\texttt{zaro@ceid.upatras.gr}]}
\affil[4]{%
	Computer Technology Institute and Press ``Diophantus'',
	Greece
	}
\authorrunning{S.~Kontogiannis, \ D.~Wagner, \ C.~Zaroliagis}
\subjclass{%
	F.2.2 Nonnumerical Algorithms and Problems; 
	I.1.2 Algorithms 
	05C85 Graph algorithms; 
	05C12 Distance in graphs; 
}
\keywords{%
	Time-dependent shortest paths;
	FIFO property;
	Distance oracles.
}
\begin{document}

\maketitle

\begin{abstract}

We study networks obeying \emph{time-dependent} min-cost path metrics, and present novel oracles for them which \emph{provably} achieve two unique features: 
(i) \emph{subquadratic} preprocessing time and space, \emph{independent} of the metric's amount of disconcavity; 
(ii) \emph{sublinear} query time, in either the network size or the actual Dijkstra-Rank of the query at hand.

\end{abstract}

\section{Introduction}
\label{section:intro}

Concurrent technological infrastructures (e.g., road networks, social networks, e-commerce platforms, energy-management systems) are typically of very large scale and impose as a routine task the computation of min-cost paths in real-time, while their characteristics usually evolve with time.
The large-scale and real-time response challenges have been addressed in the last $15$ years by means of a new algorithmic trend: the provision of \emph{oracles}. That is, data structures created by appropriately selecting precomputed information (summaries) and which subsequently support query algorithms with real-time responses. The quality of an oracle is assessed by its preprocessing space and time requirements, the time-complexity of the query algorithm and the approximation guarantee (stretch).
Numerous oracles have been proposed and analyzed (see e.g., \cite{2013-Agarwal-Godfrey,2010-Patrascu-Roditty,2011-Porat-Roditty,2014-Sommer-spq-survey,2009-Sommer-Verbin-Yu,2005-Thorup-Zwick,2012-Wulf-Nilsen-a,2012-Wulf-Nilsen-b} and references therein) for large-scale, mostly undirected networks, accompanied by a \emph{static} arc-cost metric.
In tandem with oracles, an equally important effort (with similar characteristics) has also emerged in the last $15$ years under the tag of \term{speedup techniques}, for approaches tailored to work extremely well in real-life instances (see e.g., \cite{2014-Bast-Delling-Goldberg-Hannemann-Pajor-Sanders-Wagner-Werneck} and references therein).

The temporality of the network characteristics is often depicted by some kind of predetermined dependence of the metric on the actual time that each resource is used (e.g., traversal speed in road networks, packet-loss rate in IT networks, arc availability in social networks, etc). Perhaps the most typical application scenario (motivating also our work) is route planning in road networks, where the travel-time for traversing an arc $a=uv$ (modeling a road segment) depends on the temporal traffic conditions while traversing $uv$, and thus on the departure time from its tail $u$.
This gives rise to \emph{time-varying} network models and to computing min-cost (a.k.a. shortest) paths in such networks. Several variants of this problem try to model time-variation of the underlying graph structure and/or the arc-cost metric (e.g., dynamic shortest paths, parametric shortest paths, stochastic shortest paths, temporal networks, etc).
In this work, we consider the case in which the cost variation of an arc $a$ is determined by a \emph{function} $D[a]$, which is considered to be a continuous, piecewise linear (pwl)\footnote{
	Major car navigator vendors provide real-time estimations of travel-time values by periodically sampling the average speed of road segments using the cars connected to the service as sampling devices. The most customary way to represent this historic traffic data, is to consider the continuous pwl interpolants of the sample points as \emph{arc-travel-time functions} of the corresponding instance.} 
	and periodic function of the time at which the resource is actually being used \cite{2004-Dean-a,2012-Dehne-Omran-Sack,2014-Foschini-Hershberger-Suri,1990-Orda-Rom}.
In case of providing route plans in time-dependent road networks, arc-costs are considered as \term{arc-travel-time} functions, while time-dependent shortest paths as \term{minimum-travel-time} paths.
The goal is then to determine the cost (\term{minimum-travel-time}) of a shortest path from an origin $o$ to a destination $d$, as a function of the \term{departure-time} $t_o$ from $o$. Due to the time-dependence of the arc-cost metric, the actual arc-cost value of an arc $a =  uv$ is unknown until the exact time $t_u \geq t_o$ at which $uv$ starts being traversed.

\smallskip
\noindent
\emph{\textbf{Problem Setting and Related Work.}}
Two variants of the \term{time-dependent shortest path} problem have been considered in the literature:
$TDSP(o,d,t_o)$ (resp.~$TDSP(o,\star,t_o)$) focuses on the one-to-one (resp.~one-to-all) determination of the \emph{scalar cost} of a minimum-travel-time (shortest) path to $d$ (resp.~for all $d$), when departing from the given origin $o$ at time $t_o$.
$TDSP(o,d)$ (resp.~$TDSP(o,\star)$) focuses on the one-to-one (resp., one-to-all) succinct representation of the time-dependent minimum-travel-time path \emph{function}(s) $D[o,d]$ from  $o$ to $d$ (resp.~towards all reachable $d$), and all departure-times from $o$.


$TDSP(o,d,t_o)$ has been studied as early as \cite{1966-Cooke-Halsey}. The first work on $TDSP(o,d,t_o)$ for continuous time-axis was \cite{1969-Dreyfus} where it was proved that, if \emph{waiting-at-nodes} is allowed unconditionally, then $TDSP(o,d,t_o)$ is solvable in quasilinear time via a time-dependent variant of Dijkstra's algorithm (we call it $\alg{TDD}$), which relaxes arcs by computing the arc costs ``on the fly'', upon settling their tails.
A more complete treatment for the continuous case, considering various limitations in the waiting-times at the nodes of the network, was provided in \cite{1977-Halpern}; an algorithm was also given for $TDSP(o,d,t_o)$, whose complexity cannot be bounded by a function of the network topology.
An excellent overview of the problem is provided in \cite{1990-Orda-Rom}. Among other results, it was proved that for \emph{affine arc-cost} functions possessing the FIFO property (according to which all the arc-cost functions have slopes at least $-1$), in addition to $\alg{TDD}$, a time-dependent variant of the label-correcting Bellman-Ford algorithm works.
Moreover, if waiting-at-nodes is \emph{forbidden} and the arc-costs do not preserve the FIFO property, then \term{subpath-optimality} of shortest paths is not necessarily preserved. In such a case, many variants of the problem are also NP-hard \cite{SOS1998}. Additionally, when shortest path costs are well defined and optimal waiting-times at nodes always exist, a non-FIFO arc with \emph{unrestricted-waiting-at-tail} policy is equivalent to a FIFO arc in which waiting at the tail is not beneficial \cite{1990-Orda-Rom}.
For these reasons, we focus here on instances for which the FIFO property holds, as indeed is the case with most of past and recent work on $TDSP(o,d,t_o)$.


The complexity of $TDSP(o,d)$ was first questioned in \cite{2004-Dean-b,2004-Dean-a} and remained open until recently, when it was proved in \cite{2014-Foschini-Hershberger-Suri} that, in case of FIFO-abiding pwl arc-cost functions, for a \emph{single} origin-destination pair $(o,d)$ the space complexity for succinctly representing $D[o,d]$ is $(1+K)\cdot n^{\Theta(\log n)}$, where $n$ is the number of vertices and $K$ is the total number of breakpoints (or legs) of all the arc-cost functions. Note that $K$ can be substituted by the number $K^*$ of \emph{concavity-spoiling} breakpoints (at which the arc-cost slopes increase) of the arc-cost functions. Several \emph{output-sensitive} algorithms for the exact computation of $D[o,d]$ have been presented \cite{2004-Dean-a,2012-Dehne-Omran-Sack,2014-Foschini-Hershberger-Suri,1990-Orda-Rom}, the most efficient being the ones in \cite{2012-Dehne-Omran-Sack,2014-Foschini-Hershberger-Suri}.


Due to the above mentioned hardness of $TDSP(o,d)$, and also since the time-dependent arc-costs are typically only (e.g., pwl) approximations of the actual costs, it is quite natural to seek for succinct representations of approximations to $D[o,d]$, which aim at trading-off accuracy for computational effort. Several \emph{one-to-one} $(1+\eps)$-approximation algorithms for $TDSP(o,d)$ appeared recently in the literature~\cite{2012-Dehne-Omran-Sack,2014-Foschini-Hershberger-Suri,2014-Omran-Sack}, the most successful being those provided in~\cite{2014-Omran-Sack}.
\remove{
A one-to-one (i.e., for a single $(o,d)$ pair) approximation algorithm for computing a $D[o,d]$ function was provided in \cite{2012-Dehne-Omran-Sack}. That algorithm requires
\(
	\Order{\frac{1}{\eps}\cdot(D_{\max}[o,d] - D_{\min}[o,d])}
\)
calls of $TDSP(d,o,t_d)$ in the reverse instance for producing a $(1+\eps)$-upper-approximating function $\De[o,d]$ of $D[o,d]$, where $D_{\max}[o,d] = \max_{t\in[0,T)} D[o,d](t)$, $D_{\min}[o,d] = \min_{t\in[0,T)} D[o,d](t)$, and $T$ is the time period.
Another one-to-one $(1+\eps)-$approximation algorithm was provided in \cite{2014-Foschini-Hershberger-Suri}. That algorithm makes
\(
	\Order{\frac{K}{\eps}\log\left(\frac{D_{\max}[o,d]}{D_{\min}[o,d]}\right) \cdot\log\left(\frac{T}{K\eps D_{\min}[o,d]}\right)}
\)
calls to $TDSP(o,d,t_o)$.

Two further one-to-one $(1+\eps)$-approximation algorithms for $D[o,d]$ were given in \cite{2014-Omran-Sack}. The first algorithm requires
\(
	\Order{\frac{K}{\eps}\cdot
	\left[
			\log\left(\frac{D_{\max}[o,d]}{D_{\min}[o,d]}\right) + \log\left(\frac{T}{K D_{\min}[o,d]}\right)
	\right]}
\)
and the second algorithm requires
\(
	\Order{ K \cdot
		\left[ \frac{1}{\eps}
			\log\left(\frac{D_{\max}[o,d]}{D_{\min}[o,d]}\right) + \log\left(\frac{L}{K\eps D_{\min}[o,d]}\right)\right]}
\)
calls to $TDSP(o,d,t_o)$.
}
The first \emph{one-to-all} $(1+\eps)$-approximation algorithm for $TDSP(o,\star)$, called \term{bisection} ($\alg{BIS}$), was given in \cite{2014-Kontogiannis-Zaroliagis}. It is based on bisecting the (common to all functions) axis of departure-times from $o$ and considers slightly stricter assumptions than just the FIFO property for the arc-cost metric.
$\alg{BIS}$ requires $\Order{\frac{K^*}{\eps}\cdot\log^2\left(\frac{n}{\eps}\right)}$ calls to $TDSP(o,\star,t_o)$. Note that all one-to-one approximation algorithms for $TDSP(o,d)$ \cite{2012-Dehne-Omran-Sack,2014-Foschini-Hershberger-Suri,2014-Omran-Sack} demand, in worst-case, a comparable amount of calls to $TDSP(o,\star,t_o)$, just for one $od$-pair.


Minimum-travel-time oracles for time-dependent networks (TD-oracles henceforth) had received no attention until recently \cite{2014-Kontogiannis-Zaroliagis}. A TD-oracle is an offline-precomputed data structure that allows the evaluation of an \emph{upper-approximation} $\overline{\Delta}[o,d](t_o)$ of $D[o,d](t_o)$ (due to the hardness of computing and storing $D[o,d]$), for \emph{any} possible query $(o,d,t_o)\in V\times V\times\nonnegativereals$ that may appear in an online fashion.
One trivial solution would be to provide a succinct representation of $\overline{\Delta}[o,d]$ for each pair $(o,d)\in V\times V$, for the sake of rapid evaluations in the future but at the expense of superquadratic space. Another trivial solution would be to execute $\alg{TDD}$ ``on-the-fly'' for each new query $(o,d,t_o)$, at the expense of superlinear query-time.
A non-trivial TD-oracle should aim to trade-off smoothly preprocessing requirements with query-response times and approximation guarantees. In particular, it should precompute a data structure in \emph{subquadratic} time and space, and also provide a query algorithm which evaluates \emph{efficiently} (i.e., faster than $\alg{TDD}$) $\overline{\Delta}[o,d](t_o)$, where $\overline{\Delta}[o,d]$ must have a \emph{provably} good approximation guarantee w.r.t. $D[o,d]$.
Note that there exists important applied work (\term{speedup heuristics}) for computing time-dependent shortest paths (e.g., \cite{2013-Batz-Geisberger-Sanders-Vetter,2011-Delling_TDSHARC,2009-Delling-Wagner,ndls-bastd-12}), which however provide mainly empirical evidence on the success of the adopted approaches.

The TD-oracles in \cite{2014-Kontogiannis-Zaroliagis} required $\Order{n^{2-\beta} (K^*+1)}$ preprocessing space and time, for some constant $\beta\in(0,1)$, and are able to answer queries (under certain conditions) in time $\Order{n^{\delta}}$, for some constant $\delta\in(0,1)$. When $K^*\in \order{n}$, the oracles can be fine-tuned so as to assure sublinear query-times and subquadratic preprocessing requirements in $n$.
An extensive experimental evaluation of those oracles on a real-world road network is provided in \cite{2015-Kontogiannis-Michalopoulos-Papastavrou-Paraskevopoulos-Wagner-Zaroliagis}, demonstrating their practicality, at the expense, however, of large memory consumption due to the linear dependence of the preprocessing space on $K^*$ which can be $\Omega(n)$.

The main challenge addressed here is to provide TD-oracles that achieve:
(i) subquadratic preprocessing requirements, \emph{independently} of $K^*$; and
(ii) query-times sublinear, not only in the worst-case (i.e., in $n$), but also in the number $\Gamma[o,d](t_o)$ of settled vertices when executing $\alg{TDD}(o,\star,t_o)$ until $d$ is settled (\emph{Dijkstra-Rank}).

\smallskip
\noindent
\emph{\textbf{Our Contributions.}}
We address positively the aforementioned challenge by providing:
	\emph{(i)}~A novel and remarkably simple algorithm ($\alg{TRAP}$) (cf. Section~\ref{section:TRAP-method}) for constructing one-to-many $(1+\eps)$-upper-approximations $\overline{\Delta}[o,d]$ (\emph{summaries}) of minimum-travel-time functions $D[o,d]$, for all ``sufficiently distant'' destinations $d$ from the origin $o$. $\alg{TRAP}$ requires $\order{n}$ calls to $TDSP(o,\star,t_o)$, which is \emph{independent} of the degree of concavity $K^*$. Its novelty is that it does not demand the concavity of the unknown function to approximate.
	\emph{(ii)}~The $\alg{TRAPONLY}$ and $\alg{FLAT}$ oracles (cf. Section~\ref{section:FLAT-oracle}) which exploit $\alg{TRAP}$ and $\alg{BIS}$ to construct minimum-travel-time summaries from randomly selected landmarks to \emph{all} reachable destinations. The preprocessed data structures require subquadratic space and time, independently of $K^*$. $\alg{FLAT}$ uses the query algorithms in \cite{2014-Kontogiannis-Zaroliagis}. $\alg{TRAPONLY}$ needs to extend them in order to recover missing summaries for local neighborhoods around a landmark. In both cases sublinear query-times are achieved.
	\emph{(iii)}~The $\alg{HORN}$ oracle (cf. Section~\ref{section:HORN-oracle}) which organizes a hierarchy of landmarks, from many local landmarks possessing summaries only for small neighborhoods of destinations around them, up to a few global landmarks possessing summaries for all reachable destinations. $\alg{HORN}$'s preprocessing requirements are again subquadratic. We then devise and analyze a novel query algorithm ($\alg{HQA}$) which exploits this hierarchy, with query-time \emph{sublinear} in the Dijkstra-Rank of the query at hand.

Except for the choice of landmarks, our algorithms are deterministic. A recent experimental study \cite{2016-Kontogiannis-Michalopoulos-Papastavrou-Paraskevopoulos-Wagner-Zaroliagis} demonstrates the excellent performance of our oracles in practice, achieving considerable memory savings and query times about three orders of magnitude faster than $\alg{TDD}$, and more than $70\%$ faster than those in \cite{2015-Kontogiannis-Michalopoulos-Papastavrou-Paraskevopoulos-Wagner-Zaroliagis}.
Table~\ref{table:td-oracles-overview} summarizes the achievements of the TD-oracles presented here and their comparison with the oracles in \cite{2014-Kontogiannis-Zaroliagis}.
\begin{table}[t]
\centering
\begin{tabular}{|r||c|c|c|}
\hline
& preprocessing space/time & query time & recursion budget (depth) $r$
\\ \hline\hline
\cite{2014-Kontogiannis-Zaroliagis} 
& $K^*\cdot n^{2-\beta+\order{1}}$ & $n^{\delta+\order{1}}$ & $r\in\Order{1}$
\\ \hline
$\alg{TRAPONLY}$
& $n^{2-\beta+\order{1}}$ & $n^{\delta+\order{1}}$ & $r\approx\frac{\delta}{a} - 1$
\\ \hline
$\alg{FLAT}$
& $n^{2-\beta+\order{1}}$ & $n^{\delta+\order{1}}$ & $r\approx\frac{2\delta}{a} - 1$
\\ \hline
$\alg{HORN}$
& $n^{2-\beta+\order{1}}$ & $\approx \Gamma[o,d](t_o)^{\delta+\order{1}}$ & $r\approx\frac{2\delta}{a} - 1$
\\ \hline
\end{tabular}
\caption{\label{table:td-oracles-overview} Achievements of oracles for TD-instances with period $T=n^a$, for constant $a\in(0,1)$. The stretch of all query algorithms is $1+\eps\cdot \frac{ (\eps / \psi)^{r+1} }{ (\eps / \psi)^{r+1} - 1 }$. For all oracles, except for the first, we assume that $\beta\downarrow 0$.}
\end{table}

\section{Preliminaries}
\label{section:preliminaries+assumptions}

\smallskip
\noindent
\emph{\textbf{Notation and Terminology.}} 
For any integer $k\geq 1$, let $[k] = \{ 1,2,\ldots, k\}$.
A \term{time-dependent network instance} (TD-instance henceforth) consists of a directed graph $G=(V,A)$ with $|V| = n$ vertices and $|A|= m \in \Order{n}$ arcs, where each arc $a\in A$ is accompanied with a continuous, pwl \emph{arc-cost} function $D[a]:\nonnegativereals\mapsto \positivereals$. We assume that all these functions are periodic with period $T>0$ and are defined as follows: $\forall k\in \naturals, \forall t\in [0,T),~ D[a](kT+t) = d[a](t)$, where $d[a]: [0,T) \rightarrow (0,M_a]$ is such that $\lim_{t\uparrow T}d[a](t) = d[a](0)$, for some fixed integer $M_a$ denoting the maximum possible cost ever seen for arc $a$. Let also $M = \max_{a\in A} M_a$ denote the maximum arc-cost ever seen in the entire network.
Since $D[a]$ is periodic, continuous and pwl function, it can be represented succinctly by a sequence of $K_{a}$ breakpoints (i.e., pairs of departure-times and arc-cost values) defining $d[a]$. $K = \sum_{a\in A} K_{a}$ is the number of breakpoints representing all  arc-cost functions, $K_{\max} = \max_{a\in A} K_{a}$, and $K^*$ is the number of \emph{concavity-spoiling} breakpoints (the ones at which the arc-cost function slopes increase). Clearly, $K^* \leq K$, and $K^* = 0$ for \emph{concave} arc-cost functions.

To ease the exposition and also for the sake of compliance with terminology in previous works (inspired by the primary application scenario of route planning in time-dependent road networks), we consider arc-costs as \emph{arc-travel-times} and time-dependent shortest paths as \term{minimum-travel-time} paths. This terminology facilitates the following definitions.
The \term{arc-arrival-time} function of $a\in A$ is $Arr[a](t) = t + D[a](t)$, $\forall t\in [0,\infty)$.
The \emph{path-arrival-time} function of a path $p = \langle a_1,\ldots,a_k \rangle$ in $G$ (represented as a sequence of arcs) is the composition $Arr[p](t) = Arr[a_k](Arr[a_{k-1}](\cdots(Arr[a_1](t))\cdots))$ of the arc-arrival-time functions for the constituent arcs.
The \emph{path-travel-time} function is then $D[p](t) = Arr[p](t) - t$.

For any $(o,d)\in V\times V$, $\mathcal{P}_{o,d}$ denotes the set of $od$-paths. For any $p\in \mathcal{P}_{o,x}$ and $q\in\mathcal{P}_{x,d}$, $s = p\bullet q \in \mathcal{P}_{o,d}$ is the concatenation of $p$ and $q$ at $x$.
The \term{earliest-arrival-time} function is 
	$Arr[o,d](t_o) 	= \min_{p\in \mathcal{P}_{o,d}}\left\{ Arr[p](t_o)\right\}$, $\forall t_o\geq 0$, 
while the \term{minimum-travel-time} function is defined as
	$D[o,d](t_o) 	= \min_{p\in \mathcal{P}_{o,d}}\left\{ D[p](t_o)\right\} = Arr[o,d](t_o) - t_o$. 
For a given query $(o,d,t_o)$, $SP[o,d](t_o) = \{ p\in P_{o,d} : Arr[p](t_o) = Arr[o,d](t_o) \}$ is the set of earliest-arrival-time (equivalently, minimum-travel-time) paths. 
$ASP[o,d](t_o)$ is the set of $od$-paths whose travel-time values are $(1+\eps)$-approximations of the minimum-travel-time among all $od$-paths.

When we say that we \emph{``grow a $\alg{TDD}$ ball from $(o,t_o)$''}, we refer to the execution of $\alg{TDD}$ from $o\in V$ at departure time $t_o\in [0,T)$ for solving $TDSP(o,\star,t_o)$ (resp.~$TDSP(o,d,t_o)$, for a specific destination $d$). Such a call, which we denote as $\alg{TDD}(o,\star,t_o)$ (resp.~$\alg{TDD}(o,d,t_o)$), takes time $\Order{m+n\log(n) [1+\log\log(1+K_{\max})]}$ $= \Order{n\log(n)\log\log(K_{\max})]}$, using predecessor search for evaluating continuous pwl functions (cf.~\cite{2014-Kontogiannis-Zaroliagis}).
The \term{Dijkstra-Rank} $\Gamma[o,d](t_o)$ is the number of settled vertices up to $d$, when executing $\alg{TDD}(o,d,t_o)$.

$\forall a=uv\in A$ and $[t_s,t_f)\subseteq [0,T)$, we define upper- and lower-bounding travel-time metrics:
	the \term{minimally-congested} travel-time $\underline{D}[uv](t_s,t_f) := \min_{t_u\in [t_s,t_f)} \{ D[uv](t_u) \}$  and
	the \term{maximally-congested} travel-time $\overline{D}[uv](t_s,t_f) := \max_{t_u\in [t_s,t_f)} \{ D[uv](t_u) \}$.
If $[t_s,t_f) = [0,T)$, we refer to the static \term{free-flow} and \term{full-congestion} metrics $\underline{D} , \overline{D} : A \rightarrow [1,M]$, respectively. Each arc $a\in A$ is also equipped with scalars $\underline{D}[a]$ and $\overline{D}[a]$ in these static metrics. For any arc-cost metric $D$, $diam(G,D)$ is the diameter (largest possible vertex-to-vertex distance) of the graph. For example, $diam(G,\underline{D})$ is the free-flow diameter of $G$.

In our TD-instance, we can guarantee that $T\geq diam(G,\underline{D})$.
	If this is not the case, we can take the minimum number $c$ of consecutive copies of each $d[a]$ as a single arc-travel-time function $d'[a]:[0,cT)\mapsto\positivereals$ and $D'[a](t+k T') = d'[a](t),~ \forall t\in [0, T')$ such that $T' = cT\geq diam(G,\underline{D'})$. 
In addition, we can also guarantee that $T = n^{\alpha}$ for a small \emph{constant} $\alpha\in(0,1)$ of our control.
	If $T \neq n^{\alpha}$, we scale the travel-time metric by setting $D'' = \frac{n^{\alpha}}{T }\cdot D$ (e.g., we change the unit by which we measure time, from milliseconds to seconds, or even minutes) and use the period $T'' = n^{\alpha}$, without affecting the structure of the instance at all.
From now on we consider w.l.o.g. TD-instances with $T = n^{\alpha} \geq diam(G,\underline{D})$.

For any $v\in V$, departure-time $t_v\in\nonnegativereals$, integer $F\in[n]$ and $R>0$, $B[v;F](t_v)$ ($B[v;R](t_v)$) is a ball of size $F$ (of radius $R$) grown by $\alg{TDD}$ from $(v,t_v)$, in the time-dependent metric. Analogously, $\underline{B}[v;F]$ ($\underline{B}[v;R]$) and $\overline{B}[v;F]$ ($\overline{B}[v;R]$) are, respectively, the size-$F$ (radius-$R$) balls from $v$ in the free-flow and fully-congested travel-time metrics.

A pair of continuous, pwl, periodic functions $\overline{\De}[o,d]$ and $\underline{\De}[o,d]$), with a (hopefully) small number of breakpoints, are \term{$(1+\eps)$-upper-approximation} and \term{$(1+\eps)$-lower-approximation} of $D[o,d]$, if 
	\(
		\forall t_o\geq 0,~
		\frac{D[o,d](t_o)}{1+\eps}
		\leq \underline{\De}[o,d](t_o)
		\leq D[o,d](t_o)
		\leq \overline{\De}[o,d](t_o)
		\leq (1+\eps)\cdot D[o,d](t_o)\,.
	\)


\smallskip
\noindent
\emph{\textbf{Assumptions on the time-dependent arc-cost metric.}} 
The directedness and time-dependence in the underlying network imply an asymmetric arc-cost metric that also evolves with time. To achieve a smooth transition from static and undirected graphs towards time-dependent and directed graphs, we need a quantification of the degrees of asymmetry and evolution of our metric over time. These are captured via a set of parameters depicting 
the steepness of the minimum-travel-time functions, 
the ratio of minimum-travel-times in opposite directions, 
and the relation between graph expansion and travel-times.
We make some assumptions on the values of these parameters which seem quite natural for our main application scenario (route planning in road networks), and were verified by an experimental analysis (cf. Appendix~\ref{section:assumptions}). Here we only present a qualitative interpretation of them. It is noted that Assumptions~\ref{assumption:Bounded-Travel-Time-Slopes} and \ref{assumption:Bounded-Opposite-Trips} were exploited also in the analyses in \cite{2014-Kontogiannis-Zaroliagis}.
%
\begin{assumption}[Bounded Travel-Time Slopes]
\label{assumption:Bounded-Travel-Time-Slopes}
	All the minimum-travel-time slopes are bounded in a given interval $[-\La_{\min},\La_{\max}]$, for given constants $\La_{\min}\in [0,1)$ and $\La_{\max}\geq 0$.
\end{assumption}
%
\begin{assumption}[Bounded Opposite Trips]
\label{assumption:Bounded-Opposite-Trips}
	The ratio of minimum-travel-times in opposite directions between two vertices, for any specific departure-time but not necessarily via the same path, is upper bounded by a given constant $\zeta\geq 1$.
	\end{assumption}
%
\begin{assumption}[Growth of Free-Flow Dijkstra Balls]
	\label{assumption:growth-of-free-flow-ball-sizes}
	$\forall F\in [n]$, the free-flow ball $\underline{B}[v;F]$ blows-up by at most a polylogarithmic factor, when expanding its (free-flow) radius up to the value of the full-congestion radius within $\underline{B}[v;F]$.
\end{assumption}
%

Finally, we need to quantify the correlation between the arc-cost metric and the Dijkstra-Rank metric induced by it. For this reason, inspired by the notion of the doubling dimension (e.g., \cite{2011-Bartal-Gottlieb-Kopelowitz-Lewenstein-Roditty} and references therein), we consider some \emph{scalar} $\lambda \geq 1$ and functions $f,g:\naturals\mapsto [1,\infty)$, such that the following hold:
$\forall (o,d,t_o)\in V\times V\times [0,T)$,
	(i) $\Gamma[o,d](t_o) \leq f(n) \cdot (D[o,d](t_o))^{\la}$, and
	(ii) $D[o,d](t_o) \leq g(n) \cdot (\Gamma[o,d](t_o))^{1/\la}$.
This property trivially holds, e.g., for $\la=1$, $f(n) = n$, and $g(n) = \max_{a\in A}\left\{ \overline{D}[a]\right\}$. Of course, our interest is for the smallest possible values of $\la$ and at the same time the slowest-growing functions $f(n),g(n)$.
Our last  assumption quantifies the boundedness of this correlation by restricting $\la$, $f(n)$ and $g(n)$.
\begin{assumption}
\label{assumption:Travel-Time-vs-Dijkstra-Rank}
		There exist $\la\in\order{\frac{\log(n)}{\log\log(n)}}$ and $f(n),g(n) \in \polylog(n)$ s.t. the following hold:
		(i) $\Gamma[o,d](t_o) \leq f(n) \cdot (D[o,d](t_o))^{\la}$, and
		(ii) $D[o,d](t_o) \leq g(n) \cdot (\Gamma[o,d](t_o))^{1/\la}$.
		Analogous inequalities hold for the free-flow and the full-congestion metrics $\underline{D}$ and $\overline{D}$.
\end{assumption}

Note that static oracles based on the doubling dimension (e.g., \cite{2011-Bartal-Gottlieb-Kopelowitz-Lewenstein-Roditty}) demand a \emph{constant} value for the exponent $\la$ of the expansion. We relax this by allowing $\la$ being expressed as a (sufficiently slowly) growing function of $n$. We also introduce some additional slackness, by allowing some divergence from the corresponding powers by polylogarithmic factors.

In the rest of the paper we consider sparse ($m \in \Order{n}$) TD-instances, compliant with Assumptions \ref{assumption:Bounded-Travel-Time-Slopes}, \ref{assumption:Bounded-Opposite-Trips}, \ref{assumption:growth-of-free-flow-ball-sizes}, and \ref{assumption:Travel-Time-vs-Dijkstra-Rank}. For convenience, the notation used throughout the paper is summarized in Appendix~\ref{section:notation}. A review of the oracles in \cite{2014-Kontogiannis-Zaroliagis} is presented in Appendix~\ref{section:2014-Kontogiannis-Zaroliagis-Overview}.

\section{The $\alg{TRAP}$ approximation method}
\label{section:TRAP-method}

\begin{wrapfigure}{r}{5.5cm}
\centerline{\includegraphics[width=5.5cm]{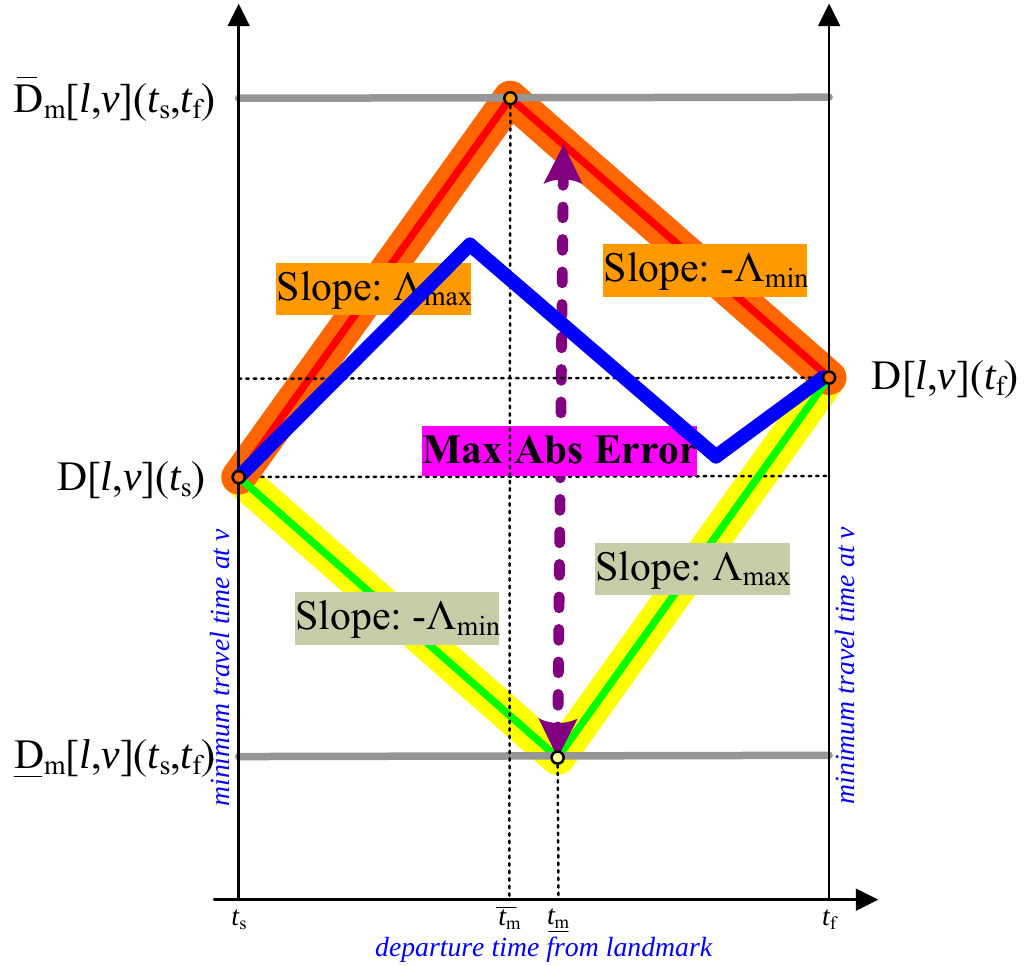}}
\caption{\label{fig:trapezoidal-approximation}
	\footnotesize{The upper-approximation $\overline{\delta}_k[\ell,v]$ (thick orange, upper pwl line), and lower-approximation $\underline{\delta}_k[\ell,v]$ (thick green, lower pwl line), of the unknown function $D[\ell,v]$ (blue pwl line) within the interval $I_k = [t_s=(k-1)\tau , t_f=k\tau)$.
}}
\vspace*{-20pt}
\end{wrapfigure}

We now introduce the \term{trapezoidal} ($\alg{TRAP}$) algorithm, a novel algorithm for computing one-to-many $(1+\eps)$-upper-approximations $\overline{\De}[\ell,v]:[0, T) \mapsto\positivereals$ of min-cost functions $D[\ell,v]$, from a vertex $\ell$ towards all sufficiently distant destinations (typically, $\ell$ will be a landmark).
	$\alg{TRAP}$ is remarkably simple and works as follows. First, $[0,T)$ is split into a number of $\ceil{\frac{T}{\tau}}$ consecutive length-$\tau$ subintervals, where $\tau$ is a tuning parameter to be fixed later. Then, for each such interval $[t_s, t_f = t_s + \tau) \subseteq [0,T)$, a $(1+\eps)$-upper-approximation of the projection $D[\ell,v]:[t_s,t_f) \mapsto\positivereals$ is computed. Finally, the concatenation of all these $(1+\eps)$-upper-approximations per subinterval constitutes the requested $(1+\eps)$-upper-approximation $\overline{\Delta}[\ell,v]$ of $D[o,d]:[0,T)\mapsto\positivereals$.
Note that, contrary to the $\alg{BIS}$ approximation algorithm \cite{2014-Kontogiannis-Zaroliagis}, no assumption is made on the shapes of the min-cost functions to approximate within each subinterval; in particular, no assumption is made on them being concave.
	$\alg{TRAP}$ only exploits the fact that $\tau$ is small, along with Assumption~\ref{assumption:Bounded-Travel-Time-Slopes} on the boundedness of travel-time slopes.	

We now describe the upper- and lower-approximations of $D[o,d]$ that we construct in a  subinterval $I_k = [t_s=(k-1)\tau, t_f = k\tau)\subset [0,T)$, $k\in\left[\ceil{\frac{T}{\tau}}\right]$, from a vertex $\ell\in V$ towards some destination $v\in V$. The quality of the upper-approximation depends on the value of $\tau$ and the delay values at the endpoints of $I_k$, as we shall explain shortly. $\alg{TRAP}$ computes the following two functions of $D[\ell,v]$ (cf.~Fig.~\ref{fig:trapezoidal-approximation}): $\forall t\in I_k$,
%
	\(
	\overline{\delta}_k[\ell,v](t) =
	\min	\left\{\begin{array}{c}
						D[\ell,v](t_f) + \La_{\min} t_f - \La_{\min} t~,~
						D[\ell,v](t_s) - \La_{\max} t_s + \La_{\max} t
		\end{array}\right\}
	\)
	and 
	\(
	\underline{\delta}_k[\ell,v](t) =
	\max	\left\{\begin{array}{c}
						D[\ell,v](t_f) - \La_{\max} t_f + \La_{\max} t ~,~
						D[\ell,v](t_s) + \La_{\min} t_s - \La_{\min} t
		\end{array}\right\}
	\)
%
and considers them as the upper- and lower-approximating functions of $D[\ell,v]$ within $I_k$. The correctness of this choice is proved in the next lemma, which follows by Assumption~\ref{assumption:Bounded-Travel-Time-Slopes}.
\begin{lemma}
\label{lemma:upper+lower-approximating-functions-of-TRAP}
	$\overline{\delta}_k[\ell,v](t)$ and $\underline{\delta}_k[\ell,v](t)$ upper- and lower-approximate $D[\ell,v](t)$ within $I_k$.
\end{lemma}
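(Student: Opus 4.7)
The plan is to read the assumption on bounded travel-time slopes as a Lipschitz-like constraint on $D[\ell,v]$, and then to apply it twice: once anchoring at $t_s$ and once at $t_f$. Taking the worst of each pair of bounds then collapses exactly to the two min/max expressions that define $\overline{\delta}_k[\ell,v]$ and $\underline{\delta}_k[\ell,v]$.

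More concretely, fix any $t\in I_k = [t_s, t_f)$. By Assumption~\ref{assumption:Bounded-Travel-Time-Slopes}, on any subinterval of $[0,T)$ the function $D[\ell,v]$ is continuous pwl with every piece having slope in $[-\La_{\min}, \La_{\max}]$. Applied between $t_s$ and $t$ this yields
\[
	-\La_{\min}(t - t_s) \;\le\; D[\ell,v](t) - D[\ell,v](t_s) \;\le\; \La_{\max}(t - t_s),
\]
and applied between $t$ and $t_f$ it yields
\[
	-\La_{\min}(t_f - t) \;\le\; D[\ell,v](t_f) - D[\ell,v](t) \;\le\; \La_{\max}(t_f - t).
\]
Rearranging these four inequalities gives two upper bounds on $D[\ell,v](t)$, namely $D[\ell,v](t_s) - \La_{\max} t_s + \La_{\max} t$ and $D[\ell,v](t_f) + \La_{\min} t_f - \La_{\min} t$, as well as two lower bounds, namely $D[\ell,v](t_s) + \La_{\min} t_s - \La_{\min} t$ and $D[\ell,v](t_f) - \La_{\max} t_f + \La_{\max} t$. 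Taking the minimum of the two upper bounds and the maximum of the two lower bounds yields exactly $\overline{\delta}_k[\ell,v](t)$ and $\underline{\delta}_k[\ell,v](t)$, proving the desired sandwich.

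There is no real obstacle here: the only subtle point is making sure Assumption~\ref{assumption:Bounded-Travel-Time-Slopes} is invoked for the \emph{min-cost} (minimum-travel-time) function $D[\ell,v]$ rather than merely for the arc-cost functions $D[a]$, since the definition in the assumption is stated over ``all the minimum-travel-time slopes''. Once this is clarified, the argument is a one-line integration of a slope bound from the two endpoints of $I_k$, and the stated min/max form of the approximants is precisely what results from combining the two directions.
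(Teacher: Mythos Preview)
Your proof is correct and follows essentially the same route as the paper: both arguments apply Assumption~\ref{assumption:Bounded-Travel-Time-Slopes} to the difference quotients $\frac{D[\ell,v](t)-D[\ell,v](t_s)}{t-t_s}$ and $\frac{D[\ell,v](t_f)-D[\ell,v](t)}{t_f-t}$, extract the four linear bounds, and then take the min of the two upper bounds and the max of the two lower bounds to recover $\overline{\delta}_k[\ell,v]$ and $\underline{\delta}_k[\ell,v]$. Your remark that the assumption is stated directly for minimum-travel-time functions (not merely arc-cost functions) is exactly the point that makes the argument go through.
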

Let $(\underline{t}_m,\underline{D}_m)$ and $(\overline{t}_m,\overline{D}_m)$ be the intersections of the legs in the definition of $\underline{\delta}_k[\ell,v]$ and $\overline{\delta}_k[\ell,v]$, respectively.
The \term{maximum additive error} $MAE(I_k)$ for $\overline{\delta}_k[\ell,v]$
in $I_k$ (i.e., the length of the purple dashed line in Fig.~\ref{fig:trapezoidal-approximation}) is 
	\(
	MAE(I_k) := \max_{t\in I_k}\left\{ \overline{\delta}_k[\ell,v](t) - \underline{\delta}_k[\ell,v](t) \right\}
	=		\overline{\delta}_k[\ell,v](\underline{t}_m) - \underline{\delta}_k[\ell,v](\underline{t}_m)\,.
	\)
The following lemma proves that, for $\tau$ sufficiently small, $MAE(I_k)$ cannot be large. It also provides a \emph{sufficient condition} for the value of $\tau$ so that $\overline{\delta}_k[\ell,v]$ is indeed a $(1+\eps)$-upper-approximation of $D[\ell,v]$ within $I_k$.

\begin{lemma}
\label{lemma:hole-radius-of-trapezoidal}
$\forall(\ell,v)\in L\times V$, $\forall k\in\left[ \ceil{\frac{T}{\tau}} \right]$ and $I_k = [(k-1)\tau , k\tau)$, the following hold:
	(1) $MAE[\ell,v](I_k) \leq \La_{\max}\cdot\tau$;
	(2) $\overline{\delta}_k[\ell,v]$ is a $(1+\eps)$-upper-approximation of $D[\ell,v]$ within $I_k$, if 
		\(
			\left[~
				D[\ell,v](t_s) \geq \left(\La_{\min} +\frac{\La_{\max}}{\eps}\right)\cdot\tau
			~\right]
		\)
		$\OR$
		\(
			\left[~
		D[\ell,v](t_f) \geq \left(1 +\frac{1}{\eps}\right)\La_{\max}\cdot\tau
			~\right]
		\)
\end{lemma}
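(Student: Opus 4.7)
The plan is to prove part~(1) by analyzing the pwl structure of the gap function $\overline{\delta}_k[\ell,v]-\underline{\delta}_k[\ell,v]$ on $I_k$, and then to deduce part~(2) by combining the bound from part~(1) with the slope-consistency inequalities that Assumption~\ref{assumption:Bounded-Travel-Time-Slopes} imposes on $D[\ell,v]$ near the endpoints $t_s$ and $t_f$.

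For part~(1), I would first observe that $\overline{\delta}_k[\ell,v]$ is the lower envelope of the two affine functions $L_1(t)=D[\ell,v](t_f)+\La_{\min}(t_f-t)$ and $L_2(t)=D[\ell,v](t_s)+\La_{\max}(t-t_s)$, meeting at the breakpoint $\overline{t}_m$; symmetrically, $\underline{\delta}_k[\ell,v]$ is the upper envelope of $M_1(t)=D[\ell,v](t_f)-\La_{\max}(t_f-t)$ and $M_2(t)=D[\ell,v](t_s)-\La_{\min}(t-t_s)$, meeting at $\underline{t}_m$. Thus the gap is pwl with breakpoints only among $\{\overline{t}_m,\underline{t}_m\}$ and vanishes at both $t_s$ and $t_f$, so its maximum over $I_k$ is attained at a breakpoint. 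Setting $\delta:=D[\ell,v](t_f)-D[\ell,v](t_s)$, a direct expansion gives the two position-independent identities $L_2(t)-M_1(t)=\La_{\max}\tau-\delta$ and $L_1(t)-M_2(t)=\La_{\min}\tau+\delta$, so that the gap at $\underline{t}_m$ (where $M_1(\underline{t}_m)=M_2(\underline{t}_m)=\underline{\delta}_k[\ell,v](\underline{t}_m)$) equals $\min\{\La_{\min}\tau+\delta,\La_{\max}\tau-\delta\}$. A short sign-of-$\delta$ case split, coupled with the slope-feasibility constraint $\delta\in[-\La_{\min}\tau,\La_{\max}\tau]$, then yields the claimed bound $MAE(I_k)\leq\La_{\max}\tau$.

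For part~(2), the target $\overline{\delta}_k[\ell,v](t)\le(1+\eps)\,D[\ell,v](t)$ is equivalent to $\overline{\delta}_k[\ell,v](t)-D[\ell,v](t)\le\eps\,D[\ell,v](t)$. Since $\underline{\delta}_k[\ell,v](t)\le D[\ell,v](t)$ throughout $I_k$, part~(1) supplies $\overline{\delta}_k[\ell,v](t)-D[\ell,v](t)\le MAE(I_k)\le\La_{\max}\tau$, so it suffices to prove $D[\ell,v](t)\ge\La_{\max}\tau/\eps$ for every $t\in I_k$. Under the first hypothesis, the lower-slope bound from $t_s$ gives $D[\ell,v](t)\ge D[\ell,v](t_s)-\La_{\min}(t-t_s)\ge(\La_{\min}+\La_{\max}/\eps)\tau-\La_{\min}\tau=\La_{\max}\tau/\eps$. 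Symmetrically, under the second hypothesis, the upper-slope bound traced backwards from $t_f$ yields $D[\ell,v](t)\ge D[\ell,v](t_f)-\La_{\max}(t_f-t)\ge(1+1/\eps)\La_{\max}\tau-\La_{\max}\tau=\La_{\max}\tau/\eps$. Either hypothesis thus closes the argument.

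The main obstacle is in part~(1): one has to verify carefully that the pwl gap attains its supremum at $\underline{t}_m$ (rather than at $\overline{t}_m$ or at an internal branch change), and that the minimum $\min\{\La_{\min}\tau+\delta,\La_{\max}\tau-\delta\}$ really lies under the claimed envelope $\La_{\max}\tau$ throughout the admissible range of $\delta$. The step from part~(1) to part~(2), in contrast, is essentially a one-line computation once the correct direction of slope-based lower bound on $D[\ell,v]$ is selected according to which endpoint hypothesis is in force.
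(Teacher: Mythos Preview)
Your proposal is correct, and part~(2) is essentially identical to the paper's argument (both lower-bound the unknown value at $t$ by the two branches of $\underline{\delta}_k$, then read off the two endpoint conditions).

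For part~(1) you take a genuinely different route. The paper computes the intersection points $(\overline{t}_m,\overline{D}_m)$ and $(\underline{t}_m,\underline{D}_m)$ explicitly, uses the crude bound $MAE\le \overline{D}_m-\underline{D}_m$ (peak of the upper envelope minus trough of the lower one, which need not occur at the same abscissa), and then simplifies that difference via Assumption~\ref{assumption:Bounded-Travel-Time-Slopes} to $\La_{\max}\tau$. You instead analyze the gap function structurally: your key observation that $L_2-M_1$ and $L_1-M_2$ are \emph{constant} on $I_k$ (equal to $\La_{\max}\tau-\delta$ and $\La_{\min}\tau+\delta$, respectively) lets you read off the breakpoint values of the gap without ever solving for $\overline{t}_m,\underline{t}_m$. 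In fact those two constants coincide with the gap values at \emph{both} breakpoints (since at $\overline{t}_m$ one has $L_1=L_2$ and at $\underline{t}_m$ one has $M_1=M_2$), which resolves the ``which breakpoint'' obstacle you flag; the gap is a trapezoid, flat in the middle at height $\min\{\La_{\min}\tau+\delta,\La_{\max}\tau-\delta\}$. Your approach therefore yields the \emph{exact} $MAE$ rather than the looser $\overline{D}_m-\underline{D}_m$, and avoids the explicit coordinate algebra of the paper. Either way, the final inequality $MAE\le\La_{\max}\tau$ tacitly uses $\La_{\min}\le\La_{\max}$ (present in the paper's step ``$\DueTo{\leq}{As.\ref{assumption:Bounded-Travel-Time-Slopes}}$'' as well), so neither argument is weaker on that count.
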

For given $\tau>0$ and $\ell\in L$, the set of \emph{faraway destinations} from $\ell$ is 
	\(
		V[\ell](\tau) = \{v\in V: \tau[\ell,v] > \tau\}\,.
	\)
$\tau[\ell,v] = \frac{\underline{D}[\ell,v]}{(1+1/\eps)\La_{\max}}$ is a sufficient $\tau$-value for $\overline{\delta}_k[\ell,v]$ being $(1+\epsilon)$-upper-approximation of $D[\ell,v]$ within $I_k = [(k-1)\tau[\ell,v] , k\tau[\ell,v])$ (cf. Lemma~\ref{lemma:hole-radius-of-trapezoidal}).
The next theorem proves that $\alg{TRAP}$ provides a $(1+\eps)$-upper-approximation $\overline{\Delta}[\ell,v]$ for all faraway destinations from $\ell$, and also estimates the preprocessing requirements of the algorithm.
\begin{theorem}
\label{thm:trapezoidal-approximation-guarantee}
Fix $\ell\in L$, $F > f(n)$, and $\tau\in (0,T)$ s.t. $|V[\ell](\tau)| = n-F$. Let $\tau^* = \min_{v\in V[\ell](\tau)}\left\{\frac{\underline{D}[\ell,v]}{(1+1/\eps)\La_{\max}} \right\}$.
$\forall v\in V[\ell](\tau)$, $\overline{\Delta}[\ell,v]$ is the concatenation of all the upper-approximating functions $\overline{\de}_k[\ell,v]$ that $\alg{TRAP}$ returns per subinterval
	\(
		I_k = [~ t_{s_k}=(k-1)\tau^* ~,~ t_{f_k}=\min\{k\tau^*,T\} ~) : k\in\left[\ceil{\frac{T}{\tau^*}}\right]\,.
	\)
Then, $\forall v\in V[\ell](\tau)$,  $\overline{\De}[\ell,v]$ is a $(1+\eps)$-upper-approximation of $D[\ell,v]$ in $[0,T)$, requiring 
at most $2\ceil{\frac{T}{\tau^*}}$ breakpoints. 
The number of calls to $TDSP(\ell,\star,t)$ for their construction is 	
	\(
		\ceil{\frac{T}{\tau^*}} 
		\leq 1 + \frac{T(1+1/\eps) \La_{\max}}{\min_{v\in V[\ell](\tau)}\{\underline{D}[\ell,v]\}}
		\in \Order{n^a}\,.
	\)
\end{theorem}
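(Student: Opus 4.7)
The plan is to verify that the uniform grid step $\tau^*$ simultaneously works for every faraway destination $v \in V[\ell](\tau)$; once that is established, the $(1+\eps)$-upper-approximation claim on each subinterval $I_k$ follows directly from Lemma~\ref{lemma:hole-radius-of-trapezoidal}(2) and extends to all of $[0,T)$ by concatenation and periodicity. The remaining two assertions (breakpoint count and number of $\alg{TDD}$-calls) are then straightforward counting arguments.

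The key observation is the following. By the definition of $V[\ell](\tau)$ and of $\tau[\ell,v]$, every $v \in V[\ell](\tau)$ satisfies $\underline{D}[\ell,v] = (1+1/\eps)\La_{\max}\,\tau[\ell,v]$, and $\tau^* = \min_{v \in V[\ell](\tau)} \tau[\ell,v]$ is exactly the largest step length for which the inequality $(1+1/\eps)\La_{\max}\,\tau^* \leq \underline{D}[\ell,v]$ holds \emph{uniformly} over $V[\ell](\tau)$. Since $D[\ell,v](t) \geq \underline{D}[\ell,v]$ for every $t \in [0,T)$, the right-endpoint disjunct of the sufficient condition of Lemma~\ref{lemma:hole-radius-of-trapezoidal}(2), namely $D[\ell,v](t_{f_k}) \geq (1+1/\eps)\La_{\max}\,\tau^*$, is satisfied at every subinterval endpoint. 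Hence $\overline{\de}_k[\ell,v]$ is a $(1+\eps)$-upper-approximation of $D[\ell,v]$ on every $I_k$, for every $v \in V[\ell](\tau)$. Concatenating across the $\ceil{T/\tau^*}$ subintervals that partition $[0,T)$, and appealing to the periodicity of $D[\ell,v]$ to close the wrap-around at $T \equiv 0$, yields the global upper-approximation $\overline{\De}[\ell,v]$.

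For the complexity bounds, each $\overline{\de}_k[\ell,v]$ is the pointwise minimum of two affine functions and so contributes at most two linear legs per subinterval, giving a total of at most $2\ceil{T/\tau^*}$ breakpoints. The construction requires only the snapshots $D[\ell,\star](k\tau^*)$ at the grid points, which are shared by consecutive intervals, so a single $\alg{TDD}(\ell,\star,k\tau^*)$ suffices per grid point, producing $\ceil{T/\tau^*}$ calls in total. Substituting the definition of $\tau^*$ and using $\ceil{x} \leq 1 + x$ gives the stated inequality; the $\Order{n^a}$ bound then follows since $T = n^a$, $\underline{D}[\ell,v] \geq 1$ for every reachable $v$, and $\eps$ and $\La_{\max}$ are constants.

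The only non-routine step is the first one: recognizing that taking $\tau^*$ as the \emph{minimum} of the destination-specific thresholds $\tau[\ell,v]$, rather than a $v$-specific step, is precisely what enables a single common grid to serve all faraway destinations at once, and that it is the right-endpoint (rather than the left-endpoint) disjunct of Lemma~\ref{lemma:hole-radius-of-trapezoidal}(2) that is being invoked uniformly. The rest of the argument is direct substitution.
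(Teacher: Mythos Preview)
Your argument follows essentially the same route as the paper: use the minimality of $\tau^*$ to guarantee $\underline{D}[\ell,v] \geq (1+1/\eps)\La_{\max}\tau^*$ uniformly over $V[\ell](\tau)$, invoke Lemma~\ref{lemma:hole-radius-of-trapezoidal}(2) on each subinterval, then count breakpoints and $\alg{TDD}$-calls directly.

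There is one small but genuine gap. In your final step you assert ``$\underline{D}[\ell,v] \geq 1$ for every reachable $v$'' to obtain the $\Order{n^a}$ bound. This is false as stated: nearby vertices can have arbitrarily small free-flow distance. What you actually need is $\min_{v\in V[\ell](\tau)}\underline{D}[\ell,v] \geq 1$, and this is precisely where the hypothesis $F > f(n)$ enters (which you never use). The paper argues via Assumption~\ref{assumption:Travel-Time-vs-Dijkstra-Rank}: if $\underline{D}[\ell,v] \leq 1$ then $\underline{\Gamma}[\ell,v] \leq f(n)\cdot\underline{D}[\ell,v]^{\lambda} \leq f(n) < F$, so $v \notin V[\ell](\tau)$. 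Without this step the denominator $\min_{v\in V[\ell](\tau)}\underline{D}[\ell,v]$ is not bounded away from zero and the $\Order{n^a}$ conclusion does not follow. Once you insert this justification, your proof matches the paper's.
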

\begin{proof}[Proof of Theorem~\ref{thm:trapezoidal-approximation-guarantee}]

$\tau^*$ is the appropriate length for the subintervals which assures that $\alg{TRAP}$ returns $(1+\eps)$-upper-approximations for all faraway destinations from $\ell$. By definition it holds that $\tau^*\geq \tau$.
Since $F > f(n)$, it holds that $\alg{TRAP}$ does not consider destinations at free-flow distance less than $1$. To see this, fix any $v\in V$ s.t. $\underline{D}[\ell,v]\leq 1$. By Assumption~\ref{assumption:Travel-Time-vs-Dijkstra-Rank},
\(
	\underline{\Gamma}[\ell,v] 
	\leq f(n)\cdot \underline{D}[\ell,v]^{\lambda} 
	\leq f(n) < F\,.
\)
Thus, we can be sure that $v\notin V[\ell](\tau)$. 
Since $T = n^a$, we conclude that $\frac{T}{\tau^*} = \frac{T(1+1/\eps)\La_{\max}}{\min_{v\in V[\ell](\tau)} \underline{D}[\ell,v]} \in \Order{n^a}$.
We proceed now with the analysis of $\alg{TRAP}$. $[0,T)$ is split into $\ceil{\frac{T}{\tau^*}}$ consecutive length-$\tau^*$ subintervals.
Lemma~\ref{lemma:upper+lower-approximating-functions-of-TRAP} assures that for each $I_k = [k\tau^*,(k+1)\tau^*)$ an upper-approximating function $\overline{\delta}_k[\ell,v]$ of $D[\ell,v]$ is determined, for each $v\in V[\ell](\tau)$.
The concatenation of all these functions constitutes the upper-approximating function $\overline{\Delta}[\ell,v]$ for $D[\ell,v]$ within $[0,T)$.
Since $\tau[\ell,v] \geq \tau^* \Rightarrow \underline{D}[\ell,v] \geq \left(1+\frac{1}{\eps}\right) \La_{\max} \tau^*$, we deduce (cf. Lemma~\ref{lemma:hole-radius-of-trapezoidal}) that, for all $v\in V[\ell](\tau)$, the produced upper-approximations within the consecutive length-$\tau^*$ intervals are  $(1+\eps)$-approximations of $D[\ell,v]$.
$\alg{TRAP}$ preprocesses $\ell\in L$ (concurrently for all $v\in V[\ell](\tau)$) by making $\ceil{\frac{T}{\tau^*}}\in \Order{n^a}$ calls to $TDSP(\ell,\star,t)$, to sample the endpoints of all the $\ceil{\frac{T}{\tau^*}}$ length-$\tau^*$ subintervals. For storing $\overline{\Delta}[\ell,v]$, it needs at most $2\ceil{\frac{T}{\tau^*}}$ breakpoints (there is at most one intermediate breakpoint $(\overline{t}_m,\overline{D}_m)$ per subinterval).
\end{proof}

\section{Oracles with fully-informed landmarks}
\label{section:FLAT-oracle}


In this section we describe two novel oracles, with landmarks possessing summaries for all reachable destinations, excluding possibly a small neighborhood around them. 
We start with a random landmark set $L \subset_{\uar(\rho)} V$, i.e., we decide independently and uniformly at random whether each vertex is a landmark, with probability $\rho = n^{-\omega}$ for a constant $\omega \in (0,1)$.
We consider as \term{faraway vertices} from $\ell\in L$, all the vertices at free-flow distance at most $\underline{R} = T^{\theta}$ from it, for a constant $\theta\in (0,1)$ to be determined later. $F = \max_{\ell\in L}\left\{ |\underline{B}[\ell;\underline{R}]| \right\}$ is the maximum number of faraway vertices from a landmark.
The next lemma shows that the main parameters we should consider w.r.t. a TD-instance are $\la$ (cf.~Assumption~\ref{assumption:Travel-Time-vs-Dijkstra-Rank}) and $a\in(0,1)$ s.t. $T = n^a$. All the other parameters essentially adjust their values to them.
\begin{lemma}
\label{lemma:period-vs-network-size}
	For $\nu\in(0,1)$ s.t. $T = diam(G,\underline{D})^{1/\nu}$, $\theta \in (0,1)$ s.t. $\frac{\nu}{\theta}\in \Order{1}$ and $\la,f,g$ defined as in Assumption~\ref{assumption:Travel-Time-vs-Dijkstra-Rank}, the following hold:
	(i) $\frac{1}{\la\nu} = \alpha \pm \order{1}$, and
	(ii) $F \in n^{\left[1\pm\order{1}\right]\theta/\nu}$.
\end{lemma}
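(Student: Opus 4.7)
The plan is to derive both statements from Assumption~\ref{assumption:Travel-Time-vs-Dijkstra-Rank} applied in the free-flow metric, combined with the identities $T = n^\alpha$ and $diam(G,\underline{D}) = n^{\alpha\nu}$.

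For part (i), I would sandwich $n^{\alpha\nu}$ between two powers of $n^{1/\la}$ by choosing two distinguished vertex pairs. First, for the upper bound on $diam(G,\underline{D})$, pick any pair $(o,d)$ realizing the free-flow diameter. Since $\underline{\Gamma}[o,d] \leq n$ trivially, inequality (ii) of the assumption gives $diam(G,\underline{D}) \leq g(n) \cdot n^{1/\la}$. For the lower bound, run free-flow Dijkstra from any $o$ until all vertices are settled; the last vertex $d$ has $\underline{\Gamma}[o,d] = n$ and $\underline{D}[o,d] \leq diam(G,\underline{D})$, so inequality (i) yields $diam(G,\underline{D})^{\la} \geq n/f(n)$. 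Combining with $diam(G,\underline{D}) = n^{\alpha\nu}$ and taking logarithms produces
\[
\frac{1}{\la} - \frac{\log f(n)}{\la\log n} \;\leq\; \alpha\nu \;\leq\; \frac{1}{\la} + \frac{\log g(n)}{\log n}.
\]

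Multiplying through by $\la$ and invoking $\log f(n),\log g(n) \in \Order{\log\log n}$ gives $\la\alpha\nu = 1 \pm \Order{\la\log\log n/\log n}$. This is precisely where the upper bound $\la \in \order{\log n/\log\log n}$ from Assumption~\ref{assumption:Travel-Time-vs-Dijkstra-Rank} is used: it forces the error to be $\order{1}$. Rewriting $\la\alpha\nu = 1 + \order{1}$ as $1/(\la\nu) = \alpha/(1+\order{1}) = \alpha \pm \order{1}$ establishes~(i).

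For part (ii), I would again apply both inequalities of Assumption~\ref{assumption:Travel-Time-vs-Dijkstra-Rank} in the free-flow metric, this time to the ball $\underline{B}[\ell;\underline{R}]$. For the upper bound, every $v \in \underline{B}[\ell;\underline{R}]$ satisfies $\underline{D}[\ell,v] \leq \underline{R}$, hence $\underline{\Gamma}[\ell,v] \leq f(n)\underline{R}^{\la}$, so $F \leq f(n)\,n^{\alpha\theta\la}$. For the lower bound, any $v$ with $\underline{\Gamma}[\ell,v] \leq (\underline{R}/g(n))^{\la}$ satisfies $\underline{D}[\ell,v] \leq g(n)(\underline{R}/g(n)) = \underline{R}$, so each landmark ball contains at least $\min\{n,(\underline{R}/g(n))^{\la}\}$ vertices. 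Plugging in $\underline{R} = T^{\theta} = n^{\alpha\theta}$ and replacing $\alpha\la$ by $(1\pm\order{1})/\nu$ via part~(i) gives the two-sided bound
\[
n^{(1-\order{1})\theta/\nu} \;\leq\; F \;\leq\; n^{(1+\order{1})\theta/\nu}.
\]

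The main obstacle will be bookkeeping the error terms so that the polylogarithmic prefactors $f(n)$ and $g(n)^{\la}$ get absorbed as multiplicative $1\pm\order{1}$ corrections inside the exponent. This works because $\la\log g(n) \in \order{\log n}$ (again using $\la \in \order{\log n/\log\log n}$), and because the hypothesis $\nu/\theta \in \Order{1}$ ensures that $\theta/\nu$ stays bounded below by a positive constant, so additive $\order{\log n}$ errors translate into multiplicative $(1\pm\order{1})$ factors on the exponent $\theta/\nu$ rather than merely additive $\order{1}$ corrections.
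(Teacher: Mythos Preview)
Your proposal is correct and follows essentially the same route as the paper: both parts are obtained by sandwiching the free-flow diameter (for (i)) and the ball size $F$ (for (ii)) using the two inequalities of Assumption~\ref{assumption:Travel-Time-vs-Dijkstra-Rank}, then absorbing the $f(n),g(n)\in\polylog(n)$ prefactors into $\pm\order{1}$ corrections in the exponent via $\la\in\order{\log n/\log\log n}$ and $\nu/\theta\in\Order{1}$. Your presentation is slightly more explicit about which vertex pairs witness the two sides of each sandwich, but the argument is the same.
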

%

\smallskip
\noindent
\textbf{\em The $\alg{TRAPONLY}$ oracle.}
A first attempt towards avoiding the dependency of the preprocessing requirements on $K^*$, is to develop an oracle whose preprocessing is based solely on $\alg{TRAP}$. The preprocessing of this oracle (we call it $\alg{TRAPONLY}$) first splits the entire period $[0,T)$ into consecutive subintervals of length $\tau = \frac{\underline{R}}{(1+1/\eps)\La_{\max}} > 0$ each. It then calls $\alg{TRAP}$ for each landmark $\ell\in L$, which guarantees $(1+\eps)$-upper-approximations for all the faraway destinations $v\in V[\ell](\tau)$ (cf. Theorem~\ref{thm:trapezoidal-approximation-guarantee}). As for the faraway destinations from $\ell$, we let their distances from $\ell$ be computed by the query algorithm that we are using (by growing a $\alg{TDD}$ ball from $\ell$). In particular, the query algorithm of $\alg{TRAPONLY}$ is an appropriate variant of $\alg{RQA}$,
we call it $\alg{RQA}^+$, which additionally grows a small $\alg{TDD}$ ball of size $F\polylog(F)$ (cf. Assumption~\ref{assumption:growth-of-free-flow-ball-sizes}) in order to compute the actual travel-times towards their faraway destinations. 
The following theorem analyzes the performance of $\alg{TRAPONLY}$.
\begin{theorem}
\label{thm:TRAPONLY-Complexities}
	The expected time of $\alg{RQA}^+$ and the preprocessing requirements of $\alg{TRAPONLY}$ are:
\(
	\Exp{Q_{\alg{RQA}^+}}								
	\in \Order{ n^{\omega r + \max\left\{ \omega , \frac{\theta}{\nu} \right\} + \order{1} } }
\) and
$S_{\alg{TRAPONLY}},~ P_{\alg{TRAPONLY}}
	\in \Order{ n^{2 + \alpha\cdot(1-\theta) - \omega + \order{1}} }$.
\end{theorem}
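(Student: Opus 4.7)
The statement has two independent parts: the preprocessing bounds and the expected query time bound. I would treat them in sequence and reduce as much as possible to the analyses of $\alg{TRAP}$ (Theorem~\ref{thm:trapezoidal-approximation-guarantee}) and of $\alg{RQA}$ from \cite{2014-Kontogiannis-Zaroliagis}.

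\textbf{Preprocessing ($S_{\alg{TRAPONLY}}, P_{\alg{TRAPONLY}}$).} Here the plan is a direct accounting. First I would note that since $\tau=\underline{R}/((1+1/\eps)\La_{\max})$ with $\underline{R}=T^{\theta}$, the sufficient $\tau^{*}$ of Theorem~\ref{thm:trapezoidal-approximation-guarantee} satisfies $\tau^{*}\geq\tau$ and therefore
\(
\lceil T/\tau^{*}\rceil \;\in\; \Order{T^{1-\theta}} \;=\; \Order{n^{\alpha(1-\theta)}}.
\)
Second, by linearity of expectation, $\Exp{|L|}=n\rho=n^{1-\omega}$. For each $\ell\in L$, $\alg{TRAP}$ performs $\Order{n^{\alpha(1-\theta)}}$ calls to $TDSP(\ell,\star,t)$, each of cost $\Order{n\log n \log\log K_{\max}}=n^{1+\order{1}}$, and stores at most $2\lceil T/\tau^{*}\rceil$ breakpoints per faraway destination, i.e.\ at most $\Order{n\cdot n^{\alpha(1-\theta)}}$ breakpoints per landmark. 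Multiplying by $\Exp{|L|}$ gives both the time and space bound $n^{2+\alpha(1-\theta)-\omega+\order{1}}$.

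\textbf{Expected query time of $\alg{RQA}^{+}$.} Here the plan is to piggyback on the recursive analysis of $\alg{RQA}$ in \cite{2014-Kontogiannis-Zaroliagis}, and then add a single extra term accounting for the ``fallback'' $\alg{TDD}$ balls of size $F\,\polylog(F)$ that $\alg{RQA}^{+}$ grows at each visited landmark (to cover the close-by destinations that are not in $V[\ell](\tau)$). The original $\alg{RQA}$ analysis shows that, since landmarks are placed independently with probability $\rho=n^{-\omega}$, any $\alg{TDD}$-ball grown until the first landmark is met has expected size $\tilde{\Order{n^{\omega}}}$, and the recursion of depth $r$ produces $\Order{n^{\omega r}}$ landmark visits in expectation, contributing a total of $\Order{n^{\omega r+\omega+\order{1}}}$ time. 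The extra cost of $\alg{RQA}^{+}$ is then $\Order{n^{\omega r}\cdot F\,\polylog(F)}$; invoking Lemma~\ref{lemma:period-vs-network-size}(ii) yields $F\in n^{\theta/\nu+\order{1}}$, so this extra term is $\Order{n^{\omega r+\theta/\nu+\order{1}}}$. Taking the maximum of the two exponents produces the claimed $n^{\omega r+\max\{\omega,\theta/\nu\}+\order{1}}$.

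\textbf{Main obstacle.} The routine part is the preprocessing accounting; the delicate piece is the query-time analysis, in particular justifying that the ``fallback'' ball step does not inflate the recursion. Two technical points need care: (i) applying Assumption~\ref{assumption:growth-of-free-flow-ball-sizes} to equate the time-dependent ball of radius large enough to reach faraway destinations with $F\,\polylog(F)$ vertices (so the polylog blow-up caused by congestion does not exceed the $\order{1}$ slackness in the exponent); and (ii) arguing that the $\Order{n^{\omega r}}$ bound on the number of landmark visits, inherited from \cite{2014-Kontogiannis-Zaroliagis}, remains valid for $\alg{RQA}^{+}$ despite the additional local exploration—i.e.\ that growing the size-$F\,\polylog(F)$ ball at a landmark does not trigger further recursive descents. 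Once these two points are in place, the bound follows by adding the two expected contributions and taking the dominating exponent.
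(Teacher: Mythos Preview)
Your proposal is correct and follows essentially the same approach as the paper's own proof: the preprocessing bound is obtained by multiplying the number of landmarks $n^{1-\omega}$, the number of $\alg{TDD}$ probes per landmark $\Order{T^{1-\theta}}=\Order{n^{\alpha(1-\theta)}}$, and the cost per probe (or breakpoints per destination); the query bound is obtained by taking the $(1/\rho)^r=n^{\omega r}$ expected recursion balls of $\alg{RQA}$, each paired with an extra landmark-rooted ball of size $F\polylog(F)\in n^{\theta/\nu+\order{1}}$, and summing the two per-ball costs $n^{\omega}$ and $n^{\theta/\nu}$. The two technical caveats you flag (Assumption~\ref{assumption:growth-of-free-flow-ball-sizes} for the $F\polylog(F)$ bound, and the landmark ball not spawning further recursion) are precisely the ones the paper relies on implicitly.
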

\begin{proof}[Proof of Theorem~\ref{thm:TRAPONLY-Complexities}]

During the preprocessing, $\alg{TRAPONLY}$ makes 
	\(
		\ceil{\frac{T}{\tau}}
		\leq 1 + \frac{T (1+ 1/\eps) \La_{\max}}{\underline{R}}
		= 1 + T^{1-\theta}(1+ 1/\eps) \La_{\max}
	\)
calls of $\alg{TDD}(\ell,t)$, for departure-times $t\in \left\{ 0, \tau, 2\tau, \ldots, \ceil{\frac{T}{\tau}}-1\right\}$ and landmarks $\ell\in L$, where the equality comes from Lemma~\ref{lemma:period-vs-network-size}.
Therefore, the preprocessing-time is dominated by the aggregate time for all these $\alg{TDD}$ probes. Taking into account that each $\alg{TDD}$ probe takes time $\Order{n \log(n) \log\log(K_{\max})}$ and that $|L| = \rho n = n^{1-\omega}$ landmarks, by using Lemma~\ref{lemma:period-vs-network-size} we get the following:
	\(
	P_{\alg{TRAPONLY}}
	= n^{1-\omega} \cdot n^{\frac{1-\theta}{\nu\la}[1+\order{1}]} \cdot n\log(n)\log\log(n)
	\in n^{ 2 - \omega + \frac{1-\theta}{\nu\la}[1+\order{1}] + \frac{\log\log(n)+\log\log\log(n)}{\log(n)} }
	= n^{ 2 - \omega + a\cdot(1-\theta) + \order{1} }\,.
	\)
The calculations are analogous for the required preprocessing space: For all landmarks $\ell\in L$ and all their faraway destinations $v\in V[\ell](\tau)$, the total number of breakpoints to store is at most
\(
	S_{\alg{TRAPONLY}}
	= 2\ceil{\frac{T}{\tau}}\rho n^2
	\in 		n^{2 -\omega + \frac{1-\theta}{\nu\la}[1+\order{1}] + \order{1}}	
	=			n^{2 -\omega + a\cdot(1-\theta) + \order{1}}\,.	
\)
As for the query-time complexity of $\alg{RQA}^+$, recall that the expected number of $\alg{TDD}$ balls that it grows is $\left( 1 / \rho \right)^{r}$. Additionally, $\alg{RQA}^+$ grows $\left( 1 / \rho \right)^{r}$ $\alg{TDD}$ balls from the corresponding closest landmarks. Each ball from a new center costs $\Order{(1/\rho)\log(1/\rho)}$. Each ball from a landmark costs $\Order{F\polylog(F)} \in n^{ [ 1 \pm \order{1}] \theta /\nu }$. Thus, the expected query-time is upper-bounded as follows:
	\(
	\Exp{Q_{\alg{RQA}^+}} \in \Order{(1/\rho)^r [(1/\rho)\log(1/\rho) + F\polylog(F)]\log\log(K_{\max})}
	= \Order{n^{ \omega r + \max\left\{ \omega , [1+\order{1}]\theta / \nu \right\} } }\,.
	\)
\end{proof}

The next corollaries are parameter-tuning examples showcasing the trade-offs among the sublinearity of query-time, the subquadratic preprocessing requirements and the stretch.
\begin{corollary}
\label{cor:TRAPONLY-Efficient-Complexities_scaling-IMPLIES-stretch}
For $\delta\in(\alpha,1)$, $\beta \in (0,\alpha^2\nu]$, $\omega = \frac{\delta}{r+1}$, $\theta = \frac{\delta\nu}{r+1}$ and
$r =  \floor{\frac{\delta\cdot(1 + \alpha\nu)}{\alpha + \beta}} - 1$, 
$S_{\alg{TRAPONLY}} , P_{\alg{TRAPONLY}}\in n^{2-\beta + \order{1}}$,
$\Exp{Q_{\alg{RQA}^+}} \in n^{\delta + \order{1}}$
and the stretch is
$1+\eps\cdot\frac{(1+\eps/\psi)^{r+1}}{(1+\eps/\psi)^{r+1} - 1}$.
\end{corollary}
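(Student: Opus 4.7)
[Proof proposal for Corollary~\ref{cor:TRAPONLY-Efficient-Complexities_scaling-IMPLIES-stretch}]
The plan is to take the symbolic bounds from Theorem~\ref{thm:TRAPONLY-Complexities} and simply substitute the prescribed values of $\omega$, $\theta$, $r$, checking that each of the three claims (query-time exponent $\delta$, preprocessing exponent $2-\beta$, and the stretch) comes out as stated. The only genuine work is verifying that the choice of $r$ is the right integer to make the preprocessing exponent match $2-\beta$ while the query-time exponent stays at $\delta$.

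First I would handle the query time. Since $\omega = \delta/(r+1)$ and $\theta = \delta\nu/(r+1)$, one has $\theta/\nu = \omega$, so the maximum in Theorem~\ref{thm:TRAPONLY-Complexities} collapses to $\omega$ and the exponent becomes $\omega r + \omega = \omega(r+1) = \delta$, giving $\Exp{Q_{\alg{RQA}^+}}\in n^{\delta+\order{1}}$ directly.

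Next I would address the preprocessing. Theorem~\ref{thm:TRAPONLY-Complexities} gives exponent $2+\alpha(1-\theta)-\omega = 2+\alpha-(\omega+\alpha\theta)$, so the target $2-\beta$ is equivalent to the inequality $\omega+\alpha\theta\geq\alpha+\beta$. Plugging in the chosen $\omega$ and $\theta$, this is $\frac{\delta(1+\alpha\nu)}{r+1}\geq\alpha+\beta$, i.e. $r+1\leq\frac{\delta(1+\alpha\nu)}{\alpha+\beta}$, which is precisely ensured by the floor in the statement's definition of $r$. I would also briefly note that the hypotheses $\delta>\alpha$ and $\beta\leq\alpha^2\nu$ make $\frac{\delta(1+\alpha\nu)}{\alpha+\beta}>\frac{\alpha(1+\alpha\nu)}{\alpha+\alpha^2\nu}=1$, so $r$ is a nonnegative integer and the recursion is well-defined; the sub-polynomial slack in $\order{1}$ absorbs the gap introduced by the floor (since $1/(r+1)\geq(\alpha+\beta)/(\delta(1+\alpha\nu))$ is tight up to a constant factor, the exponent matches $2-\beta$ up to $\order{1}$). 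The identical calculation, plus the $2\lceil T/\tau\rceil\rho n^2$ storage bound from Theorem~\ref{thm:TRAPONLY-Complexities}, gives $S_{\alg{TRAPONLY}}\in n^{2-\beta+\order{1}}$.

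Finally, the stretch claim is not specific to the present parameter choice: it is inherited from the generic analysis of $\alg{RQA}^+$ which, like $\alg{RQA}$ in~\cite{2014-Kontogiannis-Zaroliagis}, produces at recursion depth $r$ a $(1+\eps\cdot(\eps/\psi)^{r+1}/((\eps/\psi)^{r+1}-1))$-approximation when each recursive call incurs a $(1+\eps/\psi)$-per-level loss --- so I would simply cite that analysis. The main (and only) obstacle is the subtle bookkeeping showing that the floor in the definition of $r$ does not cost more than an $\order{1}$ in the exponent; this is immediate from $T=n^\alpha$ and the fact that $(\alpha+\beta)/(\delta(1+\alpha\nu))$ is a positive constant independent of $n$.
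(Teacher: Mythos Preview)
Your proposal is correct and follows essentially the same route as the paper's proof: set $\omega=\theta/\nu$ so the query-time maximum collapses to $\omega(r+1)=\delta$, then rewrite the preprocessing exponent condition as $\beta\le\omega-\alpha(1-\theta)=\frac{\delta(1+\alpha\nu)}{r+1}-\alpha$, which yields exactly the floor-bound on $r$. Your additional check that $\delta>\alpha$ and $\beta\le\alpha^2\nu$ force $r\ge 0$ is a nice touch (the paper relegates this to a remark); note a small typo in your stretch formula, where you wrote $(\eps/\psi)^{r+1}$ instead of $(1+\eps/\psi)^{r+1}$, and your worry about the floor ``costing'' something in the exponent is unnecessary since taking the floor only makes $\omega$ larger and hence the preprocessing exponent smaller.
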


\begin{corollary}
\label{cor:TRAPONLY-Efficient-Complexities_stretch-IMPLIES-scaling}
For any integer $k\geq 2$, let $\eta(k) = \ceil{\frac{ \log(k/(k-1)) }{\log(1+\eps/\psi)}} -1$, $\delta\in(0,1)$ and
$\beta \in \left( 0 , \frac{\delta}{\eta(k) + 2} \right)$. Then $\alg{TRAPONLY}$ achieves  stretch $1 + k\cdot\eps$ with 
$S_{\alg{TRAPONLY}} , P_{\alg{TRAPONLY}} \in n^{2-\beta + \order{1}}$ and 
$\Exp{Q_{\alg{RQA}^+}} \in n^{\delta + \order{1}}$, 
by scaling the TD-instance so that $T = n^{\alpha}$, for 
$\alpha = \frac{\delta - [\eta(k) + 2]\cdot\beta}{\eta(k) + 2 -\delta\nu}$.
\end{corollary}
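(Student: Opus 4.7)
The plan is to invert Corollary~\ref{cor:TRAPONLY-Efficient-Complexities_scaling-IMPLIES-stretch} so that it is driven by a target stretch $1+k\eps$ rather than by a prescribed period exponent $\alpha$. Since the rescaling argument in Section~\ref{section:preliminaries+assumptions} lets us pick $T=n^{\alpha}$ for any $\alpha\in(0,1)$, it is natural to use $\alpha$ itself as the tuning knob that aligns the recursion depth $r$ produced by Corollary~\ref{cor:TRAPONLY-Efficient-Complexities_scaling-IMPLIES-stretch} with the depth actually needed to meet the stretch target.

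First, I would pin down the smallest recursion depth that forces stretch at most $1+k\eps$. The expression $1+\eps\cdot(1+\eps/\psi)^{r+1}/[(1+\eps/\psi)^{r+1}-1]$ is at most $1+k\eps$ iff $(1+\eps/\psi)^{r+1}\geq k/(k-1)$, equivalently $r+1\geq \log(k/(k-1))/\log(1+\eps/\psi)$. The least integer $r$ satisfying this is exactly $\eta(k)$, so any $r\geq\eta(k)$ already certifies stretch $\leq 1+k\eps$.

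Next, I would match this lower bound on $r$ to the depth formula $r=\floor{\delta(1+\alpha\nu)/(\alpha+\beta)}-1$ of Corollary~\ref{cor:TRAPONLY-Efficient-Complexities_scaling-IMPLIES-stretch}. Imposing the linear equation $\delta(1+\alpha\nu)/(\alpha+\beta)=\eta(k)+2$ makes the floor evaluate to $\eta(k)+2$, and hence fixes $r=\eta(k)+1\geq\eta(k)$. Solving this equation for $\alpha$ yields exactly the claimed $\alpha=(\delta-(\eta(k)+2)\beta)/((\eta(k)+2)-\delta\nu)$. The hypothesis $\beta<\delta/(\eta(k)+2)$ keeps the numerator positive, and since $\delta,\nu<1$ and $\eta(k)\geq 0$ the denominator is positive and one easily gets $\alpha\in(0,1)$.

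The main obstacle is the verification of the remaining side conditions of Corollary~\ref{cor:TRAPONLY-Efficient-Complexities_scaling-IMPLIES-stretch}, and specifically $\beta\leq\alpha^2\nu$, which couples $\beta$ and $\alpha$ quadratically once the closed form of $\alpha$ is substituted. I would handle this by noting that, as $\beta\downarrow 0$, $\alpha$ converges to the strictly positive constant $\delta/((\eta(k)+2)-\delta\nu)$, so $\alpha^2\nu$ stays bounded away from zero while $\beta$ can be made arbitrarily small; expanding $\nu(\delta-(\eta(k)+2)\beta)^2\geq\beta((\eta(k)+2)-\delta\nu)^2$ yields an explicit positive upper threshold on $\beta$, possibly stricter than $\delta/(\eta(k)+2)$ but still a strictly positive subinterval of the stated one. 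The remaining condition $\delta\in(\alpha,1)$ reduces to $\alpha<\delta$, an immediate algebraic check on the closed form. Once these are in place, Corollary~\ref{cor:TRAPONLY-Efficient-Complexities_scaling-IMPLIES-stretch} applied with this $\alpha$, together with $\omega=\delta/(r+1)$ and $\theta=\delta\nu/(r+1)$, delivers $S_{\alg{TRAPONLY}},P_{\alg{TRAPONLY}}\in n^{2-\beta+\order{1}}$, $\Exp{Q_{\alg{RQA}^+}}\in n^{\delta+\order{1}}$, and stretch at most $1+k\eps$, as claimed.
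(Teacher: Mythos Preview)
Your argument is essentially the same as the paper's: determine the minimal recursion depth $r\geq\eta(k)$ that forces stretch $\leq 1+k\eps$, match it against the budget formula $r=\floor{\delta(1+\alpha\nu)/(\alpha+\beta)}-1$ from Corollary~\ref{cor:TRAPONLY-Efficient-Complexities_scaling-IMPLIES-stretch}, and solve for $\alpha$. The paper phrases the matching step as the sufficient inequality $\delta(1+\alpha\nu)/(\alpha+\beta)-2\geq\eta(k)$ and takes equality, whereas you set the equation directly and read off $r=\eta(k)+1$; these are the same computation.

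One remark on your ``main obstacle'': the hypothesis $\beta\leq\alpha^{2}\nu$ in Corollary~\ref{cor:TRAPONLY-Efficient-Complexities_scaling-IMPLIES-stretch} is not actually used in that corollary's derivation of the space, time, and query bounds (see its proof and the subsequent remark: the role of $\beta\leq\alpha^{2}\nu$ is only to guarantee the cosmetic simplification $\delta(1+\alpha\nu)/(\alpha+\beta)\geq\delta/\alpha$). The operative conditions are just $\omega=\theta/\nu=\delta/(r+1)$ and $r\leq\delta(1+\alpha\nu)/(\alpha+\beta)-1$, both of which you enforce. Accordingly, the paper's own proof of the present corollary does not revisit $\beta\leq\alpha^{2}\nu$ either, so your extra quadratic check, while harmless, is unnecessary and does not shrink the admissible range of $\beta$.
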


\smallskip
\noindent
\textbf{\em The $\alg{FLAT}$ oracle.}
Our second attempt, the $\alg{FLAT}$ oracle, provides preprocessed information for all reachable destinations from each landmark. 
In particular, it uses the query algorithm $\alg{RQA}$ \cite{2014-Kontogiannis-Zaroliagis}
The preprocessing phase of $\alg{FLAT}$ is based on a proper combination of $\alg{BIS}$ and $\alg{TRAP}$ for constructing travel-time summaries. Each landmark $\ell\in L$ possesses summaries for all reachable destinations: $\alg{BIS}$ handles all the (at most $F = \max_{\ell\in L}\left\{ |\underline{B}[\ell;\underline{R}]| \right\}$) nearby destinations in $\underline{B}[\ell;\underline{R}]$, whereas $\alg{TRAP}$ handles all the faraway destinations of $V\setminus \underline{B}[\ell;\underline{R}]$. The space requirements for the summaries created by $\alg{TRAP}$ are exactly the same as in $\alg{TRAPONLY}$. As for the summaries computed by $\alg{BIS}$, we avoid the linear dependence of $\alg{BIS}$ on $K^*$ by assuring that $F$ is sufficiently small (but not too small) and exploiting Assumption~\ref{assumption:growth-of-free-flow-ball-sizes} which guarantees that the involved subgraph $\underline{B}'[\ell;F]$ in the preprocessing phase of $\alg{BIS}$ on behalf of $\ell$ has size $\Order{F\polylog(F)}$.
The next lemma shows exactly that $\alg{BIS}$ is only affected by the concavity-spoiling breakpoints of arc-travel-time functions in $\underline{B}'[\ell;F]$, rather than the entire graph.
\begin{lemma}
\label{lemma:irrelevance-of-faraway-vertices-in-travel-times-of-nearby-vertices}
\(
	\forall (\ell,v)\in L\times\underline{B}[\ell;F], \forall u\in V\setminus\underline{B}'[\ell;F], \forall t\in [0,T),
		D[\ell,v](t) < D[\ell,u](t)\,.
\)
\end{lemma}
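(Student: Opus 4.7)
The plan is to reduce this time-dependent statement to a purely static comparison between the full-congestion radius of the ``inner'' ball $\underline{B}[\ell;F]$ and the free-flow distance to vertices lying outside the ``outer'' ball $\underline{B}'[\ell;F]$. The workhorse is the pair of sandwich inequalities
\[
\underline{D}[\ell,x] \;\leq\; D[\ell,x](t) \;\leq\; \overline{D}[\ell,x], \qquad \forall\, x\in V,\ \forall\, t\in[0,T).
\]
The lower bound is immediate, because the free-flow metric replaces each arc-travel-time function by its pointwise minimum, so the total cost along any fixed path can only decrease; taking minima over paths preserves the inequality. The upper bound comes from fixing a full-congestion shortest $\ell x$-path $p^*$ and observing $D[\ell,x](t)\leq D[p^*](t)\leq \overline{D}[p^*]=\overline{D}[\ell,x]$, since the full-congestion metric is an arc-by-arc upper bound on each arc-travel-time function.

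Next I would nail down the interpretation of $\underline{B}'[\ell;F]$ that is implicit in Assumption~\ref{assumption:growth-of-free-flow-ball-sizes}: it is the free-flow ball around $\ell$ whose radius equals the ``full-congestion radius within $\underline{B}[\ell;F]$,'' namely $R' := \max_{v\in\underline{B}[\ell;F]} \overline{D}[\ell,v]$. With this choice, every $v\in\underline{B}[\ell;F]$ satisfies $\overline{D}[\ell,v]\leq R'$ by the very definition of $R'$, while every $u\in V\setminus\underline{B}'[\ell;F]$ must satisfy $\underline{D}[\ell,u] > R'$ strictly, because $u$ has been excluded from the free-flow ball of radius $R'$. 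Chaining these two observations with the sandwich inequalities then yields, for every $t\in[0,T)$,
\[
D[\ell,v](t) \;\leq\; \overline{D}[\ell,v] \;\leq\; R' \;<\; \underline{D}[\ell,u] \;\leq\; D[\ell,u](t),
\]
which is exactly the desired strict inequality.

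I do not anticipate any real obstacle once the right picture is fixed. The only slightly delicate step is identifying $\underline{B}'[\ell;F]$ with the free-flow ball of radius $R'$: this is precisely the object whose $\Order{F\,\polylog(F)}$ size bound is supplied by Assumption~\ref{assumption:growth-of-free-flow-ball-sizes}, and, as the present lemma will confirm, it is also the smallest neighbourhood of $\ell$ on which running $\alg{BIS}$ is guaranteed to be self-contained for every destination inside $\underline{B}[\ell;F]$. The rest is a mechanical chain of pointwise static bounds on the time-dependent travel-time function.
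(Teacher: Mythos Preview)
Your proposal is correct and follows essentially the same approach as the paper. The paper's own proof is just the one-line chain
\(
D[\ell,v](t) \leq \overline{D}[\ell,v] \leq \overline{R}[\ell] < \underline{D}[\ell,u] \leq D[\ell,u](t)
\),
where $\overline{R}[\ell]$ is precisely your $R'$; you have simply unpacked the sandwich inequalities and the definition of $\underline{B}'[\ell;F]$ more explicitly than the paper does.
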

\begin{proof}[Proof of Lemma~\ref{lemma:irrelevance-of-faraway-vertices-in-travel-times-of-nearby-vertices}]

From the definitions of the involved free-flow and full-congestion Dijkstra balls, the following holds:
\(
	D[\ell,v](t) \leq \upperD[\ell,v] \leq \overline{R}[\ell]
	< \lowerD[\ell,u] \leq D[\ell,u](t)\,.
\)
\end{proof}
The following theorem summarizes the complexities of the $\alg{FLAT}$ oracle.	
\begin{theorem}
\label{thm:FLAT-Complexities}
	The query-time $Q_{\alg{RQA}}$ and the preprocessing time $P_{\alg{FLAT}}$ and space $S_{\alg{FLAT}}$ of $\alg{FLAT}$ are:
	$\Exp{Q_{\alg{RQA}}}
		\in \Order{  n^{\omega(r+1) + \order{1}} }$ and
	$P_{\alg{FLAT}} ~,~ S_{\alg{FLAT}}
		\in \Order{  n^{1-\omega + \order{1}} \cdot [n^{2\theta/\nu} + n^{1 + \alpha\cdot(1-\theta)} ] }$.
\end{theorem}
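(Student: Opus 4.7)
The plan is to decompose the bound into three independent contributions: the preprocessing cost of $\alg{BIS}$ on the nearby destinations of each landmark, the preprocessing cost of $\alg{TRAP}$ on the faraway destinations of each landmark, and the expected query-time of $\alg{RQA}$ acting on the resulting precomputed summaries. Each piece can be handled in essentially isolation and then summed, so the heart of the argument lies in carefully bounding the per-landmark work and multiplying by $|L| = n^{1-\omega+\order{1}}$.

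For a fixed landmark $\ell\in L$, $\alg{BIS}$ is used only to summarize the nearby destinations $v\in\underline{B}[\ell;\underline{R}]$. By Lemma~\ref{lemma:irrelevance-of-faraway-vertices-in-travel-times-of-nearby-vertices}, every $(1+\eps)$-approximating $\ell v$-path lies entirely inside the subgraph induced by $\underline{B}'[\ell;F]$, so the entire $\alg{BIS}$ computation on behalf of $\ell$ can be restricted to this subgraph. Assumption~\ref{assumption:growth-of-free-flow-ball-sizes} bounds its size by $\Order{F\polylog(F)}$, while Lemma~\ref{lemma:period-vs-network-size}(ii) gives $F = n^{\theta/\nu + \order{1}}$. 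Because each $\alg{TDD}$ probe on the restricted subgraph runs in $\Order{F\polylog(F)}$ time, and $\alg{BIS}$ performs a number of such probes bounded by the \emph{local} number of concavity-spoiling breakpoints (in turn at most $\Order{F\polylog(F)}$), the entire $\alg{BIS}$ preprocessing on behalf of $\ell$ fits in $n^{2\theta/\nu + \order{1}}$ time and space. Summing over all landmarks contributes the first summand $n^{1-\omega + 2\theta/\nu + \order{1}}$ to both $P_{\alg{FLAT}}$ and $S_{\alg{FLAT}}$.

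For the faraway destinations of $\ell$, I invoke Theorem~\ref{thm:trapezoidal-approximation-guarantee} with subinterval length $\tau = \underline{R}/[(1+1/\eps)\La_{\max}]$. This yields $\ceil{T/\tau} \leq n^{\alpha(1-\theta)+\order{1}}$ calls to $TDSP(\ell,\star,\cdot)$ on the full graph, each executable in $\Order{n\polylog(n)}$ time by $\alg{TDD}$, and at most $2\ceil{T/\tau}$ breakpoints per summary. Since each landmark stores at most $n$ such summaries, the per-landmark contribution of $\alg{TRAP}$ is $n^{1+\alpha(1-\theta)+\order{1}}$ in both time and space; multiplying by $|L|$ produces the second summand and, together with the $\alg{BIS}$ contribution, yields exactly the stated preprocessing bounds.

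For the expected query time of $\alg{RQA}$, the crucial feature of $\alg{FLAT}$ (contrary to $\alg{TRAPONLY}$) is that every landmark already possesses summaries to \emph{all} reachable destinations, so $\alg{RQA}$ never needs to grow the additional $F\polylog(F)$-sized $\alg{TDD}$ ball that $\alg{RQA}^+$ uses in Theorem~\ref{thm:TRAPONLY-Complexities}. A standard analysis of random landmark sampling at rate $\rho=n^{-\omega}$ then shows that at each recursion level the next $\alg{TDD}$ ball hits a landmark after $\Order{(1/\rho)\log(1/\rho)}$ settled vertices in expectation, so after $r+1$ levels the expected work telescopes to $(1/\rho)^{r+1}\polylog(n) = n^{\omega(r+1) + \order{1}}$, matching the claimed query-time. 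The main technical subtlety will be in the $\alg{BIS}$ step: one must charge the concavity-spoiling breakpoints to the size of $\underline{B}'[\ell;F]$ rather than to the global $K^*$, because exactly this refinement — enabled by Lemma~\ref{lemma:irrelevance-of-faraway-vertices-in-travel-times-of-nearby-vertices} together with Assumption~\ref{assumption:growth-of-free-flow-ball-sizes} — decouples the preprocessing from the network-wide breakpoint count and thus delivers the $K^*$-independence that motivates the whole oracle.
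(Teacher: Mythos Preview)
Your proposal is correct and follows essentially the same route as the paper: bound the per-landmark $\alg{BIS}$ work by $F^2\polylog(F)$ via Lemma~\ref{lemma:irrelevance-of-faraway-vertices-in-travel-times-of-nearby-vertices} and Assumption~\ref{assumption:growth-of-free-flow-ball-sizes}, bound the per-landmark $\alg{TRAP}$ work by $n\cdot T/\underline{R}$, convert both via Lemma~\ref{lemma:period-vs-network-size}, multiply by $|L|=\rho n$, and invoke the $\alg{RQA}$ query-time bound from \cite{2014-Kontogiannis-Zaroliagis} for the last claim. The paper's own proof is terser (it argues only the space bound explicitly and leaves time as analogous), whereas you spell out both; the logic and the key ingredients are identical.
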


\begin{proof}[Proof of Theorem~\ref{thm:FLAT-Complexities}]
$\alg{BIS}$ requires space at most $F^2\polylog(F)$, since by Lemma~\ref{lemma:irrelevance-of-faraway-vertices-in-travel-times-of-nearby-vertices} the involved graph only contains $F\polylog(F)$ vertices and concavity-spoiling breakpoints at the arc-travel-time functions. For the faraway vertices of $V\setminus \underline{B}[\ell;F]$, by setting $\tau = \frac{\underline{R}}{(1+1/\eps)\La_{\max}}$, $\alg{TRAP}$ provides $(1+\eps)$-approximate summaries, since $\forall v\in V\setminus \underline{B}[\ell;\underline{R}]$ the sufficient condition of Theorem~\ref{thm:trapezoidal-approximation-guarantee} holds:
\(
	\underline{D}[\ell,v]
	> \underline{R}
	= \left( 1 + 1/\eps \right)\La_{\max} \tau \,.
\)
Thus, 	we conclude that
	$S_{\alg{FLAT}}
			\in \rho n \left[ F^2\polylog(F) +
			\frac{T(1+1/\eps)\La_{\max} n}{\underline{R}} \right]
	$		
	$\DueTo{=}{L.\ref{lemma:period-vs-network-size}}
			n^{1-\omega} [n^{ (2\theta / \nu)\cdot[1+\order{1}] } + n^{ 1 + \alpha\cdot(1-\theta)[1+\order{1}] } ]
	$
	$= n^{1 - \omega	+ [1+\order{1}]\cdot\max\left\{~ 	2\theta / \nu ~,~ 1 + \alpha(1-\theta) ~\right\} + \order{1} }\,,
	$ 
since $f(n), g(n) \in \polylog(n)$.
\end{proof}
The next corollaries are parameter-tuning examples to showcase the effectiveness of $\alg{FLAT}$.
\begin{corollary}
\label{cor:FLAT-Parameter-Tuning_scaling-IMPLIES-stretch}
If $\delta\in(\alpha,1)$, $\beta \in \left( 0 , \frac{\alpha\cdot(1+\alpha)}{2 / \nu + \alpha} \right]$, 
$\omega = \frac{\delta}{r+1}$,
$r = \floor{\frac{\delta}{\alpha}\cdot\frac{ 2 / \nu + \alpha}{(\beta / \alpha)\cdot\left( 2 / \nu + \alpha \right) + \left( 2 / \nu - 1\right)}} - 1$ and
$\theta = \frac{1+\alpha}{ 2 / \nu + \alpha}$, then $\alg{FLAT}$ has 
$P_{\alg{FLAT}} , S_{\alg{FLAT}} \in n^{2- \beta + \order{1}}$,
$\Exp{Q_{\alg{RQA}}} \in n^{\delta + \order{1}}$
and stretch $1+\eps\cdot\frac{(1+\eps/\psi)^{r+1}}{(1+\eps/\psi)^{r+1} - 1}$.
\end{corollary}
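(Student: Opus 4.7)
The plan is to treat this as a parameter-tuning exercise on top of Theorem~\ref{thm:FLAT-Complexities}, which already gives the query time as $n^{\omega(r+1)+\order{1}}$ and the preprocessing as $n^{1-\omega+\order{1}}\cdot[n^{2\theta/\nu}+n^{1+\alpha(1-\theta)}]$. First I would fix $\omega=\delta/(r+1)$, which directly forces $\Exp{Q_{\alg{RQA}}}\in n^{\delta+\order{1}}$ and isolates the tuning of $\theta$ and $r$ to the preprocessing bound. Then I would select $\theta$ so as to minimise the maximum of the two exponents $2\theta/\nu$ and $1+\alpha(1-\theta)$ inside the bracket; equating them yields $\theta(2/\nu+\alpha)=1+\alpha$, i.e.\ the prescribed $\theta=\frac{1+\alpha}{2/\nu+\alpha}$, and at this common value the bracket equals $n^{2(1+\alpha)/(\nu(2/\nu+\alpha))+\order{1}}$.

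Next I would equate the resulting preprocessing exponent with the target $2-\beta$, namely
\[
  1-\omega+\frac{2(1+\alpha)/\nu}{2/\nu+\alpha}=2-\beta\,,
\]
and solve for $\omega$. A short algebraic simplification (subtract $2/\nu+\alpha$ in the numerator) gives the clean form $\omega=\frac{\alpha(2/\nu-1)}{2/\nu+\alpha}+\beta$. Combining this with $\omega=\delta/(r+1)$ and solving for $r+1$ produces exactly the expression inside the floor in the statement, since
\[
  r+1=\frac{\delta(2/\nu+\alpha)}{\alpha(2/\nu-1)+\beta(2/\nu+\alpha)}=\frac{\delta}{\alpha}\cdot\frac{2/\nu+\alpha}{(\beta/\alpha)(2/\nu+\alpha)+(2/\nu-1)}\,.
\]
Taking the integer floor introduces only an additive $\order{1}$ slack in the exponents, absorbed by the $\order{1}$ terms already present in Theorem~\ref{thm:FLAT-Complexities}.

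To verify the validity range of $\beta$, I would check that the floor yields a nonnegative integer $r$, i.e.\ $r+1\geq 1$, which is equivalent to $\delta\geq\omega$. Using $\delta>\alpha$ from the hypothesis, it suffices that $\alpha\geq \frac{\alpha(2/\nu-1)}{2/\nu+\alpha}+\beta$; a direct calculation gives $\beta\leq\frac{\alpha(1+\alpha)}{2/\nu+\alpha}$, which is exactly the assumed upper bound on $\beta$. This is the only nontrivial feasibility check; the lower bound $\beta>0$ just ensures $\omega>\frac{\alpha(2/\nu-1)}{2/\nu+\alpha}$, which keeps the landmark density $\rho=n^{-\omega}$ strictly sublinear.

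Finally, the stretch bound $1+\eps\cdot\frac{(1+\eps/\psi)^{r+1}}{(1+\eps/\psi)^{r+1}-1}$ is inherited directly from the analysis of $\alg{RQA}$ in \cite{2014-Kontogiannis-Zaroliagis}, since $\alg{FLAT}$ uses the same query algorithm with recursion depth $r$ on $(1+\eps)$-summaries (produced by either $\alg{BIS}$ for nearby destinations or $\alg{TRAP}$ for faraway destinations, both of which meet the required approximation quality as argued in the proof of Theorem~\ref{thm:FLAT-Complexities}). The main obstacle throughout is purely bookkeeping: carefully tracking the $\order{1}$ slacks coming from Lemma~\ref{lemma:period-vs-network-size} (which identifies $1/(\la\nu)$ with $\alpha\pm\order{1}$), from the $\polylog$ overhead in $f,g$, and from the integer rounding of $r$, and verifying that all of them stay inside the $\order{1}$ exponents claimed.
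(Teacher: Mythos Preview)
Your proposal is correct and follows essentially the same route as the paper: fix $\omega=\delta/(r+1)$ for the query-time target, balance the two exponents in the preprocessing bound to obtain $\theta=\frac{1+\alpha}{2/\nu+\alpha}$, then solve the resulting inequality $\beta\le\omega-\alpha(1-\theta)$ for the maximal $r$, and inherit the stretch from the $\alg{RQA}$ analysis. Your explicit feasibility check that the stated upper bound on $\beta$ is exactly what guarantees $r\ge 0$ is a welcome addition that the paper leaves implicit; the only small item you omit that the paper mentions is the sanity check $\nu/\theta\in\Order{1}$ needed for Lemma~\ref{lemma:period-vs-network-size}, which holds trivially for the chosen $\theta$.
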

\begin{corollary}
\label{cor:FLAT-Efficient-Complexities_stretch-IMPLIES-scaling}
For integer $k\geq 2$, let $\eta(k) = \ceil{\frac{ \log(k/(k-1)) }{\log(1+\eps/\psi)}} -1$ and $\delta\in(0,1)$. $\alg{FLAT}$ achieves a target stretch $1 + k\cdot\eps$ with preprocessing requirements $n^{2 - \order{1}}$ and expected query-time $n^{\delta + \order{1}}$, by scaling the TD-instance so that $T = n^{\alpha}$ for $\alpha = \frac{2\delta}{[\eta(k) + 2]\cdot(2-\nu) -\delta\nu}$, as $\beta\downarrow 0$.
\end{corollary}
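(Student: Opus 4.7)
The plan is to reduce the statement to Corollary~\ref{cor:FLAT-Parameter-Tuning_scaling-IMPLIES-stretch} by inverting the roles of $\alpha$ and $r$: instead of fixing the scaling exponent $\alpha$ and then deriving the recursion depth $r$ (and thereby the stretch), I would fix the target stretch $1 + k\cdot\eps$, translate it into a requirement on $r$, and solve for the corresponding $\alpha$.

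First, I would identify the recursion depth needed to enforce stretch $1+k\cdot\eps$. From the stretch expression $1+\eps\cdot\frac{(1+\eps/\psi)^{r+1}}{(1+\eps/\psi)^{r+1}-1}$ supplied by Corollary~\ref{cor:FLAT-Parameter-Tuning_scaling-IMPLIES-stretch}, requiring this to be at most $1+k\cdot\eps$ is equivalent, by monotonicity in $r$, to the condition $(1+\eps/\psi)^{r+1}\geq k/(k-1)$, i.e., $r+1\geq \log(k/(k-1))/\log(1+\eps/\psi)$. The smallest integer meeting this bound is $r+1=\eta(k)+1$; to absorb the floor function appearing in the definition of $r$ in Corollary~\ref{cor:FLAT-Parameter-Tuning_scaling-IMPLIES-stretch}, while still letting $\beta$ approach $0$, I would conservatively use $r+1=\eta(k)+2$, which still satisfies the stretch requirement.

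Next, I would invert the relation
\[
r+1 \;=\; \frac{\delta}{\alpha}\cdot\frac{2/\nu + \alpha}{(\beta/\alpha)\cdot(2/\nu+\alpha) + (2/\nu-1)}
\]
(that is, the argument of the floor in that corollary) to express $\alpha$ as a function of $r,\delta,\nu,\beta$. Multiplying through and taking the limit $\beta\downarrow 0$ collapses the inner denominator to $2/\nu-1$, yielding the linear equation $(r+1)\alpha(2/\nu-1) = \delta(2/\nu+\alpha)$, whose solution is $\alpha = \frac{2\delta}{(r+1)(2-\nu) - \delta\nu}$. Substituting $r+1=\eta(k)+2$ produces exactly the stated expression. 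It then remains only to transfer the complexity guarantees of Corollary~\ref{cor:FLAT-Parameter-Tuning_scaling-IMPLIES-stretch}: in the limit $\beta\downarrow 0$, the bound $P_{\alg{FLAT}},S_{\alg{FLAT}}\in n^{2-\beta+\order{1}}$ becomes $n^{2-\order{1}}$, and $\Exp{Q_{\alg{RQA}}}\in n^{\delta+\order{1}}$ is retained.

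The main obstacle is verifying the admissibility of the parameter regime, in two aspects. First, one must check that the admissibility interval $\beta\in(0,\alpha(1+\alpha)/(2/\nu+\alpha)]$ from the preceding corollary remains non-empty as $\beta\to 0^+$ and that the solution $\alpha$ stays in $(0,1)$ under the chosen $\delta,\nu$; this forces a lower bound on $k$ (equivalently, an upper bound on the target stretch) as a function of $\delta,\nu$, manifested in the denominator $[\eta(k)+2](2-\nu)-\delta\nu$ remaining positive. Second, the floor function in the formula for $r$ must be controlled uniformly as $\beta\downarrow 0$, which is precisely what the extra unit in $(\eta(k)+2)$ rather than $(\eta(k)+1)$ achieves: it guarantees that the argument of the floor exceeds $\eta(k)+1$ for all sufficiently small $\beta$, so the resulting $r$ is at least $\eta(k)$, preserving the $1+k\cdot\eps$ stretch bound throughout the limit.
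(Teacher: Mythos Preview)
Your proposal is correct and follows essentially the same approach as the paper: derive the lower bound $r\geq\eta(k)$ on the recursion budget from the stretch requirement, then solve the relation from Corollary~\ref{cor:FLAT-Parameter-Tuning_scaling-IMPLIES-stretch} for $\alpha$ in the limit $\beta\downarrow 0$, using $r+1=\eta(k)+2$ to absorb the floor. The paper's own proof is terser (it omits your explicit discussion of the floor function and the admissibility checks), but the logic is identical.
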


\smallskip
\noindent
\textbf{\em Comparison of $\alg{TRAPONLY}$ and $\alg{FLAT}$.}
Both $\alg{TRAPONLY}$ and $\alg{FLAT}$ depend on the travel-time metric, but are independent of the degree of disconcavity $K^*$. 
On one hand, $\alg{TRAPONLY}$ is a simpler oracle, at least w.r.t. its preprocessing phase.
On the other hand, $\alg{FLAT}$ achieves a better approximation for the same TD-instance and anticipations for sublinear query-time $n^{\delta}$ and subquadratic preprocessing requirements $n^{2-\beta}$. This is because, as $\beta\downarrow 0$, $\alg{FLAT}$ guarantees a recursion budget $r$ of (roughly) $\frac{2\delta}{a} - 1$, whereas $\alg{TRAPONLY}$ achieves about half this value and $r$ has an exponential effect on the stretch that the query algorithms achieve.

\section{The $\alg{HORN}$ oracle}
\label{section:HORN-oracle}


	%
\begin{wrapfigure}{r}{6cm}
\centerline{\includegraphics[width=6cm]{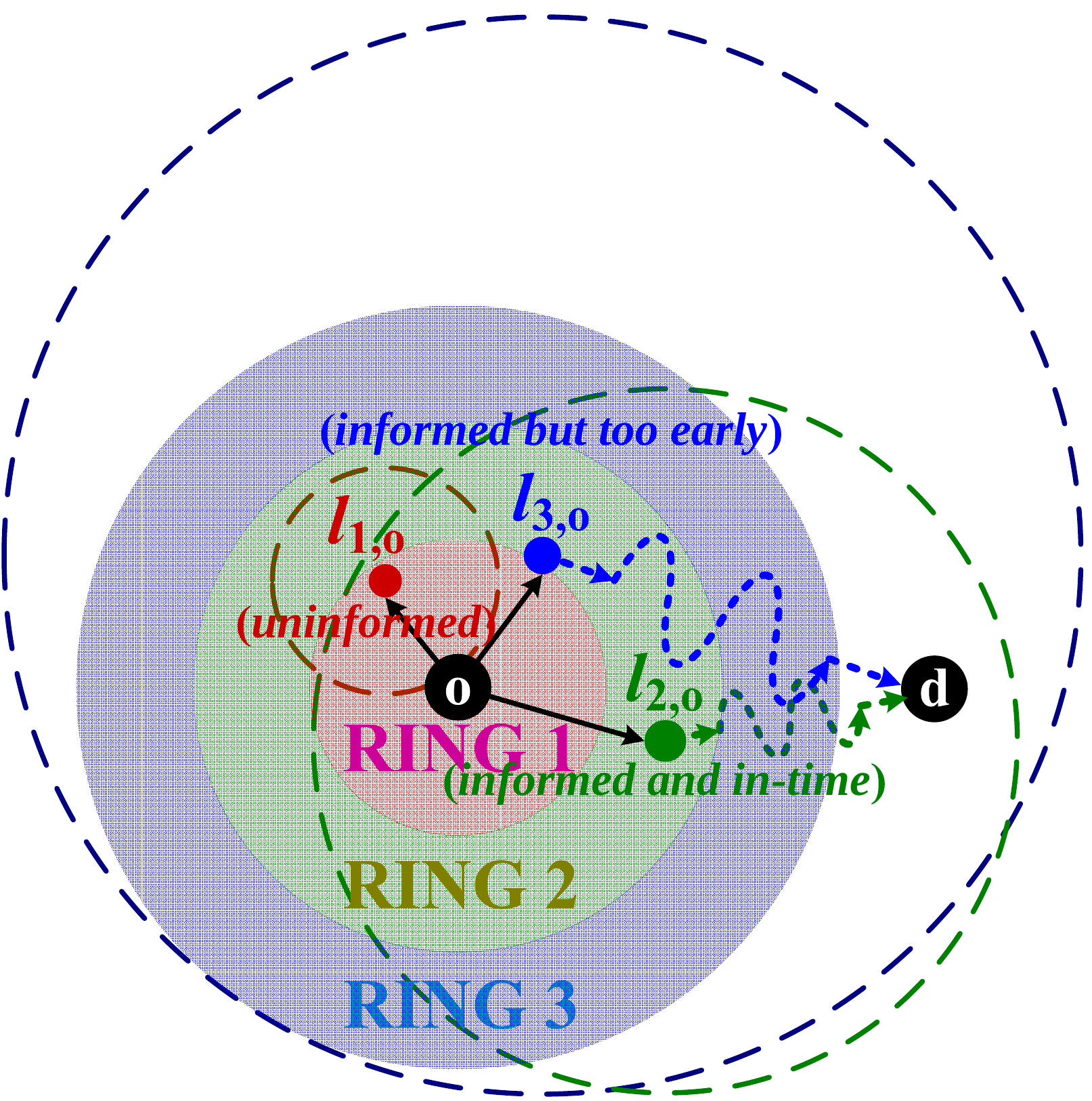}}
\caption{\label{fig:HQA-execution}
	Demonstration of execution of $\alg{HQA}$. Dashed circles indicate areas of coverage. Solid circular stripes indicate the rings of the corresponding levels in the hierarchy. Landmark $\ell_{1,o}$ is uninformed and $\ell_{3,o}$, although informed, comes too early. $\ell_{2,o}$ is both informed and within the ring of its own level, leading $\alg{HQA}$ to deduce that the appropriate level is $i=2$.
}
\end{wrapfigure}

We now describe and analyze the \term{Hierarchical ORacle for time-dependent Networks} ($\alg{HORN}$), whose query algorithm is highly competitive against $\alg{TDD}$, not only for long-range queries (i.e., having Dijkstra-Rank proportional to the network size) but also for medium- and short-range queries, while ensuring \emph{subquadratic} preprocessing space and time.
%
The main idea of $\alg{HORN}$ is to preprocess: many landmarks, each possessing summaries for a few destinations around them, so that all short-range queries can be answered using only these landmarks; fewer landmarks possessing summaries for more (but still not all) destinations around them, so that medium-range queries be answered by them; and so on, up to only a few landmarks (those required by $\alg{FLAT}$) possessing summaries for all reachable destinations.  
The \term{area of coverage} $C[\ell]\subset V$ of $\ell$ is the set of its nearby vertices, for which $\ell$ possesses summaries. $\ell$ is called \term{informed} for each $v\in C[\ell]$, and \term{uninformed} for each $v\in V\setminus C[\ell]$. The landmarks are organized in a hierarchy, according to the sizes of their areas of coverage. Each level $L_i$ in the hierarchy is accompanied with a \term{targeted Dijkstra-Rank} $N_i\in[n]$, and the goal of $\alg{HORN}$ is to assure that $L_i$ should suffice for $\alg{RQA}$ to successfully address queries $(o,d,t_o)$ with $\Gamma[o,d](t_o) \leq N_i$, in time $\order{N_i}$.
The difficulty of this approach lies in the analysis of the query algorithm. We want to execute a variant of $\alg{RQA}$ which, based on a \emph{minimal} subset of landmarks, would guarantee a $(1+\s(r))$-approximate solution for any query $(o,d,t_o)$ (as in $\alg{FLAT}$), but also time-complexity sublinear in $\G[o,d](t_o)$. 
We propose the \term{Hierarchical Query Algorithm} ($\alg{HQA}$) which grows an initial ball from $(o,t_o)$ that stops only when it settles an informed landmark $\ell$ w.r.t. $d$ which is at the ``right distance'' from $o$, given the density of landmarks belonging to the same level with $\ell$. 
$\alg{HQA}$ essentially ``guesses'' as appropriate level-$i$ in the hierarchy the level that contains $\ell$, and continues with the execution of $\alg{RQA}$ with landmarks having coverage at least equal to that of $\ell$ (cf. Fig.~\ref{fig:HQA-execution}).

\smallskip
\noindent
\emph{\textbf{Description of $\alg{HORN}$.}}
We use the following parameters for the hierarchical construction:
(i) $k\in\Order{\log\log(n)}$ determines the number of levels (minus one) comprising the hierarchy of landmarks.
(ii) $\gamma > 1$ determines the actual values of the targeted Dijkstra-Ranks, one per level of the hierarchy. In particular, as $\gamma$ gets closer to $1$, the targeted Dijkstra-Ranks accumulate closer to small- and medium-rank queries.
(iii) $\delta\in (0,1)$ is the parameter that quantifies the sublinearity of the query algorithm ($\alg{HQA}$), in each level of the hierarchy, compared to the targeted Dijkstra-Rank of this level. In particular, if $N_i$ is the targeted Dijkstra-Rank corresponding to level-$i$ in the hierarchy, then $\alg{HQA}$  should be executed in time $\Order{(N_i)^{\delta}}$, if only the landmarks in this level (or in higher levels) are allowed to be used.

\smallskip
\noindent
\emph{Preprocessing of $\alg{HORN}$.}
$\forall i\in[k]$, 
%
	we set the targeted Dijkstra-Rank for level-$i$ to $N_i = n^{(\g^i-1)/\g^i}$.
Then, we construct a randomly chosen level-$i$ landmark set $L_i \subset_{\uar(\rho_i)} V$, where 
$\rho_i = N_i^{- \delta / (r+1)} = n^{- \delta(\g^i-1) / [(r+1)\g^i]}$.
Each $\ell_i\in L_i$ acquires summaries for all (and only those) $v\in C[\ell_i]$, where $C[\ell_i]$ is the smallest \emph{free-flow} ball centered at $\ell_i$ containing $c_i = N_i\cdot n^{\xi_i} = n^{(\g^i-1)/\g^i + \xi_i}$ vertices, for a sufficiently small constant $\xi_i>0$.
The summaries to the $F_i = c_i^{\chi}$ nearby vertices around $\ell_i$ are constructed with $\alg{BIS}$; the summaries to the remaining $c_i - F_i$ faraway vertices of $\ell_i$ are constructed with $\alg{TRAP}$, where $\chi = \frac{\theta}{\nu} = \frac{1 + \alpha}{2 + \alpha\nu}\in \left[\frac{1}{2},\frac{2}{2 + \nu}\right]$ is an appropriate value determined to assure the correctness of $\alg{FLAT}$ w.r.t. the level-$i$ of the hierarchy.
%
An ultimate level $L_{k+1}\subset_{\uar(\rho_{k+1})} V$ of landmarks, with $\rho_{k+1} = n^{-\frac{\delta}{r+1}}$, assures that $\alg{HORN}$ is also competitive against queries with Dijkstra-Rank greater than $n^{(\g^k-1)/\g^k}$. We choose in this case $c_{k+1} = N_{k+1} = n$, $F_{k+1} = n^{\chi}$ and $C[\ell_{k+1}] = V$, $\forall \ell_{k+1} \in L_{k+1}$.

\smallskip
\noindent
\emph{Description of $\alg{HQA}$.}
A $\alg{TDD}$ ball from $(o,t_o)$ is grown until $d$ is settled, or the \emph{(ESC)}-criterion or the \emph{(ALH)}-criterion is fulfilled (whichever occurs first):
\newline
~$\diamond$~
\textbf{Early Stopping Criterion (ESC):} $\ell_o\in L = \union_{i\in[k+1]} L_i$ is settled, which is informed ($d\in C[\ell_o]$) and, for $\varphi\geq 1$, 
\(
	\frac{\overline{\De}[\ell_o,d](t_o+D[o,\ell_o](t_o))}{D[o,\ell_o](t_o)}
	\geq (1+\eps)\cdot\varphi\cdot(r+1) + \psi - 1\,.
\)
%
\newline
~$\diamond$~
\textbf{Appropriate Level of Hierarchy (ALH):}
	For some level $i\in[k]$ of the hierarchy, the first landmark $\ell_{i,o}\in L_i$ is settled such that:
	\emph{(i)} $d\in C[\ell_{i,o}]$ ($\ell_{i,o}$ is ``informed''); and
	\emph{(ii)} 
		\(
			\frac{N_i^{\delta / (r+1) }}{\ln(n)}
			\leq \Gamma[o,\ell_{i,o}](t_o)
			\leq \ln(n)\cdot N_i^{\delta / (r+1)} 
		\) 
		($\ell_{i,o}$ is at the ``right distance'').
In that case, $\alg{HQA}$ concludes that $i$ is the ``appropriate level'' of the hierarchy to consider. Observe that the level-$(k+1)$ landmarks are always informed. Thus, if no level-$(\leq k)$ informed landmark is discovered at the right distance, then the first level-$(k+1)$ landmark that will be found at distance larger than $\ln(n)\cdot N_k^{ \delta /(r+1) }$ will be considered to be at the right distance, and then $\alg{HQA}$ concludes that the appropriate level is $k+1$.

If $d$ is settled, an exact solution is returned.
If (ESC) causes termination of $\alg{HQA}$, the value $D[o,\ell_o](t_o) + \overline{\De}[\ell_o,d](t_o+D[o,\ell_o](t_o))$ is reported.
Otherwise, $\alg{HQA}$ stops the initial ball due to the (ALH)-criterion, considering $i\in[k+1]$ as the appropriate level, and then continues executing the variant of $\alg{RQA}$, call it $\alg{RQA}_i$, which uses as its landmark set $M_i = \union_{j=i}^{k+1} L_j$.
Observe that $\alg{RQA}_i$ may fail constructing approximate solutions via certain landmarks in $M_i$ that it settles, since they may not be informed about $d$. 
Eventually, $\alg{HQA}$ returns the best $od$-path (w.r.t. the approximate travel-times) among the ones discovered by $\alg{RQA}_i$ via \emph{all} settled and informed landmarks $\ell$.
Theorem~\ref{thm:HORN-Complexities} summarizes the performance of $\alg{HORN}$.
\begin{theorem}
\label{thm:HORN-Complexities}
Consider any TD-instance with $\la\in \order{\sqrt{\frac{\log(n)}{\log\log(n)}}}$ and
$g(n), f(n)\in \polylog(n)$ (cf. Assumption~\ref{assumption:Travel-Time-vs-Dijkstra-Rank}).
For $\varphi = \frac{\eps\cdot(r+1)}{\psi\cdot(1+\eps/\psi)^{r+1} - 1}$ and
$k\in\Order{\log\log(n)}$, let
$\xi_i \in\left( \left[(1+\la)\cdot \log\log(n) + \la\log\left(1 + \frac{\zeta}{1-\La_{\min}}\right)\right]  / \log(n) ~,~ 1-\g^{-i}\right)$, for all $i\in[k]$. 	
Then, for any query $(o,d,t_o)$ s.t. $N_{i^* - 1} < \G[o,d](t_o) \leq N_{i^*}$ for some $i^*\in[k+1]$,
any $\delta \in (a,1)$, $\beta>0$, and
$r = \floor{\frac{\delta}{a} \cdot \frac{(2 / \nu + a)(1 - \g)}{\beta \cdot (2/(a\nu) + 1) + 2 / \nu - 1}} - 1$,
$\alg{HORN}$ achieves
$\Exp{Q_{\alg{HQA}}} \in (N_{i^*})^{\delta+\order{1}}$,
$P_{\alg{HORN}} ~,~ S_{\alg{HORN}} \in n^{2-\beta+\order{1}}$ and
stretch $1+\eps\frac{(1+\eps/\psi)^{r+1}}{(1+\eps/\psi)^{r+1} -1 }$, 
with probability at least $1-\Order{\frac{1}{n}}$.
\end{theorem}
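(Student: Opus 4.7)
The plan is to decompose the argument into three blocks---preprocessing bounds, query-time bounds, and stretch---mirroring the $\alg{FLAT}$ analysis (Theorem~\ref{thm:FLAT-Complexities}) but summed over the $k+1\in\Order{\log\log n}$ levels of the hierarchy, and augmented with a concentration argument for the $(ALH)$-criterion.

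For the preprocessing bounds, I would apply the analysis of $\alg{FLAT}$ \emph{per level} $i\in[k+1]$. At level $i$, there are in expectation $\rho_i n = n^{1-\delta(\g^i-1)/[(r+1)\g^i]}$ landmarks, each producing summaries only to $c_i = n^{(\g^i-1)/\g^i + \xi_i}$ vertices rather than to all of $V$. Exactly as in the proof of Theorem~\ref{thm:FLAT-Complexities}, Lemma~\ref{lemma:irrelevance-of-faraway-vertices-in-travel-times-of-nearby-vertices} confines the cost of the $\alg{BIS}$-component to the size-$F_i\polylog(F_i)$ local subgraph with $F_i = c_i^{\chi}$, whereas the $\alg{TRAP}$-component makes $\Order{n^a}$ calls to $\alg{TDD}$ (Theorem~\ref{thm:trapezoidal-approximation-guarantee}) per level-$i$ landmark, each producing at most $c_i$ breakpoints. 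Summing the two contributions at level $i$ gives $\rho_i n\cdot[F_i^2\polylog(F_i) + n^a c_i]$; the chosen value $\chi = \theta/\nu$ is precisely the one that equalises the two summands when $c_i = n$, so the level-$(k+1)$ term dominates. Substituting $c_{k+1} = n$ and $\rho_{k+1} = n^{-\delta/(r+1)}$, and using the stated value of $r$, yields $n^{2-\beta + \order{1}}$ both in time and in space; the extra factor of $k+1 \in \polylog(n)$ is absorbed in the $\order{1}$ exponent.

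For the query time, the core issue is to show that the $(ALH)$-criterion fires at level $i^*$ with high probability. By Assumption~\ref{assumption:Travel-Time-vs-Dijkstra-Rank} the minimum-travel-time ball of any Dijkstra-rank $N$ around $o$ has radius within $[N^{1/\la}/g(n),\,N]$ up to polylog factors, and by Assumption~\ref{assumption:Bounded-Opposite-Trips} the forward and backward balls essentially coincide up to a constant $\zeta$. This is exactly where the lower bound $\xi_i > [(1+\la)\log\log n + \la\log(1+\zeta/(1-\La_{\min}))]/\log n$ is used: it guarantees that every $\alg{TDD}$-ball of size $N_{i^*}$ around $o$ lies inside $C[\ell]$ for every level-$(\geq i^*)$ landmark $\ell$ that the ball contains, so that the landmark settled at the ``right distance'' is with certainty informed about $d$. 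A standard Chernoff-type bound on the Bernoulli sampling with probability $\rho_{i^*}$, applied to the $\Theta(N_{i^*}^{\delta/(r+1)})$-sized initial ball, then shows that a level-$i^*$ landmark is indeed settled within the window $[N_{i^*}^{\delta/(r+1)}/\ln n,\;\ln n\cdot N_{i^*}^{\delta/(r+1)}]$ with probability $1-\Order{1/n}$. Once the appropriate level is identified, $\alg{RQA}_{i^*}$ grows $\Order{(1/\rho_{i^*})^{r}}$ balls of rank at most $(1/\rho_{i^*})\log(1/\rho_{i^*})$, yielding an aggregate $(1/\rho_{i^*})^{r+1}\polylog(n) = N_{i^*}^{\delta}\polylog(n) = (N_{i^*})^{\delta+\order{1}}$ expected query time.

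The hard part of the argument, and the main obstacle, is precisely the joint event that (i) the initial ball actually encounters an informed landmark in the correct level rather than prematurely triggering on an uninformed landmark of a lower level or on a too-distant informed one (this is the role of the $(ESC)$-criterion together with the choice of $\varphi$), and (ii) all landmarks selected in the subsequent recursive $\alg{RQA}_{i^*}$ expansions continue to lie within their areas of coverage for the residual subquery. Both (i) and (ii) reduce to showing that, conditioned on the rank-radius correspondence furnished by Assumption~\ref{assumption:Travel-Time-vs-Dijkstra-Rank}, a size-$c_i$ free-flow ball at a level-$i$ landmark contains every size-$N_i$ Dijkstra ball around any vertex inside it; the $\xi_i$ slack is tailored exactly to absorb the $\la$-exponent, the $\zeta$-asymmetry, and the polylog discrepancy of Assumption~\ref{assumption:growth-of-free-flow-ball-sizes}. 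Finally, the stretch bound is inherited verbatim from the $\alg{RQA}$-analysis of~\cite{2014-Kontogiannis-Zaroliagis}: the $(1+\eps)$-upper-approximations delivered by $\alg{BIS}$ and $\alg{TRAP}$ plug into the same recursive error expansion as in $\alg{FLAT}$ with recursion budget $r$, giving $1+\eps\cdot(1+\eps/\psi)^{r+1}/[(1+\eps/\psi)^{r+1}-1]$, and a union bound over the polylog-many probabilistic events preserves the overall success probability $1-\Order{1/n}$.
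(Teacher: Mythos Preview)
Your proposal follows the paper's overall architecture and correctly identifies most of the ingredients, but it has a genuine gap in the correctness argument for the $(ALH)$ guess.

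You reduce both obstacles (i) and (ii) to the \emph{positive} direction: showing that a level-$(\geq i^*)$ landmark settled at the right distance \emph{is} informed about $d$ (which is indeed what the $\xi_i$ lower bound buys, via Assumption~\ref{assumption:Travel-Time-vs-Dijkstra-Rank} and the triangle inequality). What is missing is the \emph{negative} direction: why no level-$j$ landmark with $j<i^*$, settled in its own ring $RING[o;j](t_o)$, can be informed about $d$. Without this, the $(ALH)$-criterion could fire at some $j<i^*$ and $\alg{RQA}_j$ would then be launched with landmarks whose coverage is too small to contain $d$; the stretch analysis of $\alg{RQA}$ from~\cite{2014-Kontogiannis-Zaroliagis} then breaks, because the landmarks hit along the unknown shortest path need not possess summaries for $d$. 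Your phrase ``prematurely triggering on an \emph{uninformed} landmark of a lower level'' inverts the issue: uninformed landmarks never trigger $(ALH)$; the danger is a lower-level landmark that \emph{is} informed.

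The paper handles this via a separate lemma that, for $j\leq i^*-1$, lower-bounds $\underline{\Gamma}[\ell_{j,o},d]$ above $c_j$ using the triangle inequality and Assumption~\ref{assumption:Travel-Time-vs-Dijkstra-Rank} (here the condition $\la\in o\!\left(\sqrt{\log n/\log\log n}\right)$ is actually used, since a $\la^2$ appears in the exponent). The boundary case $j=i^*-1$ is more delicate and is exactly where the $(ESC)$-criterion enters: the paper shows that if a level-$j$ landmark in its ring \emph{were} informed and $(ESC)$ did \emph{not} fire, one derives two incompatible bounds on $R_{j,o}$. You allude to $(ESC)$ ``together with the choice of $\varphi$'' but do not articulate either role it plays: (a) when $(ESC)$ fires, the returned value is already a $(1+\eps+\psi/(\varphi(r+1)))$-approximation, and the stated $\varphi$ makes this equal to the target stretch; (b) when $(ESC)$ does \emph{not} fire, its failure is the lever that rules out informed lower-level landmarks in the boundary case. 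Without making both of these explicit, the correctness of $\alg{HQA}$'s level guess---and hence the claimed stretch with probability $1-\Order{1/n}$---is not established.

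A minor point on preprocessing: your claim that ``the level-$(k{+}1)$ term dominates'' is not how the paper argues, and is in fact the wrong direction. The per-level bound is $n^{2-\omega_i+a(1-\theta)+o(1)}$ with $\omega_i$ \emph{increasing} in $i$, so the tightest constraint on $r$ (and the largest per-level cost) comes from the \emph{lowest} level; the paper simply bounds every level by $n^{2-\beta}$ and absorbs the factor $k{+}1$ into the $o(1)$ exponent.
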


\bibliographystyle{plain}


\clearpage

\appendix


\section{Summary of Notation}
\label{section:notation}

\begin{center}
\begin{footnotesize}
\begin{tabular}{|c||p{10.5cm}|}
\hline
Symbol & Description
\\ \hline\hline
$[n]$ & The set of integers $\{1,2,\ldots,n\}$, for any $n\in\naturals-\{0\}$.
\\ \hline
$G=(V,A)$ & The graph representing the underlying structure of the road network.
	\newline
	$n = |V|$ and $m=|A|$.
\\ \hline
$diam(G,D)$ & The diameter of $G$ under an arc-cost metric $D$.
\\ \hline
$\mathcal{P}_{o,d}$ & Set of $od$-paths in $G$.
\\ \hline
$p \bullet q$ & The concatenation of the $ux$-path $p$ with the $xv$-path $q$ at vertex $x$.
\\ \hline
$ASP[o,d](t_o)$ & Set of $(1+\eps)$-approximations of minimum-travel-time $od$-paths in $G$, for given departure-time $t_o\geq 0$.
\\ \hline
$SP[o,d](t_o)$ & Set of minimum-travel-time $od$-paths in $G$, for given departure-time $t_o\geq 0$.
\\ \hline
$B[v](t_v)$ & A ball growing from $(v,t_v)\in V\times[0,T)$, in the time-dependent metric, until either the destination $d$ or the closest landmark $\ell_v$ from $(v,t_v)$ is settled.
\\ \hline
$B[v;F](t_v)$ & A ball growing from $(v,t_v)\in V\times[0,T)$, in the time-dependent metric, of size $F\in\naturals$.
\\ \hline
$\overline{B}[v;F]$ / $\underline{B}[v;F]$ & A ball growing from $v\in V$, in the full-congestion / free flow metric, of (integer) size $F\in\naturals$.
\\ \hline
$\overline{B}[v;R]$ / $\underline{B}[v;R]$ & A ball growing from $v\in V$, in the full-congestion / free flow metric, of (scalar) radius $R>0$.
\\ \hline
$B'[v;F](t_v)$ & A ball growing from $(v,t_v)\in V\times[0,T)$, in the time-dependent metric, of size $F\polylog(F)$, according to Assumption~\ref{assumption:growth-of-free-flow-ball-sizes}.
\\ \hline
$d[a](t)$ & The limited-window arc-travel-time function for arc $a\in A$, with departure-time $t\in [0,T)$ for some \emph{constant} time-period $T>0$ (e.g., a single day).
\\ \hline
$M_a$ & Maximum possible travel-time ever seen at arc $a$.
\\ \hline
$M$ & Maximum arc-travel-time ever seen in any arc.
\\ \hline
$D[a](t)$ & Periodic arc-travel-time function for arc $a\in A$, with domain $t\in [0,\infty)$.
\\ \hline
$Arr[a](t)$ & The arc-arrival-time function for arc $a\in A$.
\\ \hline
$D[o,d]$ & Minimum-travel-time \emph{function}, from $o$ to $d$.
\\ \hline
$\Gamma[o,d]$ & Dijkstra-Ranks \emph{function}, from $o$ to $d$.
\\ \hline
$D_{\max}[o,d]$ / $D_{\min}[o,d]$ & The maximum and minimum value of $D[o,d]$.
\\ \hline
$\overline{D}[a]$ / $\underline{D}[a]$ & Travel-times of $a$ in \emph{full-congestion} and \emph{free-flow} metrics, respectively.
\\ \hline
$\overline{\De}[o,d]$ / $\underline{\De}[o,d]$ & An upper-approximating / lower-approximating function to $D[o,d]$.
\\ \hline
$Arr[o,d]$ & Earliest-arrival-time \emph{function}, from $o$ to $d$.
\\ \hline
$t_u ~~ (t_v)$ & Departure-time from the tail $u$ (arrival-time at the head $v$) for $uv\in A$.
\\ \hline
$TDSP(o,d,t_o)$ & The problem of finding a min-cost $od$-path, given a departure-time $t_o$.
\\ \hline
$TDSP(o,\star,t_o)$ & The problem of finding a min-cost paths tree, for given departure-time $t_o$.
\\ \hline
$TDSP(o,d)$ & The problem of constructing a succinct representation of min-cost $od$-paths \emph{function}.
\\ \hline
$K_a$ & Number of breakpoints in the arc-travel-time function $D[a]$.
\\ \hline
$K$ & Total number of breakpoints in the arc-travel-time functions.
\\ \hline
$K_{\max}$ & The maximum number of breakpoints, among the arc-travel-time functions.
\\ \hline
$K^*$ & Total number of \emph{concavity-spoiling} breakpoints (i.e., points at which the slope increases) in the arc-travel-time functions.
\\ \hline
$\La_{\max}$ & Maximum slope among minimum-travel-time functions.
\\ \hline
$\La_{\min}$ & Absolute value of minimum slope among minimum-travel-time functions.
\\ \hline
$\zeta$ & ratio of minimum-travel-times in opposite directions between two vertices for any specific departure-time.
\\ \hline
$r$ & The recursion budget for $\alg{RQA}$ and $\alg{HQA}$.
\\ \hline
$\alg{BIS}$ & The bisection approximation method for minimum-travel-time functions.
\\ \hline
$\alg{TRAP}$ & The trapezoidal approximation method for minimum-travel-time  functions.
\\ \hline
$\alg{FCA}$ & The \emph{Forward Constant Approximation} query algorithm.
\\ \hline
$\alg{RQA}$ & The \emph{Recursive Query Algorithm}, based on landmarks possessing information towards all possible destinations.
\\ \hline
$\alg{HQA}$ & The \emph{Hierarchical Query Algorithm}, based on a hierarchy of landmarks.
\\ \hline
$\alg{FLAT}$ & The oracle that uses landmarks possessing summaries towards all possible destinations, and the $\alg{RQA}$ query algorithm.
\\ \hline
$\alg{HORN}$ & The oracle that uses a hierarchy of landmarks with their own subset of destination vertices, and the $\alg{HQA}$ query algorithm.
\\ \hline
\end{tabular}
\end{footnotesize}
\end{center}

\section{Assumptions on the travel-time metric}
\label{section:assumptions}

In this section, we make a few assumptions on the kind of minimum-travel-time functions in the network. All assumptions are quite natural and justified in several application scenarios, such as the urban-traffic road networks, which have motivated this work. Technically, they allow a \emph{smooth transition} from static metrics on undirected graphs towards time-dependent metrics on directed graphs.

The first assumption, called \emph{Bounded Travel-Time Slopes}, asserts that the partial derivatives of the minimum-travel-time functions between any pair of origin-destination vertices are bounded in a known fixed interval:
%
\\[5pt]\noindent
\textbf{Assumption~\ref{assumption:Bounded-Travel-Time-Slopes}.}
\emph{
	There exist constants $\La_{\min}\in[0,1)$ and $\La_{\max}\geq 0$ s.t.:
\(
	\forall (o,d)\in V\times V,~ \forall 0\leq t_1 < t_2,~
	\frac{D[o,d](t_1) - D[o,d](t_2)}{t_1 - t_2} \in[-\La_{\min}, \La_{\max}]\,.
\)}
\\[5pt]\noindent
%
The lower-bound of $-1$ in the minimum-travel-time function slopes is indeed a direct consequence of the FIFO property, which is typically assumed to hold in several time-dependent networks, such as road networks.
$\La_{\max}$ represents the maximum possible rate of change of minimum-travel-times in the network, which only makes sense to be bounded (in particular, independent of the network size) in realistic instances such as the ones representing urban-traffic time-dependent
road networks.

The second assumption, called \emph{Bounded Opposite Trips}, asserts that for any given departure time, the minimum-travel-time from $o$ to $d$ is not more than a  \emph{constant} $\zeta\geq 1$ times the minimum-travel-time in the opposite direction (but not necessarily along the reverse path).
%
\\[5pt]\noindent
\textbf{Assumption~\ref{assumption:Bounded-Opposite-Trips}.}
\emph{$\exists\zeta \geq 1$,
\(
	\forall (o,d)\in V\times V,~ \forall t\in[0,T),~
	D[o,d](t) \leq \zeta\cdot D[d,o](t)\,.
\)}
\\[5pt]\noindent
%
This is quite natural in road networks, (i.e., it is most unlikely that a trip in one direction is more than $10$ times longer than the trip in the opposite direction during the same time period).

\remove{
These assumptions were verified through an experimental analysis we conducted (see \cite{2014-Kontogiannis-Zaroliagis,2015-Kontogiannis-Michalopoulos-Papastavrou-Paraskevopoulos-Wagner-Zaroliagis}
for the details), using two data sets: a real-world time-dependent snapshot of two weeks traffic data of the city of Berlin, kindly provided to us by TomTom (consisting of $n = 478,989$ vertices and $m = 1,134,489$ arcs), and a benchmark time-dependent instance of Western Europe's (WE) road network (consisting of $n = 18,010,173$ vertices and $m = 42,188,664$ arcs) kindly provided by PTV AG for scientific use.
Our experimental analysis showed that for the Berlin data set, $\La_{\max} < 0.2$ and $\zeta < 1.6$, while for the WE benchmark $\La_{\max} < 6.2$ and $\zeta < 1.2$.
}

A third assumption concerns the relation of the Dijkstra ranks of cocentric balls in the network, with respect to the (static) \emph{free-flow} metric implied by the time-dependent instance at hand. Its purpose is to bridge the gap between expansion of graph distances (densities of Dijkstra balls) and travel-times in the network. It essentially asserts that, given a particular origin-vertex, if one considers a free-flow ball  of a certain size (and travel-time radius) and then expands further this ball to a larger radius (equal to the full-congestion radius in the free-flow ball) then the ball size that we get changes by at most a polylogarithmic factor.
\remove{
Also this assumption has been verified in the real time-dependent data for the city of Berlin that we have at our disposal (cf. \cite{2015-Kontogiannis-Michalopoulos-Papastavrou-Paraskevopoulos-Wagner-Zaroliagis}), where the ball-size expansion factor is bounded by a constant smaller than 2.
}
%
\\[5pt]\noindent
\textbf{Assumption~\ref{assumption:growth-of-free-flow-ball-sizes}.}
\emph{For any vertex $\ell\in V$, and a positive integer $F$, consider the (static) Dijkstra ball $\underline{B}[\ell;F]$ around $\ell$ under the free-flow metric.
Let
	$\underline{R}[\ell] = \max\{ \underline{D}[\ell,v]: v\in \underline{B}[\ell;F] \}$
and
	$\overline{R}[\ell] = \max\{ \overline{D}[\ell,v]: v\in \underline{B}[\ell;F] \}$
be the largest free-flow and full-congestion travel-times from $\ell$ to any other vertex in $\underline{B}[\ell;F]$.
Finally, let
	$\underline{B}'[\ell;F] = \{ v\in V: \underline{D}[\ell,v](0,T) \leq \overline{R}[\ell] \}$
be the free-flow ball around $\ell$ with the (larger) radius $\overline{R}[\ell]$.
Then it holds that $|\underline{B}'[\ell;F]| \in \Order{F\cdot\polylog(F)}$.}
\\[5pt] \noindent
The aforementioned assumptions were verified through an experimental analysis on two real-world road networks, one concerning the urban-area of the city of Berlin and the other concerning the national road network of Germany. Our experimental analysis, presented
in \cite{2016-Kontogiannis-Michalopoulos-Papastavrou-Paraskevopoulos-Wagner-Zaroliagis},
shows that for the Berlin data set, $\La_{\max} < 0.062$, $\zeta < 1.2$, and the maximum ball size expansion factor $\le 6.7$, while for the Germany data set $\La_{\max} < 0.22$, $\zeta < 1.1$, and the maximum ball size expansion factor $\le 8.3$.

%
Finally, we need a systematic way to correlate the arc-cost metric (travel-times)
with the Dijkstra-Rank metric induced by it. For this reason, inspired by the notion of the doubling dimension (e.g., \cite{2011-Bartal-Gottlieb-Kopelowitz-Lewenstein-Roditty} and references therein), we consider some \emph{scalar} $\lambda \geq 1$ and functions $f,g:\naturals\mapsto [1,\infty)$, such that the following hold:
$\forall (o,d,t_o)\in V\times V\times [0,T)$,
	(i) $\Gamma[o,d](t_o) \leq f(n) \cdot (D[o,d](t_o))^{\la}$, and
	(ii) $D[o,d](t_o) \leq g(n) \cdot (\Gamma[o,d](t_o))^{1/\la}$.
This property trivially holds, e.g., for $\la=1$, $f(n) = n$, and $g(n) = diam(G,\overline{D})$. Of course, our interest is for the smallest possible values of $\la$ and at the same time the slowest-growing functions $f(n),g(n)$.
Our last  assumption exactly quantifies the boundedness of this correlation by restricting $\la$, $f(n)$ and $g(n)$.
%
\\[5pt]\noindent
\textbf{Assumption~\ref{assumption:Travel-Time-vs-Dijkstra-Rank}.}
\emph{For the graph $G=(V,A)$ and the time-dependent arc-cost metric $D$ that we consider, it holds that there exist $\la\in\order{\frac{\log(n)}{\log\log(n)}}$} and $f(n),g(n) \in \polylog(n)$ such that
	(i) $\Gamma[o,d](t_o) \leq f(n) \cdot (D[o,d](t_o))^{\la}$, and
	(ii) $D[o,d](t_o) \leq g(n) \cdot (\Gamma[o,d](t_o))^{1/\la}$.
\\[5pt] \noindent
Note that static oracles related to the notion of doubling dimension (e.g., \cite{2011-Bartal-Gottlieb-Kopelowitz-Lewenstein-Roditty}), demand a \emph{constant} value for the exponent $\la$ of the expansion, whereas we allow even a (not too fast) growing function of the network size $n$. The notion of expansion that we consider introduces some additional slackness, by allowing some divergence from the corresponding powers by polylogarithmic factors of the network size.

\section{Review of results in \cite{2014-Kontogiannis-Zaroliagis}}
\label{section:2014-Kontogiannis-Zaroliagis-Overview}
%
\smallskip
\noindent
\emph{\textbf{Review of the approach in \cite{2014-Kontogiannis-Zaroliagis}.}} 
The TD-oracle in \cite{2014-Kontogiannis-Zaroliagis} starts by first determining, for some $\rho \in (0,1)$, a set $L$ of $\rho n$ independently and uniformly at random selected \emph{landmarks} (vertices acting as reference points). 
During the preprocessing phase, all $(1+\eps)$-upper-approximating functions (\term{travel-time summaries}) $\overline{\Delta}[\ell,v]$ are constructed from each landmark $\ell\in L$ towards every reachable destination $v\in V$, using the $\alg{BIS}$ approximation algorithm that keeps bisecting the common axis of departure-times from $\ell$, until the desired approximation guarantee is achieved in each subinterval, for all destinations.
It is proved in \cite{2014-Kontogiannis-Zaroliagis} that $\alg{BIS}$ requires 
\[
	\Order{ 	
				\frac{K^*}{\eps}
				\max_{d\in V}\left\{\log\left(\frac{T\cdot(\La_{\max}+1)}{\eps \min_{t_o}\{D[o,d](t_o)\}}\right)\right\}
				\max_{d\in V}\left\{\log\left(\frac{\max_{t_o}\{D[o,d](t_o)\}}{\min_{t_o}\{D[o,d](t_o)\}}\right)\right\}
	}
\]
	calls to $TDSP(o,\star,t_o)$,  for a given origin $o\in V$ and all reachable destinations from it. The following lemma clarifies this under the lens of our TD-instnace:
\begin{lemma}
\label{lemma:BIS-analysis}
	$\alg{BIS}$ requires $\Order{\frac{K^*\cdot\log^2(n)}{\eps}}$ calls to $TDSP(o,\star,t_o)$ to provide all the summaries of minimum-travel-time functions from a given origin $o\in V$ towards all destinations at distance at least $1$ from $o$.
\end{lemma}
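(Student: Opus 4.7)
The plan is to simply instantiate the general upper bound of \cite{2014-Kontogiannis-Zaroliagis} on the number of $TDSP(o,\star,t_o)$ calls made by $\alg{BIS}$ under the normalization and assumptions maintained in this paper, and to verify that each of the two $\max_{d\in V}\log(\cdot)$ factors in that bound collapses to $\Order{\log n}$. The multiplicative $K^*/\eps$ out front is then preserved as-is, yielding the claim.

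For the first log factor, $\log\!\left(\frac{T(\La_{\max}+1)}{\eps\,\min_{t_o}D[o,d](t_o)}\right)$, I would invoke three facts: (i) the hypothesis that $d$ is at distance at least $1$ from $o$, so $\min_{t_o}D[o,d](t_o)\geq 1$; (ii) the scaling convention $T=n^{a}$ for constant $a\in(0,1)$ that we impose on the TD-instance; and (iii) the constancy of $\La_{\max}$ (Assumption~\ref{assumption:Bounded-Travel-Time-Slopes}) and of $\eps$. The argument of the log is then at most a fixed polynomial in $n$, so this factor is $\Order{\log n}$.

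For the second log factor, $\log\!\left(\frac{\max_{t_o}D[o,d](t_o)}{\min_{t_o}D[o,d](t_o)}\right)$, the denominator is again at least $1$ by the distance hypothesis. For the numerator I would appeal to Assumption~\ref{assumption:Travel-Time-vs-Dijkstra-Rank}(ii): for every departure-time $t_o$, $D[o,d](t_o)\leq g(n)\cdot(\Gamma[o,d](t_o))^{1/\la}\leq g(n)\cdot n^{1/\la}\leq g(n)\cdot n$, using $\la\geq 1$ and the trivial $\Gamma[o,d](t_o)\leq n$. Since $g(n)\in\polylog(n)$, the ratio is at most $n\cdot\polylog(n)$, so this factor is also $\Order{\log n}$. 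Plugging both bounds into the cited expression gives $\Order{K^*\log^2(n)/\eps}$ calls to $TDSP(o,\star,t_o)$, as required.

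The only non-routine step is the polynomial upper bound on $\max_{t_o}D[o,d](t_o)$: without Assumption~\ref{assumption:Travel-Time-vs-Dijkstra-Rank}, in principle the full-congestion travel-times could dwarf the free-flow diameter $T=n^{a}$, and the second log could fail to be $\Order{\log n}$. So I expect that invoking part~(ii) of Assumption~\ref{assumption:Travel-Time-vs-Dijkstra-Rank} (rather than part~(i), which bounds Dijkstra-Rank in terms of travel-time) is the single substantive move in the argument; everything else is bookkeeping.
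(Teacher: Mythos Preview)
Your argument is correct, but the paper handles the second logarithmic factor differently. You invoke Assumption~\ref{assumption:Travel-Time-vs-Dijkstra-Rank}(ii) to bound $\max_{t_o}D[o,d](t_o)\leq g(n)\cdot n^{1/\la}$. The paper instead uses only Assumption~\ref{assumption:Bounded-Travel-Time-Slopes}: since all minimum-travel-time slopes are bounded by $\La_{\max}$, over a period of length $T$ one has $\overline{D}[o,d]-\underline{D}[o,d]\leq \La_{\max}\cdot T$, hence $\frac{\overline{D}[o,d]}{\underline{D}[o,d]}\leq 1+\frac{\La_{\max}\cdot T}{\underline{D}[o,d]}\leq 1+\La_{\max}\cdot n^{a}$, using $\underline{D}[o,d]\geq 1$ and $T=n^{a}$. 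This already makes the second log $\Order{\log n}$.

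The paper's route is more elementary in that it needs only the bounded-slopes assumption rather than the Dijkstra-Rank--travel-time correlation, and it even yields a slightly sharper $\Order{n^{a}}$ bound on the ratio (though both collapse to $\Order{\log n}$ after taking logs). So your closing remark that Assumption~\ref{assumption:Travel-Time-vs-Dijkstra-Rank} is ``the single substantive move'' overstates its role here: the paper shows that Assumption~\ref{assumption:Bounded-Travel-Time-Slopes} alone suffices for this lemma. Your approach still works within the paper's standing hypotheses, but it is worth knowing that the weaker assumption already does the job.
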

\begin{proof}[Proof of Lemma~\ref{lemma:BIS-analysis}]
The crucial observation, which is a direct consequence of Assumption~\ref{assumption:Bounded-Travel-Time-Slopes}, is that:
\[
	\overline{D}[o,d] - \underline{D}[o,d] \leq \La_{\max}\cdot T
	\Rightarrow
	\frac{\overline{D}[o,d]}{\underline{D}[o,d]} \leq \frac{\La_{\max}\cdot T}{\underline{D}[o,d]} + 1
\]
Exploiting the facts that $T = n^a$, $\underline{D}[o,d]\geq 1$ and $\La_{\max}\in\Order{1}$, we conclude that $\alg{BIS}$ requires 
\(
	\Order{ 	
				\frac{K^*}{\eps}\cdot \log^2\left(\frac{n}{\eps}\right)
	}
\)
calls to $TDSP(o,\star,t_o)$.
\end{proof}
Two query algorithms were proposed, $\alg{FCA}$ and $\alg{RQA}$, which provide constant and $(1+\s)$-approximations (for \emph{constant} $\s>\eps$) to minimum-travel-times, respectively.

$\alg{FCA}$ is a simple \emph{sublinear}-time algorithm for evaluating $\overline{\Delta}[o,d](t_o)$, guaranteeing a constant approximation w.r.t. $D[o,d](t_o)$. In particular, it grows a $\alg{TDD}$ ball $B[o](t_o) = \left\{x\in V : D[o,x](t_o) \leq D[o,\ell_o](t_o)\right\}$ from $(o,t_o)$, until either $d$ or the closest landmark $\ell_o \in \arg\min_{\ell\in L}\{ D[o,\ell](t_o)\}$ is settled. $\alg{FCA}$ then returns either the exact travel-time value, or the approximate travel-time value via $\ell_o$, $\overline{\Delta}[o,d](t_o) = D[o,\ell_o](t_o) + \overline{\Delta}[\ell_o,d](t_o+D[o,\ell_o](t_o))$, which is a guaranteed $(1+\eps+\psi)$-approximation; $\psi$ is a constant depending on $\eps, \zeta$ and $\La_{\max}$, but not on the size of the network.
$\alg{RQA}$ improves the approximation guarantee provided by $\alg{FCA}$, by exploiting carefully a number of recursive accesses to the preprocessed information, each of which produces (via calls to $\alg{FCA}$) additional candidate $od$-paths. The tuning parameter $r\in \naturals$ -- the \term{recursion budget} -- is the depth of the produced recursion tree.
$\alg{RQA}$ works as follows: As long as the destination vertex has not yet been discovered in the explored area around the origin, and there is still some remaining recursion budget, it ``guesses'' (by exhaustively searching for it) the next vertex $w_k$  at the boundary of the current ball, along the (unknown) shortest $od$-path. Then, it grows a new $\alg{TDD}$ ball from the new center
\(
		\left(~ w_k ~,~ t_k = t_o+D[o,w_1](t_o) + D[w_1,w_2](t_1)+\cdots + D[w_{k-1},w_k](t_{k-1}) ~\right)
\)
until it reaches the closest landmark $\ell_k$ to it, at distance $R_k = D[w_k,\ell_k](t_k)$. $\ell_k$ offers an alternative $od$-path $SOL_k = P_{o,w_1}\bullet\cdots\bullet P_{w_{k-1},w_k} \bullet Q_k\bullet \Pi_k$ by a new application of $\alg{FCA}$, where $P_{w_i,w_{i+1}}\in SP[w_i,w_{i+1}](t_i)$, $Q_k\in SP[w_k,\ell_k](t_k)$, and $\Pi_k\in ASP[\ell_k,d](t_k+R_k)$ is the approximate suffix subpath provided by the oracle. Observe that $SOL_k$ uses a longer (optimal, if all centers lie on the unknown min-cost path) prefix-subpath $P_{o,w_1}\bullet\cdots\bullet P_{w_{k-1},w_k}$ which is then completed with a shorter approximate suffix-subpath $Q_k\bullet\Pi_k$.
It is proved in \cite{2014-Kontogiannis-Zaroliagis} that the minimum-travel-time over all the discovered approximate $od$-paths discovered by $\alg{RQA}$, is a $(1+\s)-$approximation of $D[o,d](t_o)$, for any constant $\s > \eps$.
The next theorem is a consequence of the analysis in \cite{2014-Kontogiannis-Zaroliagis}:

\begin{theorem}[\cite{2014-Kontogiannis-Zaroliagis}]
\label{thm:BIS+RQA-Complexities}
If a TD-instance with $m \in \Order{n}$ and compliant with Assumptions \ref{assumption:Bounded-Travel-Time-Slopes} and \ref{assumption:Bounded-Opposite-Trips}
is preprocessed using $\alg{BIS}$ for constructing travel-time summaries from $\rho n$ landmarks chosen uniformly-at-random,
then the expected values of preprocessing space $S_{\alg{BIS}}$ and time $P_{\alg{BIS}}$, and query time $Q_{\alg{RQA}}$ for $\alg{RQA}$, are:
	$\Exp{ S_{\alg{BIS}} }
	 \in \Order{ \rho n^2 (K^*+1) }$,
	$\Exp{ P_{\alg{BIS}} }
	\in \Order{ \rho n^2 (K^*+1) \log(n) \log\log(K_{\max}) }$,
	and
	$\Exp{ Q_{\alg{RQA}} }
	\in \Order{ \left( 1 / \rho \right)^{r+1}
			\log\left(  1 / \rho \right) \log\log(K_{\max}) }$, where $r\in\naturals$ is the recursion depth in $\alg{RQA}$ (for $r=0$ we get $\alg{FCA}$).
For the approximation guarantees the following hold:
	$\alg{FCA}$ returns either an exact $od$-path,
	or an approximate $od$-path via a landmark $\ell_o$
	s.t.
	\(
		D[o,d](t_o)                                    								
		\leq R_o + \overline{\Delta}[\ell_o,d](t_o + R_o)	
		\leq (1+\eps)\cdot D[o,d](t_o) + \psi\cdot R_o			
		\leq (1+\eps+\psi)\cdot D[o,d](t_o)\,,
	\)
where $R_o = D[o,\ell_o](t_o)$ is the minimum-travel-time to the closest landmark, and  $\psi = 1 + \La_{\max}(1+\eps)(1+2\zeta+\La_{\max} \zeta) + (1+\eps)\zeta$ is a cost-metric dependent  constant.
	$\alg{RQA}$ returns, for given recursion budget $r\in \naturals$, an $od$-path that guarantees stretch $1+\s$, where $\s = \s(r) \leq \frac{\eps\cdot (1+\eps/\psi)^{r+1}}{(1+\eps/\psi)^{r+1}-1}$.
\end{theorem}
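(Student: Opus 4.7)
My plan is to prove the theorem by separating it into four parts: (1) preprocessing space $S_{\alg{BIS}}$, (2) preprocessing time $P_{\alg{BIS}}$, (3) expected query time $Q_{\alg{RQA}}$, and (4) the stretch bounds for $\alg{FCA}$ and $\alg{RQA}$. The size/time bounds reduce to counting $TDSP(\ell,\star,\cdot)$ calls and multiplying by the per-call cost and by $|L|=\rho n$; the stretch bounds require a careful triangle-inequality argument relying on the two assumptions.

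For preprocessing, I would first invoke Lemma~\ref{lemma:BIS-analysis}, which (using Assumption~\ref{assumption:Bounded-Travel-Time-Slopes} to upper-bound $\overline{D}[o,d]/\underline{D}[o,d]$ on our instance with $T=n^a$) gives $\Order{K^*\log^2(n)/\eps}$ calls to $TDSP(\ell,\star,\cdot)$ per landmark. Each such call is executed by $\alg{TDD}$ in $\Order{n\log(n)\log\log(K_{\max})}$ time via predecessor search on pwl arc-cost functions. Summing over the $\rho n$ landmarks yields $\Exp{P_{\alg{BIS}}}\in\Order{\rho n^2(K^*+1)\log(n)\log\log(K_{\max})}$ after absorbing polylog$(n)/\eps$ factors into the $\Order{\cdot}$. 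For space, each summary $\overline{\Delta}[\ell,v]$ is represented by a number of breakpoints proportional to the number of bisection steps (hence $\Order{K^*}$ up to polylog factors), so summing over $\rho n$ landmarks and $n$ destinations gives the claimed $\Order{\rho n^2(K^*+1)}$ bound.

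For the query-time of $\alg{RQA}$, the key observation is that the uniform random choice of $L\subseteq_{\uar(\rho)} V$ makes the expected number of vertices settled by a $\alg{TDD}$ ball grown from any fixed $(v,t_v)$ until hitting its closest landmark equal to $\Order{1/\rho}$, so growing such a ball costs $\Order{(1/\rho)\log(1/\rho)\log\log(K_{\max})}$. I would then bound the number of balls grown by the recursion tree of $\alg{RQA}$: at each of the $r$ levels of recursion, the algorithm branches over the $\Order{1/\rho}$ boundary vertices of the current ball as candidate next centers $w_k$, yielding $\Order{(1/\rho)^r}$ grown balls in expectation. Multiplying by the per-ball cost gives the announced bound $\Order{(1/\rho)^{r+1}\log(1/\rho)\log\log(K_{\max})}$.

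The approximation guarantees are the more delicate part and where I expect the main obstacle. For $\alg{FCA}$, I would bound the travel-time via the closest landmark $\ell_o$ as $R_o+\overline{\Delta}[\ell_o,d](t_o+R_o)\leq R_o+(1+\eps)D[\ell_o,d](t_o+R_o)$ by the summary's $(1+\eps)$-upper-approximation property, and then compare $D[\ell_o,d](t_o+R_o)$ to $D[o,d](t_o)$ via the triangle inequality plus a detour bound. The detour $D[\ell_o,o](t_o+R_o)\to D[\ell_o,d]$ is controlled because Assumption~\ref{assumption:Bounded-Opposite-Trips} converts the outbound leg $R_o=D[o,\ell_o](t_o)$ into an inbound leg of at most $\zeta R_o$, and Assumption~\ref{assumption:Bounded-Travel-Time-Slopes} converts shifts of departure-time (by up to $R_o$) into changes of travel-time bounded by $\La_{\max}R_o$; carefully chaining these inequalities yields the additive $\psi R_o$ overhead with $\psi=1+\La_{\max}(1+\eps)(1+2\zeta+\La_{\max}\zeta)+(1+\eps)\zeta$, and since $R_o\leq D[o,d](t_o)$ (otherwise $d$ would have been settled first), we get stretch $1+\eps+\psi$. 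For $\alg{RQA}$, I would proceed by induction on the recursion budget $r$: each recursion level replaces a shorter approximate suffix with a longer exact prefix followed by a freshly-invoked $\alg{FCA}$ on the remaining subpath, which by the $\alg{FCA}$ analysis contracts the additive error by a factor of $(1+\eps/\psi)$; solving the resulting geometric recursion gives $\s(r)\leq \eps(1+\eps/\psi)^{r+1}/[(1+\eps/\psi)^{r+1}-1]$, as claimed.
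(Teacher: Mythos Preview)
The paper does not prove this theorem at all: it is stated in Appendix~\ref{section:2014-Kontogiannis-Zaroliagis-Overview} as a \emph{review} of prior work, introduced by ``The next theorem is a consequence of the analysis in~\cite{2014-Kontogiannis-Zaroliagis}'' and given no proof. So there is no ``paper's own proof'' to compare against; the theorem is simply imported from the cited reference.

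That said, your reconstruction is a reasonable sketch of how the cited paper establishes these bounds, and the overall architecture (count $TDSP(\ell,\star,\cdot)$ calls via Lemma~\ref{lemma:BIS-analysis}, multiply by the $\alg{TDD}$ cost and by $|L|=\rho n$; geometric-distribution argument for the expected ball size $\Order{1/\rho}$; branching factor $\Order{1/\rho}$ over $r$ recursion levels; triangle-inequality plus Assumptions~\ref{assumption:Bounded-Travel-Time-Slopes} and~\ref{assumption:Bounded-Opposite-Trips} for the $\alg{FCA}$ stretch; induction on $r$ for $\alg{RQA}$) matches the approach of~\cite{2014-Kontogiannis-Zaroliagis}. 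One quantitative slip: invoking Lemma~\ref{lemma:BIS-analysis} as you do yields $\Order{\rho n^2 (K^*+1)\log^3(n)\log\log(K_{\max})/\eps}$ for $P_{\alg{BIS}}$, not the single-$\log(n)$ bound stated in the theorem; you cannot ``absorb polylog$(n)/\eps$ factors into the $\Order{\cdot}$'' when the target bound explicitly displays a $\log(n)$ factor. The stated exponent of $\log(n)$ comes from the finer analysis in~\cite{2014-Kontogiannis-Zaroliagis}, not from Lemma~\ref{lemma:BIS-analysis} alone. Similarly, your derivation of the exact constant $\psi = 1 + \La_{\max}(1+\eps)(1+2\zeta+\La_{\max}\zeta) + (1+\eps)\zeta$ is only asserted, not carried out; the precise chain of inequalities that produces each term (one $\zeta$ from reversing $o\to\ell_o$, two applications of the slope bound for the departure-time shifts, a second $\zeta$ inside the slope correction, and the final $(1+\eps)$ from the summary) is the actual content of the $\alg{FCA}$ analysis and would need to be spelled out if you were genuinely proving rather than citing the result.
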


When $K^*\in \order{n}$ these TD-oracles of \cite{2014-Kontogiannis-Zaroliagis} achieve both \emph{sublinear} query times and \emph{subquadratic} preprocessing requirements. Unfortunately, experimental evidence \cite{2015-Kontogiannis-Michalopoulos-Papastavrou-Paraskevopoulos-Wagner-Zaroliagis} has demonstrated that it may be the case that $K^*\in \OmegaOrder{n}$.

\section{The $\alg{TRAP}$ approximation method}
\label{section:trap-appendix}

In this section, we provide the missing proofs of Section~\ref{section:TRAP-method}.

\begin{proof}[Proof of Lemma~\ref{lemma:upper+lower-approximating-functions-of-TRAP}]
By Assumption~\ref{assumption:Bounded-Travel-Time-Slopes}, for any departure-time $t\in I_k = [t_s = (k-1)\tau , t_f= k\tau)$ from $\ell$ and any destination vertex $v\in V$, the following inequalities hold:
\begin{eqnarray*}
	& &
	-\La_{\min} \leq \frac{D[\ell,v](t) - D[\ell,v](t_s)}{t-t_s} \leq \La_{\max}
	\\
	& \Rightarrow &
	\fbox{$-\La_{\min} (t-t_s) + D[\ell,v](t_s) \leq D[\ell,v](t) \leq D[\ell,v](t_s) + \La_{\max} (t-t_s)$}
	\\[20pt]
	& &
	-\La_{\min} \leq \frac{D[\ell,v](t_f) - D[\ell,v](t)}{t_f-t} \leq \La_{\max}
	\\
	& \Rightarrow &
	\fbox{$\La_{\min} (t_f-t) + D[\ell,v](t_f) \geq D[\ell,v](t) \geq -\La_{\max} (t_f-t) + D[\ell,v](t_f)$}
\end{eqnarray*}
Combining the two inequalities we get the following bounds: $\forall v\in V, \forall t\in I_k$:
\begin{equation*}
	\max	\left\{\begin{array}{c}
						-\La_{\min} t + \La_{\min} t_s + D[\ell,v](t_s)
						\\
                    \La_{\max} t - \La_{\max} t_f + D[\ell,v](t_f)
			\end{array}\right\}
	\leq D[\ell,v](t) \leq
	\min	\left\{\begin{array}{c}
						 \La_{\max} t  - \La_{\max} t_s + D[\ell,v](t_s)
						\\
						-\La_{\min} t + \La_{\min} t_f  + D[\ell,v](t_f)
			\end{array}\right\}
\end{equation*}
Exploiting the fact that each minimum-travel-time function from $\ell$ to any destination $v\in V$ and departure time from $I_k$ respects the above mentioned upper and lower bounds, one could use a simple continuous, pwl approximation of $D[\ell,v]$ within this interval:
\begin{equation*}
\forall t\in I_k,~
	\overline{\de}_k[\ell,v](t) =
	\min	\left\{\begin{array}{c}
						  \La_{\max} t + D[\ell,v](t_s) - \La_{\max} t_s,
						\\
						- \La_{\min} t + D[\ell,v](t_f) + \La_{\min} t_f
		\end{array}\right\}
\end{equation*}
I.e., we consider the lower-envelope of the lines passing via the point $(t_s,D[\ell,v](t_s))$ with the maximum slope $\La_{\max}$, and the point $(t_f,D[\ell,v](t_f))$ with the minimum  slope $-\La_{\min}$.
Analogously, we construct a lower-bounding approximation of $D[\ell,v]$ within $I_k$:
\begin{equation*}
\forall t\in I_k,~
	\underline{\de}_k[\ell,v](t) =
	\max	\left\{\begin{array}{c}
						  \La_{\max} t + D[\ell,v](t_f) - \La_{\max} t_f,
						\\
						- \La_{\min} t + D[\ell,v](t_s) + \La_{\min} t_s
		\end{array}\right\}
\end{equation*}
Figure~\ref{fig:trapezoidal-approximation} shows the (upper and lower) approximations with respect to $D[\ell,v]$ within $[t_s,t_f)$.
\end{proof}

\begin{proof}[Proof of Lemma~\ref{lemma:hole-radius-of-trapezoidal}]
Since $(\underline{t}_m,\underline{D}_m)$ is the intersection of two lines, it is easy to show that:
\begin{eqnarray*}
	\underline{t}_m &=& \frac{D[\ell,v](t_s) - D[\ell,v](t_f)}{\La_{\min} +\La_{\max}} + \frac{\La_{\min} t_s + \La_{\max} t_f}{\La_{\min} + \La_{\max}}
	\\[10pt]
	\underline{D}_m &=& \frac{\La_{\max} D[\ell,v](t_s) + \La_{\min} D[\ell,v](t_f)}{\La_{\min}+\La_{\max}} - \frac{\La_{\min}\cdot\La_{\max}}{\La_{\min}+\La_{\max}}\cdot(t_f-t_s)
\end{eqnarray*}

Analogously, $(\overline{t}_m,\overline{D}_m)$ is also the intersection of two lines. Therefore:
\begin{eqnarray*}
	\overline{t}_m &=& \frac{D[\ell,v](t_f) - D[\ell,v](t_s)}{\La_{\min}+\La_{\max}} + \frac{\La_{\min} t_f + \La_{\max} t_s}{\La_{\min}+\La_{\max}}
	\\[10pt]
	\overline{D}_m &=& \frac{\La_{\max} D[\ell,v](t_f) + \La_{\min} D[\ell,v](t_s)}{\La_{\min}+\La_{\max}} + \frac{\La_{\min}\La_{\max}}{\La_{\min}+\La_{\max}}(t_f-t_s)
\end{eqnarray*}

We start with the upper bound on the maximum absolute error:
\begin{eqnarray*}
	\lefteqn{MAE[\ell,v](I_k)
			\leq \overline{D}_m - \underline{D}_m}
	\\
	&=&		\frac{\La_{\max} D[\ell,v](t_f) + \La_{\min} D[\ell,v](t_s)}{\La_{\min}+\La_{\max}} + \frac{\La_{\min}\La_{\max}}{\La_{\min} + \La_{\max}}(t_f-t_s)
	\\
	&-&		\frac{\La_{\max} D[\ell,v](t_s) + \La_{\min} D[\ell,v](t_f)}{\La_{\min} + \La_{\max}} + \frac{\La_{\min} \La_{\max}}{\La_{\min} + \La_{\max}} (t_f-t_s)
	\\
	&=& 	\frac{(\La_{\max}-\La_{\min}) [ D[\ell,v](t_f) - D[\ell,v](t_s) ]
			+ 2\La_{\min} \La_{max} (t_f-t_s)}{\La_{\min}+\La_{\max}}
	\\
	&=& 	\frac{(\La_{\max}-\La_{\min}) [D[\ell,v](t_f) - D[\ell,v](t_s)]/(t_f-t_s)
			+ 2\La_{\min}\La_{max}}{\La_{\min}+\La_{\max}} (t_f-t_s)
	\\
	&\DueTo{\leq}{As.\ref{assumption:Bounded-Travel-Time-Slopes}}&
			\frac{(\La_{\max}-\La_{\min}) \La_{\max}
			+ 2\La_{\min}\La_{max}}{\La_{\min} + \La_{\max}} (t_f-t_s)
	= \La_{\max} \tau
\end{eqnarray*}

Recall now about $\overline{\delta}_k[\ell,v]$ that: $\forall t\in I_k$,
\begin{eqnarray*}
	\overline{\delta}_k[\ell,v](t)
	&\leq& \underline{\delta}_k[\ell,v](t) + MAE[\ell,v](I_k)
	\leq \underline{\delta}_k[\ell,v](t) + \La_{\max} \tau
	\\
	&\leq&  D[\ell,v](t)\cdot \left( 1 + \frac{ \La_{\max} \tau }{\underline{\delta}_k[\ell,v](t)}\right)
\end{eqnarray*}
Our goal is to assure that this last upper bound of $\overline{\delta}_k[\ell,v](t)$ is in turn upper-bounded by $(1+\eps)\cdot D[\ell,v](t)$. Based on the expression of $\underline{\delta}_k[\ell,v](t)$, and exploiting also the fact that $\tau \geq \max\{ t-t_s, t_f-t\}$, a \emph{sufficient condition} for this to hold, is the following:
\begin{equation*}
	\begin{array}{c}
	D[\ell,v](t_s) \geq \left(\La_{\min} +\frac{\La_{\max}}{\eps}\right)\tau
	~~\OR~~
	D[\ell,v](t_f) \geq \left(\La_{\max} +\frac{\La_{\max}}{\eps}\right)\tau
	\end{array}
\end{equation*}
This sufficient condition is independent of the actual departure time $t\in I_k$, and only depends on the travel-time values at the endpoints $t_s$ and $t_f$, and also on the length $\tau$ of the departure-times subinterval that we choose.
\end{proof}

\section{The $\alg{FLAT}$ and $\alg{TRAPONLY}$ oracles}
\label{section:FLAT+TRAP-detailed-analyses}

We start by providing the missing proof of Lemma~\ref{lemma:period-vs-network-size}.

\begin{proof}[Proof of Lemma~\ref{lemma:period-vs-network-size}]

All the properties mentioned in the statement of the Lemma, are consequences of Assumption~\ref{assumption:Travel-Time-vs-Dijkstra-Rank}.
We also exploit the fact that the free-flow diameter $diam(G,\underline{D})$ of $G$ corresponds to the maximum possible Dijkstra-Rank, which is equal to $n$, assuming that the graph is strongly connected, if we use as root of the Dijkstra tree the origin of a longest minimum-travel-time path in $(G,\underline{D})$.
In particular, we proceed with the explanation of each property separately:

\begin{itemize}

\item[(i)] Recall that $f(n), g(n) \in \polylog(n) = n^{\Order{\frac{\log\log(n)}{\log(n)}}}$ and $\la\in\order{\frac{\log(n)}{\log\log(n)}}$.
	We start by providing the upper bound of $T$:
	\[
		\begin{array}{rrcl}
		& diam(G,\underline{D}) & \leq & g(n)\cdot n^{1/\la}
		\\
		\Rightarrow & T & = & (diam(G,\underline{D}))^{1/\nu} \leq (g(n))^{1/\nu}\cdot n^{1/(\nu\la)}
		\\
		& & = & n^{\frac{1}{\nu\la} + \frac{1}{\nu}\cdot \Order{\frac{\log\log(n)}{\log(n)}}}
					= n^{\frac{1}{\nu\la}\left[1 + \la\cdot \Order{\frac{\log\log(n)}{\log(n)}}\right]}
					= n^{\frac{ 1 + \order{1} }{ \nu\la }}
		\end{array}
	\]
	As for the lower bound of $T$, we have:
	\[
		\begin{array}{rrcl}
		& diam(G,\underline{D}) & \geq & \left(\frac{n}{f(n)}\right)^{1/\la}
		\\
		\Rightarrow & T & 	= & (diam(G,\underline{D}))^{1/\nu} \geq \left(\frac{n}{f(n)}\right)^{1/(\nu\la)}
								= n^{ \frac{1}{\nu\la}\cdot\left[1 - \Order{\frac{\log\log(n)}{\log(n)}}\right]}
								= n^{ \frac{ 1 - \order{1} }{ \nu\la } }
		\end{array}
	\]

\item[(ii)]
	For the upper bound on $F$ we have:
	\[
		\begin{array}{rcl}
		F &=& \max_{\ell\in L} \{|B[\ell,\underline{R}]|\}
		\\
		& \leq & f(n) \underline{R}^{\la} = f(n) T^{\theta\la}
		\leq n^{\frac{\theta}{\nu}\cdot\left[1 +\left(\la + \frac{\nu}{\theta}\right)\cdot\Order{\frac{\log\log(n)}{\log(n)}}\right]}
		= n^{ \frac{ [1+\order{1}] \theta }{\nu} }
	\end{array}
	\]
	where the last step is because we consider instances with $\la\in\order{\frac{\log(n)}{\log\log(n)}}$, and moreover due to the fact that we set our (yet unspecified) tuning parameter $\theta$ so that $\frac{\nu}{\theta}\in\Order{1}$.

	For the lower bound on $F$ we have:
	\[
		\begin{array}{rcl}
			\underline{R} &\leq & g(n) \cdot F^{1/\la}
			\\
			F & \geq & \left(\frac{\underline{R}}{g(n)}\right)^{\la} = \frac{T^{\theta\la}}{(g(n))^{\la}}
			\geq n^{[1-\order{1}]\frac{\theta}{\nu} - {\la}\Order{\frac{\log\log(n)}{\log(n)}}}
			\\
			&=& n^{\frac{\theta}{\nu} \cdot
						\left[ 1 - \left(1+\frac{\nu\la}{\theta}\right)\cdot\Order{\frac{\log\log(n)}{\log(n)}}\right]}
			= n^{\frac{ [ 1 - \order{1}]\theta }{\nu} }
		\end{array}
	\]
	where the last equality is, again, valid since $\la\in\order{\frac{\log(n)}{\log\log(n)}}$ and $\frac{\nu}{\theta}\in\Order{1}$.

\end{itemize}
\end{proof}


\subsection{Analysis of the $\alg{TRAPONLY}$ oracle.}
\label{section:TRAPONLY-detailed-analysis}

Recall that the preprocessing of the $\alg{TRAPONLY}$ oracle is based solely on $\alg{TRAP}$ for computing travel-time summaries, while the query algorithm is an appropriate variant of $\alg{RQA}$ (we call it $\alg{RQA}^+$) which additionally grows a small $\alg{TDD}$ ball as soon as it settles a new landmark, in order to compute ``on the fly'' the exact minimum-travel-times (rather than evaluating preprocessed summaries, which do not exist) towards the nearby destinations from it. Theorem~\ref{thm:TRAPONLY-Complexities} provides the performance of the $\alg{TRAPONLY}$ oracle.

Corollaries \ref{cor:TRAPONLY-Efficient-Complexities_scaling-IMPLIES-stretch} and \ref{cor:TRAPONLY-Efficient-Complexities_stretch-IMPLIES-scaling} explore the conditions under which sublinear query-time and/or subquadratic preprocessing complexities can be guaranteed. We provide here their proofs.

\begin{proof}[Proof of Corollary~\ref{cor:TRAPONLY-Efficient-Complexities_scaling-IMPLIES-stretch}]

We consider scaled TD-instances with $\frac{1}{\nu\la} = \alpha$ for some constant $\alpha \in (0,1)$.
We start with the sublinearity of the query time. For arbitrary constant $\delta\in(0,1)$ such that 
\(
	\delta = \max\left\{\omega\cdot(r+1) , \omega r + \frac{\theta}{\nu} \right\},
\)
if we choose $\omega =  \frac{\theta}{\nu}$ then we have 
\(	
	\delta = \omega\cdot(r+1) 
	\Leftrightarrow 
	\omega = \frac{\theta}{\nu} = \frac{\delta}{r+1} 
	\Leftrightarrow 
	\theta = \frac{\delta\nu}{r+1}
\) 
and the \emph{sublinearity} of the expected query time $n^{\delta}$ is guaranteed.

We continue with the demand for subquadratic preprocessing requirements. Fix now some $\beta\in(0,1)$. In order to assure preprocessing time and space (roughly) $n^{2-\beta + \order{1}}$, it is necessary to assure that 
\[
	\beta \leq \omega - \alpha\cdot(1-\theta) = \frac{\delta}{r+1} - \alpha + \frac{\alpha\delta\nu}{r+1} 
	\Leftrightarrow 
	\fbox{$r\leq \frac{\delta\cdot(1+\alpha\nu)}{\alpha+\beta} - 1$}
\]	
\end{proof}

\begin{remark}
	Note that the query-time performance of $\alg{RQA}^+$ is equal to that of $\alg{RQA}$ (cf. \cite{2014-Kontogiannis-Zaroliagis}), if $\omega = \frac{\theta}{\nu}$. Moreover, $\beta\leq a^2\nu$ implies that $\frac{\delta\cdot(1+\alpha\nu)}{\alpha+\beta}\geq\frac{\delta}{\alpha}$.
\end{remark}

\begin{proof}[Proof of Corollary~\ref{cor:TRAPONLY-Efficient-Complexities_stretch-IMPLIES-scaling}]

We know that, for those instances for which it works, $\alg{TRAPONLY}$ achieves stretch $1+\eps\cdot\frac{\left(1+\eps/\psi\right)^{r+1}}{\left(1+\eps/\psi\right)^{r+1} - 1}$.
We must assure then that
\[
	\frac{\left(1+\eps/\psi\right)^{r+1}}{\left(1+\eps/\psi\right)^{r+1} - 1}
	\leq k
	\Rightarrow
	r\geq \ceil{\frac{\log\left(\frac{k}{k-1}\right)}{\log(1+\eps/\psi)}} - 1 = \eta(k)
\]
Recall that the maximum recursion budget of $\alg{TRAPONLY}$ is $\floor{\frac{\delta\cdot(1 + \alpha\nu)}{\alpha + \beta}} - 1$ (cf Corollary~\ref{cor:TRAPONLY-Efficient-Complexities_scaling-IMPLIES-stretch}). A sufficient condition for guaranteeing the required budget for having stretch $1+k\cdot\eps$ is thus the following:
\[
	\frac{\delta\cdot(1 + \alpha\nu)}{\alpha + \beta} - 2 \geq \eta(k)
	\Rightarrow
	\fbox{$\alpha\leq\frac{\delta - [\eta(k) + 2]\cdot\beta}{\eta(k) + 2 -\delta\nu}$}
\]
\end{proof}

\subsection{Analysis of the $\alg{FLAT}$ oracle.}
\label{section:FLAT-detailed-analysis}

We provide in this section the proof of the corollaries that showcase appropriate parameter-tunings for achieving sublinear query time and subquadratic preprocessing requirements.

\begin{proof}[Proof of Corollary~\ref{cor:FLAT-Parameter-Tuning_scaling-IMPLIES-stretch}]

Again we set $\omega = \frac{\delta}{r+1}$, in order to achieve (sublinear) query time $n^{\delta}$.
We then set
\(
	\frac{2\theta}{\nu} = 1 + \alpha\cdot(1-\theta)
	\Leftrightarrow
	\theta = \frac{1 + \alpha}{2/\nu + \alpha}\,.
\)
Observe that, for this value of $\theta$, it is guaranteed that $\frac{\nu}{\theta} = \frac{2 + \alpha\nu}{1 + \alpha} < 3$, as was assumed in the proof of Lemma~\ref{lemma:period-vs-network-size}.
To guarantee subquadratic preprocessing requirements $n^{2 - \beta + \order{1}}$, we must assure that:
\begin{eqnarray*}
	&& 2-\beta \geq 2 - \omega + \alpha\cdot(1-\theta)
	\Leftrightarrow
	\beta \leq \omega - \alpha\cdot(1-\theta)
	\Leftrightarrow
	\beta \leq \frac{\delta}{r+1} - \alpha\cdot\frac{2/\nu-1}{2/\nu + \alpha}
	\\
	\Leftrightarrow
	&& \fbox{$r \leq \frac{\delta}{\alpha}\cdot\frac{\frac{2}{\nu} + \alpha}{\frac{\beta}{\alpha} \left(\frac{2}{\nu} + \alpha\right) + \left(\frac{2}{\nu}-1\right)} - 1$}
\end{eqnarray*}
The approximation guarantee is, again, the one provided by the $\alg{RQA}$ query algorithm.
\end{proof}

\begin{proof}[Proof of Corollary~\ref{cor:FLAT-Efficient-Complexities_stretch-IMPLIES-scaling}]

Recall that, in order to assure a stretch factor $1+k\cdot\eps$, we must set the recursion budget $r \geq \ceil{\frac{\log\left(\frac{k}{k-1}\right)}{\log\log(1+\eps / \psi)}} - 1 = \eta(k)$, as in the proof of Corollary~\ref{cor:TRAPONLY-Efficient-Complexities_stretch-IMPLIES-scaling}.

From Theorem~\ref{thm:FLAT-Complexities} we also have an upper bound on the recursion budget. Thus, as $\beta\downarrow 0$, a sufficient condition for the recursion budget, so that the required stretch is achieved is the following:
\[
	\eta(k) + 2 
	\leq \frac{\delta}{\alpha} \cdot \frac{2/\nu + \alpha}{2/\nu - 1}
	\Rightarrow
	\fbox{$\alpha \leq \frac{2\delta}{[\eta(k)+2]\cdot(2-\nu) - \delta\nu}$}
\]
It is now straightforward that the expected query-time is indeed $n^{\delta+\order{1}}$, whereas the preprocessing requirements are $n^{2-\order{1}}$ since we consider a very small value for $\beta$.
\end{proof}

\section{Analysis of the $\alg{HORN}$ oracle}
\label{section:HORN-detailed-analysis}


%
The construction of the travel-time summaries for $\alg{HORN}$ is based on
the $\alg{FLAT}$ ($\alg{BIS+TRAP}$) preprocessing scenario. The queries are
served by the $\alg{HQA}$ query algorithm.
The oracle exploits two fundamental properties:
\begin{itemize}

\item[(i)] the approximation guarantee of a path via some landmark $\ell$ strongly depends on the relative distance of the landmark from the origin $o$, compared to the distance of the destination $d$ from $o$;

\item[(ii)] given that the expected distance of a level-$i$ from the origin is roughly $\frac{1}{\rho_i}$, it is rather unlikely that the first level-$i$ landmark will appear too early or too late (i.e., outside a sufficiently wide ring-stripe around $(o,t_o)$). 

\end{itemize}

Property (i) 
is exploited by the (ESC) criterion in order to handle the exceptional case where a higher-level landmark (which also
happens to be informed) appears before the first informed landmark from the appropriate level.
Property (ii) is actually an event that holds with high probability, as is shown in the detailed analysis of the oracle, and is exploited by the (ALH) criterion. Therefore, the event that an informed landmark appears which is also at the right distance, whereas the previously discovered landmarks (most likely of smaller levels) were uninformed, reveals an asymptotic bound for the unknown Dijkstra-Rank $\G[o,d](t_o)$ of the destination.

We now provide a sequence of lemmata which will eventually be used in the proof our main technical result, concerning the complexities of $\alg{HORN}$ mentioned in Theorem~\ref{thm:HORN-Complexities} (cf. Section~\ref{section:HORN-oracle}).

We start with an upper bound on the free-flow distance of a discovered landmark $\ell_o$ from $d$. At this point we do not require that $d\in C[\ell_o]$.
\begin{lemma}
\label{lemma:upper-bound-on-free-flow-distance}
	Let $\ell_o\in L$ be a landmark discovered by $\alg{HQA}$. Then it holds that
\(
	\underline{D}[\ell_o,d] \leq \frac{\zeta}{1-\La_{\min}}\cdot D[o,\ell_o](t_o) + D[o,d](t_o)\,.
\)
\end{lemma}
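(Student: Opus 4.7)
The strategy is to decompose the free-flow distance through $o$ via the triangle inequality in the static free-flow metric, and then to translate each piece into a time-dependent quantity. First, $\underline{D}[\ell_o,d] \leq \underline{D}[\ell_o,o] + \underline{D}[o,d]$. The term $\underline{D}[o,d]$ is easy: for any path $P$ and any departure time $t$, the time-dependent cost $D[P](t)$ dominates the static free-flow cost $\underline{D}[P]$ arc by arc, since $\underline{D}[a] = \min_t D[a](t)$. Specializing $P$ to a minimum-travel-time $od$-path at $t_o$ gives $\underline{D}[o,d] \leq \underline{D}[P] \leq D[P](t_o) = D[o,d](t_o)$, which handles the additive term.

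The main work is in bounding $\underline{D}[\ell_o,o]$ by $\frac{\zeta}{1-\La_{\min}}\cdot D[o,\ell_o](t_o)$, and this is where I would invoke the FIFO property sharply. Set $R := D[o,\ell_o](t_o)$, so that $Arr[o,\ell_o](t_o) = t_o + R$. By Assumption~\ref{assumption:Bounded-Travel-Time-Slopes}, the minimum-travel-time function $D[o,\ell_o]$ has slope in $[-\La_{\min},\La_{\max}]$, hence $Arr[o,\ell_o]$ is continuous, strictly increasing with slope at least $1-\La_{\min}>0$, and (by periodicity of the instance) surjective onto a full period. Therefore there exists a departure time $t^{**}$ with $Arr[o,\ell_o](t^{**}) = t_o$, i.e., leaving $o$ at $t^{**}$ causes us to arrive at $\ell_o$ exactly at $t_o$. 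The slope bound on $Arr[o,\ell_o]$ yields
\[
R \;=\; Arr[o,\ell_o](t_o) - Arr[o,\ell_o](t^{**}) \;\geq\; (1-\La_{\min})\,(t_o - t^{**}),
\]
so $D[o,\ell_o](t^{**}) = t_o - t^{**} \leq \tfrac{R}{1-\La_{\min}}$. Now apply Assumption~\ref{assumption:Bounded-Opposite-Trips} at the specific departure time $t^{**}$ to get $D[\ell_o,o](t^{**}) \leq \zeta\,D[o,\ell_o](t^{**}) \leq \tfrac{\zeta R}{1-\La_{\min}}$, and finally use the free-flow-vs-time-dependent inequality again (this time on the path realizing $D[\ell_o,o](t^{**})$) to conclude $\underline{D}[\ell_o,o] \leq D[\ell_o,o](t^{**}) \leq \tfrac{\zeta}{1-\La_{\min}}\cdot D[o,\ell_o](t_o)$. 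Adding this to the bound on $\underline{D}[o,d]$ completes the argument.

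The expected obstacle is not any single inequality but the conceptual step of \emph{inverting} the arrival-time function $Arr[o,\ell_o]$: a naive application of Bounded Opposite Trips at $t_o$ or at $t_o+R$ only yields the weaker factor $\zeta$ or $\zeta(1+\La_{\max})$. The factor $\tfrac{1}{1-\La_{\min}}$ emerges precisely because the inverse of an increasing Lipschitz function with slope $\geq 1-\La_{\min}$ has slope $\leq \tfrac{1}{1-\La_{\min}}$, and this is exactly what lets us translate ``arrival-at-$\ell_o$-by-$t_o$'' into a travel-time upper bound that can then be fed into Bounded Opposite Trips at the right instant.
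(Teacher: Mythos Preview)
Your proof is correct. Both your argument and the paper's hinge on the same three ingredients---inverting an arrival-time function via the slope bound of Assumption~\ref{assumption:Bounded-Travel-Time-Slopes}, applying Assumption~\ref{assumption:Bounded-Opposite-Trips} at the resulting instant, and a triangle inequality through $o$---but they are assembled differently. You work in the static free-flow metric from the start, splitting $\underline{D}[\ell_o,d] \leq \underline{D}[\ell_o,o] + \underline{D}[o,d]$ and bounding each piece separately; for the first piece you invert $Arr[o,\ell_o]$ to find a departure time $t^{**}$ from $o$ arriving at $\ell_o$ exactly at $t_o$, and then apply Bounded Opposite Trips at $t^{**}$. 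The paper instead inverts $Arr[\ell_o,o]$ (finding a departure time $t_o-x_o$ from $\ell_o$ arriving at $o$ exactly at $t_o$), applies Bounded Opposite Trips at $t_o$ itself, and uses the time-dependent triangle inequality $D[\ell_o,d](t_o-x_o)\le D[\ell_o,o](t_o-x_o)+D[o,d](t_o)$, passing to the free-flow metric only at the final step. The paper's route thus yields the slightly stronger intermediate bound on the time-dependent quantity $D[\ell_o,d](t_o-x_o)$, though only the free-flow consequence is stated in the lemma; your decomposition is arguably cleaner in that the two additive terms are handled completely independently.
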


\begin{proof}[Proof of Lemma~\ref{lemma:upper-bound-on-free-flow-distance}]

	By Assumption~\ref{assumption:Bounded-Opposite-Trips} we know that:
\begin{equation}
	\label{eq:revert-travel-time-from-landmark-to-origin}
	D[\ell_o,o](t_o) \leq \zeta\cdot D[o,\ell_o](t_o)
\end{equation}
By Assumption~\ref{assumption:Bounded-Travel-Time-Slopes} we also know that:
\begin{equation}
	\label{eq:upper-bound-travel-time-from-landmark-to-origin}
	\begin{array}{rrl}
	\forall x>0, & - \La_{\min}\cdot x \leq & D[\ell_o,o](t_o) - D[\ell_o,o](t_o - x)
	\\
	\Rightarrow & D[\ell_o,o](t_o - x) \leq & D[\ell_o,o](t_o) + \La_{\min}\cdot x
	\end{array}
\end{equation}
We look for a particular departure-time $t_o - x_o$, and the corresponding minimum-travel-time $D[\ell_o,o](t_o-x_o)$, so as to be at the origin $o$ exactly at time $t_o$. That is:
\begin{eqnarray}
	\nonumber
	t_o &=& t_o-x_o + D[\ell_o,o](t_o-x_o)
	\\
	\nonumber
	\Rightarrow~
	x_o &=& D[\ell_o,o](t_o-x_o)
	\DueTo{\leq}{(\ref{eq:upper-bound-travel-time-from-landmark-to-origin})} D[\ell_o,o](t_o) + \La_{\min}\cdot x_o
	\\
	\label{eq:upper-bound-on-landmark-to-origin-distance}
	\Rightarrow~
	x_o &=& \leq \frac{D[\ell_o,o](t_o)}{1-\La_{\min}}
	\DueTo{\leq}{(\ref{eq:revert-travel-time-from-landmark-to-origin})}
		\frac{\zeta}{1-\La_{\min}}\cdot D[o,\ell_o](t_o)
\end{eqnarray}
Finally, we upper-bound the free-flow distance of $\ell_o$ from $d$ by exploiting the triangle inequality:
\begin{eqnarray*}
	\underline{D}[\ell_o,d]
	&\leq& D[\ell_o,d](t_o-x_o) \leq D[\ell_o,o](t_o-x_o) + D[o,d](t_o)
	\\
	&\DueTo{\leq}{(\ref{eq:upper-bound-on-landmark-to-origin-distance})}&
	\frac{\zeta}{1-\La_{\min}}\cdot D[o,\ell_o](t_o) + D[o,d](t_o)
\end{eqnarray*}
which is exactly the desired inequality.
\end{proof}

The following lemma provides an upper bound on the approximation guarantee of $\alg{HQA}$, when an $(ESC)$-termination occurs.
\begin{lemma}
	\label{lemma:early-stopping-criterion}
	Assume that $\ell_o\in L$ is informed (i.e., $d\in C[\ell_o]$) and settled by the initial $\alg{TDD}$ ball that $\alg{HQA}$ grows from $(o,t_o)$. Then, for any given value $\varphi\geq 1$, if $\alg{HQA}$ terminated due to occurrence of (ESC), the reported travel-time is at most a $\left(1+\eps+\frac{\psi}{\varphi\cdot(r+1)}\right)$-approximation of the minimum travel-time $R_d = D[o,d](t_o)$.
\end{lemma}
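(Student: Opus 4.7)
The plan is to derive the claimed approximation guarantee by combining two opposing inequalities on $\overline{R}_d := \overline{\De}[\ell_o,d](t_o + R_o)$, namely the standard landmark-based upper-bound of \cite{2014-Kontogiannis-Zaroliagis} and the (ESC) firing condition, then solving the resulting inequality for $R_o := D[o,\ell_o](t_o)$ in terms of $R_d = D[o,d](t_o)$. Note that the value reported by $\alg{HQA}$ upon (ESC)-termination is exactly $R_o + \overline{R}_d$, so the claim reduces to bounding this sum by $\left(1+\eps+\frac{\psi}{\varphi(r+1)}\right) R_d$.

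For the upper-bound ingredient, since $\ell_o$ is informed about $d$ by hypothesis, the summary $\overline{\De}[\ell_o, d]$ is a $(1+\eps)$-upper-approximation of $D[\ell_o, d]$. I would then invoke (essentially verbatim) the middle inequality of the FCA analysis from \cite{2014-Kontogiannis-Zaroliagis} (cf.\ Theorem~\ref{thm:BIS+RQA-Complexities}), whose derivation uses only Assumptions~\ref{assumption:Bounded-Travel-Time-Slopes} and~\ref{assumption:Bounded-Opposite-Trips} and the summary quality, to obtain
\[
	R_o + \overline{R}_d \;\leq\; (1+\eps)\cdot R_d + \psi\cdot R_o\,,
\]
or equivalently $\overline{R}_d \leq (1+\eps)R_d + (\psi - 1)R_o$.

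For the lower-bound ingredient, the (ESC) firing condition states precisely that
\(
	\overline{R}_d \;\geq\; \bigl[(1+\eps)\varphi(r+1) + \psi - 1\bigr]\cdot R_o\,.
\)
Chaining with the previous upper bound, the two copies of $(\psi-1)R_o$ cancel, yielding $(1+\eps)\varphi(r+1)\cdot R_o \leq (1+\eps)\cdot R_d$, whence $R_o \leq R_d / [\varphi(r+1)]$. Plugging this back into the FCA bound gives
\[
	R_o + \overline{R}_d
	\;\leq\; (1+\eps)\cdot R_d + \psi\cdot R_o
	\;\leq\; \left(1+\eps+\frac{\psi}{\varphi(r+1)}\right)\cdot R_d\,,
\]
which is the claim.

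The proof is a purely algebraic manipulation of two inequalities; the only delicate step is to make sure that the FCA bound from~\cite{2014-Kontogiannis-Zaroliagis} can be invoked here, where $\ell_o$ is the first \emph{informed} landmark satisfying the ESC inequality rather than the closest landmark to $o$ in the sense of $\alg{FCA}$. Inspecting the original derivation shows that no closest-landmark assumption is used for the middle inequality; only the $(1+\eps)$-approximation quality of $\overline{\De}[\ell_o,d]$ and the two bounded-metric assumptions enter the constant $\psi$. Hence the application is legitimate and the proof goes through.
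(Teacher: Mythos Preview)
Your proof is correct and follows essentially the same route as the paper's: both combine the FCA middle inequality $\overline{R}_d \leq (1+\eps)R_d + (\psi-1)R_o$ with the (ESC) firing condition to cancel the $(\psi-1)R_o$ terms and obtain $R_o \leq R_d/[\varphi(r+1)]$, then substitute back into the FCA bound. Your explicit remark that the FCA middle inequality does not rely on $\ell_o$ being the \emph{closest} landmark (only on the summary quality and Assumptions~\ref{assumption:Bounded-Travel-Time-Slopes}--\ref{assumption:Bounded-Opposite-Trips}) is a point the paper leaves implicit, so your write-up is in fact slightly more careful on that front.
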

%
\begin{proof}[Proof of Lemma~\ref{lemma:early-stopping-criterion}]
	Let $D[o,\ell_o](t_o) = R_o$. Then, $R_d \leq \overline{\De}[o,d](t_o) = R_o + \overline{\De}[\ell_o,d](t_o+R_o)$.
By Theorem~\ref{thm:BIS+RQA-Complexities} (cf.~also the analysis of $\alg{FCA}$ in \cite{2014-Kontogiannis-Zaroliagis})
we can easily deduce that:
\begin{eqnarray*}
	&& \frac{\overline{\De}[\ell_o,d](t_o+R_o)}{R_o} \leq (1+\eps)\frac{R_d}{R_o} + \psi - 1
	\\
	&\DueTo{\Rightarrow}{\mbox{\tiny $(ESC)$-termination}}
	& (1+\eps)\cdot\varphi\cdot(r+1) + \psi - 1 \leq \frac{\overline{\Delta}[\ell_o,d](t_o+R_o)}{R_o} \leq (1+\eps)\frac{R_d}{R_o} + \psi - 1
	\\
	& \Rightarrow
	& \frac{R_d}{R_o} \geq \varphi\cdot(r+1)
	\\
	& \Rightarrow
	& 1 + \eps + \frac{\psi R_o}{R_d} \leq 1+\eps + \frac{\psi}{\varphi\cdot(r+1)}
\end{eqnarray*}
Since 	$1 + \eps + \frac{\psi R_o}{R_d}$ is an upper bound on the approximation guarantee provided by $\alg{FCA}$ (cf. Theorem~\ref{thm:BIS+RQA-Complexities}), which is indeed simulated by $\alg{HQA}$ until the determination of the appropriate level in the hierarchy, we conclude that the eventual solution that will be provided by $\alg{HQA}$ is at least as good, since the $(ESC)$-termination returns the best approximate solution seen so far via an informed landmark, among which is also the one that goes via $\ell_o$.
\end{proof}

We proceed now by studying the first appearance of a level-$i$ landmark within the unique outgoing ball from $(o,t_o)$. The next lemma shows that, with high probability, this first appearance of a level-$i$ landmark will take place in the following \term{ring} for level-$i$:
\begin{eqnarray*}
	RING[o;i](t_o)
	&:=& B\left[o~;~(N_i)^{\delta / (r+1) }\cdot \ln(n)\right]\left(t_o\right)
	~ \setminus ~ B\left[o~;~\frac{(N_i)^{\delta / (r+1) }}{\ln(n)}\right]\left(t_o\right)
	\\
	&=& B\left[o~;~\frac{\ln(n)}{\rho_i}\right]\left(t_o\right)
	~ \setminus ~ B\left[o~;~\frac{1}{\rho_i\ln(n)}\right]\left(t_o\right)
\end{eqnarray*}
since $N_i = n^{(\g^i-1)/\g^i}$ and
$\rho_i = n^{-\delta\cdot(\g^i-1)/[(r+1)\g^i]} = N_i^{-\delta/(r+1)}$.
\begin{lemma}
\label{lemma:probability-of-landmark-appearing-in-ring}
$\forall i\in[k]$, there is at least one level-$i$ landmark in $RING[o;i](t_o)$, with probability $1 - \Order{\frac{1}{n}}$.
\end{lemma}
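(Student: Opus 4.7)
My plan is to argue directly from the definition of the ring as a difference of two Dijkstra-rank balls, together with the independence of the landmark sampling.

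First, I would unfold the ring in terms of Dijkstra-ranks. Since $\rho_i = N_i^{-\delta/(r+1)}$, the definition rewrites as
\[
	RING[o;i](t_o) = B[o; \lceil \ln(n)/\rho_i \rceil](t_o) \setminus B[o; \lceil 1/(\rho_i \ln(n)) \rceil](t_o),
\]
so the ring contains exactly $|RING[o;i](t_o)| = \lceil \ln(n)/\rho_i \rceil - \lceil 1/(\rho_i \ln(n)) \rceil$ vertices of $V$. For $n$ sufficiently large this is at least $\frac{\ln(n)}{\rho_i}\bigl(1 - O(1/\ln^2(n))\bigr)$, which for the rest of the argument I would simply lower-bound as $\geq \frac{\ln(n)}{\rho_i}(1 - o(1))$.

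Next, I would invoke the construction of $L_i$: each vertex of $V$ is placed in $L_i$ independently with probability $\rho_i$. In particular, the events ``$v \in L_i$'', restricted to $v \in RING[o;i](t_o)$, are mutually independent. Hence the probability that the ring contains \emph{no} level-$i$ landmark equals $(1-\rho_i)^{|RING[o;i](t_o)|}$. Using the standard bound $1-x \leq e^{-x}$ together with the lower bound on $|RING[o;i](t_o)|$,
\[
	(1-\rho_i)^{|RING[o;i](t_o)|}
	\leq \exp\!\bigl(-\rho_i\cdot |RING[o;i](t_o)|\bigr)
	\leq \exp\!\bigl(-\ln(n)(1-o(1))\bigr)
	= n^{-1+o(1)} \in O(1/n).
\]
Taking the complementary event gives the claim.

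The argument is essentially a single application of an independence-based concentration inequality; there is no serious obstacle. The only care needed is in step one, namely verifying that the outer ball strictly contains the inner one (which holds because $\ln(n)/\rho_i > 1/(\rho_i \ln(n))$ for all $n \geq 3$) so that the ring is well defined, and that the ceiling-induced losses in $|RING[o;i](t_o)|$ are absorbed into the $1-o(1)$ factor and do not spoil the final $O(1/n)$ bound. Note also that the lemma bounds the failure probability for a \emph{single} level $i$; when later combined for all $i \in [k]$ with $k \in O(\log\log n)$ via a union bound, the overall failure probability remains $O((\log\log n)/n) = O(1/n^{1-o(1)})$, as used elsewhere in the analysis of $\alg{HORN}$.
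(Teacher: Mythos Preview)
Your argument is correct and essentially identical to the paper's: both compute $|RING[o;i](t_o)| \approx \ln(n)/\rho_i - 1/(\rho_i\ln n)$, use independence to bound the miss probability by $(1-\rho_i)^{|RING|} \le \exp(-\rho_i\,|RING|)$, and obtain $\exp(-\ln n + 1/\ln n) = e^{1/\ln n}/n \in O(1/n)$. One stylistic caveat: the step ``$n^{-1+o(1)} \in O(1/n)$'' is not valid for an arbitrary $o(1)$ term; it holds here only because your specific $o(1)$ is $O(1/\ln^2 n)$, so that $n^{o(1)} = e^{O(1/\ln n)}$ is bounded --- it would be cleaner to keep the explicit $1/\ln n$ term in the exponent (as the paper does) rather than pass through the generic $o(1)$ notation.
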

\begin{proof}[Proof of Lemma~\ref{lemma:probability-of-landmark-appearing-in-ring}]

Consider any subset of vertices $S\subseteq V$, of size $s = |S|\in \naturals$. The probability that none of the vertices in $S$ is a level-$i$ landmark (i.e., from $L_i$) is $(1-\rho_i)^s \leq \exp(-s\rho_i)$.

Observe now that, for $i\in[k]$, $s_i = |RING[o;i](t_o)| = \frac{\ln(n)}{\rho_i} - \frac{1}{\rho_i\ln(n)}$. Thus, we conclude that:
\(
	\Prob{ |RING[o;i](t_o) \intersection L_i| = 0 }
	\leq \exp(-s_i\cdot\rho_i) = \frac{\exp\left(\frac{1}{\ln(n)}\right)}{n} \in \Order{\frac{1}{n}}\,.
\)
\end{proof}
The next lemma states that, given the actual Dijkstra-Rank $\Gamma[o,d](t_o)$ that we seek for, the level-$i$ landmark (if any) that is settled within $RING[o;i](t_o)$ is indeed informed about $d$, for all $i$ such that $N_i \geq \Gamma[o,d](t_o)$.
\begin{lemma}
	\label{lemma:HQA-Success-In-Appropriate-Level}
	For $i\in [k]$, let $\Gamma[o,d](t_o)\leq N_i = n^\frac{\g^{i}-1}{\g^{i}}$. Assuming that $(ALH)$-termination occurred, the first level-$i$ landmark $\ell_{i,o}\in L_i\intersection RING[o;i](t_o)$ that is settled by the initial $\alg{TDD}$ ball grown by $\alg{HQA}$, has $d\in C[\ell_{i,o}]$.
\end{lemma}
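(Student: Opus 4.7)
The plan is to show that $d$ lies within the smallest free-flow ball around $\ell_{i,o}$ of size $c_i = N_i \cdot n^{\xi_i}$, which is exactly $C[\ell_{i,o}]$. To do this, I will bound the free-flow Dijkstra-Rank $\underline{\Gamma}[\ell_{i,o},d]$ from above and compare it against $c_i$.

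First, I will invoke Lemma~\ref{lemma:upper-bound-on-free-flow-distance} to obtain
\(
	\underline{D}[\ell_{i,o},d] \leq \tfrac{\zeta}{1-\Lambda_{\min}} \cdot D[o,\ell_{i,o}](t_o) + D[o,d](t_o).
\)
Next, I will control both travel-times on the right-hand side through the Dijkstra-Rank, using Assumption~\ref{assumption:Travel-Time-vs-Dijkstra-Rank}(ii). For $D[o,\ell_{i,o}](t_o)$, the (ALH)-criterion guarantees $\Gamma[o,\ell_{i,o}](t_o) \leq \ln(n) \cdot N_i^{\delta/(r+1)} \leq \ln(n)\cdot N_i$, yielding $D[o,\ell_{i,o}](t_o) \leq g(n)\cdot (\ln(n))^{1/\lambda} \cdot N_i^{1/\lambda}$. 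For $D[o,d](t_o)$, the hypothesis $\Gamma[o,d](t_o) \leq N_i$ similarly yields $D[o,d](t_o) \leq g(n)\cdot N_i^{1/\lambda}$. Combining,
\(
	\underline{D}[\ell_{i,o},d] \leq g(n)\cdot N_i^{1/\lambda} \cdot \bigl(1 + \tfrac{\zeta}{1-\Lambda_{\min}}\cdot (\ln n)^{1/\lambda}\bigr).
\)

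Then, by the free-flow analog of Assumption~\ref{assumption:Travel-Time-vs-Dijkstra-Rank}(i), the size of the free-flow ball around $\ell_{i,o}$ of radius $\underline{D}[\ell_{i,o},d]$ is at most
\(
	\underline{\Gamma}[\ell_{i,o},d] \leq f(n)\cdot \bigl(\underline{D}[\ell_{i,o},d]\bigr)^{\lambda}
	\leq f(n)\cdot g(n)^{\lambda}\cdot \bigl(1 + \tfrac{\zeta}{1-\Lambda_{\min}}\bigr)^{\lambda}\cdot (\ln n)\cdot N_i.
\)
To conclude, I must show that this upper bound is no larger than $c_i = N_i \cdot n^{\xi_i}$, i.e., that
\(
	f(n)\cdot g(n)^{\lambda}\cdot \bigl(1 + \tfrac{\zeta}{1-\Lambda_{\min}}\bigr)^{\lambda}\cdot \ln n \leq n^{\xi_i}.
\)
Taking logarithms and using $f(n),g(n)\in\polylog(n)$, the left-hand side becomes $O\bigl((1+\lambda)\log\log n + \lambda\log\bigl(1+\tfrac{\zeta}{1-\Lambda_{\min}}\bigr)\bigr)$, which is precisely the quantity that the hypothesis $\xi_i > \bigl[(1+\lambda)\log\log(n) + \lambda\log\bigl(1+\tfrac{\zeta}{1-\Lambda_{\min}}\bigr)\bigr]/\log(n)$ of Theorem~\ref{thm:HORN-Complexities} was tailored to dominate. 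Thus $d$ belongs to the size-$c_i$ free-flow ball centered at $\ell_{i,o}$, and therefore $d \in C[\ell_{i,o}]$.

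The main obstacle I anticipate is bookkeeping the polylogarithmic slack: verifying that the calibration of $\xi_i$ in Theorem~\ref{thm:HORN-Complexities} really absorbs \emph{all} the factors $f(n), g(n)^{\lambda}$, the $(\ln n)^{1/\lambda}$ contributed by the (ALH) cap on $\Gamma[o,\ell_{i,o}](t_o)$, and the constant $\zeta/(1-\Lambda_{\min})$ raised to the $\lambda$-th power. Since $\lambda \in o(\log n/\log\log n)$ by Assumption~\ref{assumption:Travel-Time-vs-Dijkstra-Rank}, each of these factors turns into an additive $o(\log n)$ term after taking logarithms, and the explicit lower bound on $\xi_i$ is designed to cover their sum.
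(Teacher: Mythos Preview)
Your proposal is correct and follows essentially the same route as the paper: invoke Lemma~\ref{lemma:upper-bound-on-free-flow-distance}, convert travel-times to Dijkstra-Ranks via Assumption~\ref{assumption:Travel-Time-vs-Dijkstra-Rank}, and verify that the resulting bound is at most $c_i = N_i\cdot n^{\xi_i}$ thanks to the calibration of $\xi_i$.

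There is one small difference worth noting. You bound $D[o,\ell_{i,o}](t_o)$ separately using the upper cap of the (ALH) ring, which injects an extra factor $(\ln n)^{1/\lambda}$ into $\underline{D}[\ell_{i,o},d]$ (and hence an extra $\ln n$ into $\underline{\Gamma}[\ell_{i,o},d]$). The paper instead observes that, because (ALH)-termination occurred, $d$ was \emph{not} settled before $\ell_{i,o}$, so $R_d := D[o,d](t_o) > R_{i,o} := D[o,\ell_{i,o}](t_o)$; this lets it fold the $R_{i,o}$ term directly into $R_d$, obtaining the cleaner bound $\underline{D}[\ell_{i,o},d] < \bigl(1+\tfrac{\zeta}{1-\Lambda_{\min}}\bigr)\,R_d$ without the logarithmic surplus. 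Both arguments close the gap with the same $\xi_i$ hypothesis, so this is a cosmetic rather than a substantive difference.
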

%
\begin{proof}[Proof of Lemma~\ref{lemma:HQA-Success-In-Appropriate-Level}]

Let $R_{i,o} = D[o,\ell_{i,o}](t_o)$ and recall that $R_d = D[o,d](t_o)$. Assume also that $R_d > R_{i,o}$, because otherwise an exact solution will be anyway discovered before $\ell_{i,o}$ is settled. Then we have, by Lemma~\ref{lemma:upper-bound-on-free-flow-distance}:
\begin{eqnarray*}
	\underline{D}[\ell_{i,o},d]
	& \leq &	\frac{\zeta}{1-\La_{\min}} R_{i,o} + R_d
	\\
	& < &		\left(1+\frac{\zeta}{1-\La_{\min}}\right)\cdot R_d
	\\
	&\DueTo{\leq}{\mbox{\tiny As.}~\ref{assumption:Travel-Time-vs-Dijkstra-Rank}}&
	\left(1+\frac{\zeta}{1-\La_{\min}}\right)\cdot g(n)\cdot n^{(\g^i-1) / (\la \g^i)}
	\\
	\Rightarrow
	\underline{\Gamma}[\ell_{i,o},d]
	& \DueTo{\leq}{\mbox{\tiny As.}~\ref{assumption:Travel-Time-vs-Dijkstra-Rank}}&  f(n)\cdot (\underline{D}[\ell_{i,o},d])^{\la}
	\leq f(n)\cdot g^{\la}(n)\cdot\left(1 + \frac{\zeta}{1-\La_{\min}}\right)^{\la} \cdot n^{(\g^i-1) / \g^i}
	\\
	&=& 	n^{(1+\la)\Order{\frac{\log\log(n) }{ \log(n) }}}
			\cdot n^{\la\log\left(1+\frac{\zeta}{1-\La_{\min}}\right) / \log(n)}
			\cdot n^{(\g^i-1) / \g^i}
	\\
	&=& 	n^{(\g^i-1) / \g^i + \order{1}}
	\leq	n^{(\g^i-1) / \g^i + \xi_i}
\end{eqnarray*}
for any
	\(
		\xi_i
		\geq \frac{(1+\la)\cdot \log\log(n) + \la\log\left(1 + \frac{\zeta}{1-\La_{\min}}\right) }{\log(n)}
	\)
which is certainly true since $\la\in\order{\frac{\log(n)}{\log\log(n)}}$ and $f(n),g(n)\in \polylog(n) = n^{\Order{\frac{\log\log(n)}{\log(n)}}}$. The last inequality implies that $d\in C[\ell_{i,o}]$.
\end{proof}
The next lemma states that at no level-$j$ in the hierarchy earlier than the \term{appropriate level} $i^*$ corresponding to $\Gamma[o,d](t_o)$ (cf. Theorem~\ref{thm:HORN-Complexities}), may $\alg{HQA}$ find a landmark which contains $d$ in its coverage (and which would then cause an incorrect ``guess'' of the appropriate level for the query at hand), provided that no $(ESC)$-termination occurred.
\begin{lemma}
	\label{lemma:HQA-Failure-In-Previous-Levels}
	For $i\geq 1$, assume that $\Gamma[o,d](t_o) > N_i = n^{(\g^i-1) / \g^i}$ and, while growing the ball from $(o,t_o)$, no (ESC) occurs. Then, $\forall 1\leq j\leq i$ no level-$j$ landmark $\ell$ in $RING[o;j](t_o)$ contains $d$ in its coverage: $d\notin \union_{1\leq j \leq i} \union_{\ell\in L_j\intersection RING[o;j](t_o)} C[\ell]$.
\end{lemma}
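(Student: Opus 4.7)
\textbf{Proof plan for Lemma~\ref{lemma:HQA-Failure-In-Previous-Levels}.}
I will proceed by contradiction. Suppose that for some level $j \in \{1,\ldots,i\}$ there exists $\ell \in L_j \cap RING[o;j](t_o)$ with $d \in C[\ell]$. Because the initial $\alg{TDD}$ ball from $(o,t_o)$ explores vertices in Dijkstra-Rank order, and $\ell$ lies within rank $\ln(n)/\rho_j$ of $(o,t_o)$, $\ell$ is settled by $\alg{HQA}$ before the end of the ring-scan; at the moment of its settling, the (ESC) criterion is evaluated against $\ell$ since $\ell$ is informed ($d \in C[\ell]$). By hypothesis (ESC) does not trigger, so the ratio condition must fail, i.e.
\(
  \overline{\De}[\ell,d]\bigl(t_o + D[o,\ell](t_o)\bigr)
  < \bigl[(1+\eps)\varphi(r+1) + \psi - 1\bigr]\cdot D[o,\ell](t_o).
\)

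Combining this with the triangle inequality $D[o,d](t_o) \le D[o,\ell](t_o) + D[\ell,d](t_o+D[o,\ell](t_o)) \le D[o,\ell](t_o) + \overline{\De}[\ell,d](t_o+D[o,\ell](t_o))$ yields
\(
  D[o,d](t_o) < \bigl[(1+\eps)\varphi(r+1) + \psi\bigr]\cdot D[o,\ell](t_o).
\)
Next I translate the rank-ring hypothesis into a travel-time bound on $D[o,\ell](t_o)$: since $\Gamma[o,\ell](t_o) \le \ln(n)/\rho_j = \ln(n)\cdot N_j^{\delta/(r+1)}$, Assumption~\ref{assumption:Travel-Time-vs-Dijkstra-Rank}(ii) gives $D[o,\ell](t_o) \le g(n)\cdot (\ln(n)\, N_j^{\delta/(r+1)})^{1/\la}$. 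Substituting back and applying Assumption~\ref{assumption:Travel-Time-vs-Dijkstra-Rank}(i) to $D[o,d](t_o)$, I obtain
\(
  \Gamma[o,d](t_o) \le f(n)\cdot (D[o,d](t_o))^{\la}
  \le f(n)\,g^{\la}(n)\,\bigl[(1+\eps)\varphi(r+1)+\psi\bigr]^{\la}\cdot \ln(n)\cdot N_j^{\delta/(r+1)}.
\)

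Since $\la \in \order{\log n / \log\log n}$ and $f(n),g(n) \in \polylog(n)$, the leading prefactor $f(n)\,g^{\la}(n)\bigl[(1+\eps)\varphi(r+1)+\psi\bigr]^{\la}\ln(n)$ absorbs into $n^{\order{1}}$, so $\Gamma[o,d](t_o) \le n^{\order{1}}\cdot N_j^{\delta/(r+1)}$. Using $j \le i$ (hence $N_j \le N_i$) and $\delta/(r+1) < 1$, the exponent becomes $\order{1} + \tfrac{\delta}{r+1}\cdot\tfrac{\g^i-1}{\g^i}$, which is strictly smaller than $(\g^i-1)/\g^i$ for all sufficiently large $n$. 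Thus $\Gamma[o,d](t_o) < N_i$, contradicting the hypothesis $\Gamma[o,d](t_o) > N_i$. This completes the proof.

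The main obstacle, and the reason the argument is not entirely trivial, is the loss of a factor $\la$ in the exponent when converting between travel-times and Dijkstra-Ranks via Assumption~\ref{assumption:Travel-Time-vs-Dijkstra-Rank}: one must carefully verify that the slack introduced by $\la$, $f(n)$, $g(n)$ and the multiplicative constant $(1+\eps)\varphi(r+1)+\psi$ (all of which compose into an $n^{\order{1}}$ factor) does not overwhelm the gap $(1-\delta/(r+1))\cdot(\g^i-1)/\g^i$ between the exponents $\delta/(r+1)\cdot(\g^i-1)/\g^i$ and $(\g^i-1)/\g^i$. The hypothesis $\la \in \order{\sqrt{\log n/\log\log n}}$ of Theorem~\ref{thm:HORN-Complexities} (and specifically $\la \in \order{\log n / \log \log n}$) is precisely what makes this slack negligible.
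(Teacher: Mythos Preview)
Your proof is correct and takes a genuinely different route from the paper. The paper handles the cases $j<i$ and $j=i$ by separate arguments: for $j<i$ it gives a direct coverage argument, lower-bounding $D[\ell_{j,o},d](t_o+R_{j,o})$ via the triangle inequality and then converting to $\underline{\Gamma}[\ell_{j,o},d]$ to show it exceeds $c_j = n^{(\g^j-1)/\g^j+\xi_j}$, so that $d$ lies outside $C[\ell_{j,o}]$ by sheer distance; only for $j=i$ does the paper invoke the failure of (ESC), deriving contradictory upper and lower bounds on $R_{i,o}=D[o,\ell_{i,o}](t_o)$. You instead run the (ESC)-failure argument uniformly across all $j\le i$, upper-bounding $\Gamma[o,d](t_o)$ rather than lower-bounding $\underline{\Gamma}[\ell,d]$.

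Your route is shorter and, interestingly, places a milder demand on $\la$: your $n^{\order{1}}$ prefactor needs only $\la\in\order{\log n/\log\log n}$ (the blanket Assumption~\ref{assumption:Travel-Time-vs-Dijkstra-Rank}), whereas the paper's distance-based argument for $j<i$ incurs an $\Order{\la^2\log\log n/\log n}$ loss in the exponent, which is why Theorem~\ref{thm:HORN-Complexities} ultimately assumes $\la\in\order{\sqrt{\log n/\log\log n}}$. The paper's $j<i$ argument does have one conceptual advantage: it shows $d\notin C[\ell]$ unconditionally, independent of whether $\ell$ is ever settled by the ball, while your argument needs $\ell$ to actually be reached so that (ESC) is tested there. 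For the application in Theorem~\ref{thm:HORN-Complexities} this is harmless (one only cares about the landmark at which (ALH) would first fire, and that one is settled by definition), and the paper's own $j=i$ case shares the same implicit reliance on settlement.
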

%
\begin{proof}[Proof of Lemma~\ref{lemma:HQA-Failure-In-Previous-Levels}]

Recall again that $R_{j,o} = D[o,\ell_{j,o}](t_o)$ is the distance from the origin $o$ to the first level-$j$ landmark that we meet in $RING[o;j](t_o)$.
We start by providing a simple proof for the cases of $1\leq j\leq i-1$. We shall then handle the case $j=i$ separately, since it is a little bit more involved. So, fix arbitrary $j\in [i-1]$. For any $\ell_{j,o}\in RING[o;j](t_o)\intersection L_j$ it holds, by the triangle inequality, that:
\begin{eqnarray*}
	\lefteqn{D[\ell_{j,o},d](t_o+R_{j,o})}
	\\
	&\geq& D[o,d](t_o) - D[o,\ell_{j,o}](t_o)
	\\
	&\DueTo{\geq}{\mbox{\tiny As.}~\ref{assumption:Travel-Time-vs-Dijkstra-Rank}}&
	\frac{1}{f^{1/\la}(n)} (\Gamma[o,d](t_o))^{1/\la} - g(n) (\G[o,\ell_{j,o}](t_o))^{1/\la}
	\\
	&>&
	\frac{1}{f^{1/\la}(n)} n^{ (\g^i-1) / (\la\g^i) } - g(n) \ln^{1/\la}(n) n^{ a (\g^j-1) / ((r+1)\la\g^j) }
	\\
	&=&
	n^{ (\g^j-1) / (\la\g^j) }\cdot
	\left[
		\frac{n^{ (\g^{i-j} - 1) / (\la\g^i) }}{f^{1/\la}(n)}
		-g(n) \frac{\ln^{1/\la}(n)}{ n^{ \left(1 - \frac{a}{r+1}\right) (\g^j-1) / (\la\g^j) }}
	\right]
\end{eqnarray*}
\begin{eqnarray*}
	\lefteqn{
		\Rightarrow
		\Gamma[\ell_{j,o},d](t_o+R_{j,o})
		\DueTo{\geq}{\mbox{\tiny As.}~\ref{assumption:Travel-Time-vs-Dijkstra-Rank}}
	\left(\frac{D[\ell_{j,o},d](t_o+R_{j,o})}{g(n)}\right)^{\la}}
	\\
	&>&
	n^{ (\g^j-1) / (\g^j) }\cdot
	\left[
		\frac{n^{ (\g^{i-j} - 1) / (\la\g^i) }}{g^{\la}(n) f^{1/\la}(n)}
		- \frac{\ln^{1/\la}(n)}{g^{\la-1}(n) n^{ \left(1 - \frac{\delta}{r+1}\right) (\g^j-1) / (\la\g^j) }}
	\right]^{\la}
\end{eqnarray*}
Observe now that, from this last inequality and Assumption~\ref{assumption:growth-of-free-flow-ball-sizes}, the following is deduced:
\begin{eqnarray*}
	\underline{\Gamma}[\ell_{j,o},d]
	&>&
	\frac{n^{ (\g^j-1) / (\g^j) }}{\polylog(n)}\cdot
	\left[
		\frac{ n^{ (\g^{i-j} - 1) / (\la\g^i) }}{ g^{\la}(n) f^{1/\la}(n) }
		- \frac{ \ln^{1/\la}(n) }{ g^{\la-1}(n) n^{ \left(1 - \frac{\delta}{r+1}\right) (\g^j-1) / (\la\g^j) } }
	\right]^{\la}
	\\
	&\geq& 	n^{ \frac{\g^j-1 }{ \g^j } + \frac{\g^{i-j}-1}{ \g^i } - \Order{\la^2\cdot\frac{\log\log(n) }{ \log(n) }} }
	\\
	&>& n^{(\g^j-1)/(\g^j) + \xi_j}
\end{eqnarray*}
The last inequality holds when $\la\in\order{\sqrt{\frac{\log(n)}{\log\log(n)}}}$, and $\xi_j < \g^{-j} - \g^{-i} = \frac{\g-1}{\g^{j+1}}$. The last inequality implies also that $d\notin C[\ell_{j,o}]$.

We shall now study separately the case $j=i$. Apart from the containment of the level-$i$ landmarks in $RING[o;i](t_o)$, we also exploit the fact that the first of these landmarks met by the unique $\alg{TDD}$ ball from $(o,t_o)$ does not cause an $(ESC)$-termination  of $\alg{HQA}$. This implies that:
\begin{eqnarray}
    \nonumber
	\frac{\overline{\De}[\ell_{i,o},d](t_o+R_{i,o})}{R_{i,o}}
	&<& (1+\eps)\beta(r+1) + \psi - 1 =: \chi
    \\
    \nonumber	
    \Rightarrow
	R_{i,o}
	&>& \frac{\overline{\De}[\ell_{i,o},d](t_o+R_{i,o})}{\chi}
			\geq \frac{D[\ell_{i,o},d](t_o+R_{i,o})}{\chi}
    \\
    \nonumber
	&\DueTo{\geq}{\mbox{\tiny triangle~ineq.}}& \frac{R_d - R_{i,o}}{\chi}
	\\
    \nonumber
	\Rightarrow
	R_{i,o} &>& \frac{R_d}{1+\chi}
	\\
    \label{ineq:R-i-1}
	&\DueTo{>}{\mbox{\tiny As.}~\ref{assumption:Travel-Time-vs-Dijkstra-Rank}}&
	\frac{ n^{ (\g^i-1) / (\la \g^i) } }{ (1+\chi) f^{1/\la}(n) }
\end{eqnarray}
Nevertheless, we also know that for any $\ell_{i,o}\in RING[o;i](t_o)$ the following holds:
\begin{eqnarray}
    \nonumber
	\Gamma[o,\ell_{i,o}](t_o)
	&\leq&  n^{\delta(\g^i-1) / ((r+1) \g^i)} \cdot \ln(n)
	\\
    \nonumber
	\DueTo{\Rightarrow}{\mbox{\tiny As.}~\ref{assumption:Travel-Time-vs-Dijkstra-Rank}}
	R_{i,o} 	= D[o,\ell_{i,o}](t_o)
				&\leq& g(n)\left(\G[o,\ell_{i,o}](t_o)\right)^{1/\la}
    \\
	 \nonumber
				&\leq &  g(n) n^{\delta(\g^i-1) / ((r+1)\la\g^i)} \ln^{1/\la}(n)
    \\
	\label{ineq:R-i-2}
             & \leq & \frac{ n^{ (\g^i-1) / (\la \g^i) } }{ (1+\chi) f^{1/\la}(n) }~
\end{eqnarray}
Inequality (\ref{ineq:R-i-2}) holds if and only if
\begin{eqnarray*}
    n^{ \left(1 - \frac{\delta}{r+1}\right)(\g^i-1) / (\la \g^i) }
	&\geq& (1+\chi) g(n) f^{1/\la}(n) \ln^{1/\la}(n)
	\\
	\Leftrightarrow
	\left(1 - \frac{\delta}{r+1}\right)\frac{\g^i-1}{\g^i}\log(n) - \log(\ln(n)) - \log(f(n))
	&\geq& \la\log(1+\chi) + \la\log(g(n))
	\\
	\Leftrightarrow
	\left(1 - \frac{\delta}{r+1}\right)\frac{\g^i-1}{\g^i}\log(n) - \Order{\log\log(n)}
	&\geq& \la\left[\log(1+\chi) + \Order{\log\log(n)}\right]
\end{eqnarray*}
which is certainly true for $\la \in \order{\frac{\log(n)}{\log\log(n)}}$ and $f(n),g(n)\in \polylog(n)$.
From (\ref{ineq:R-i-1}) and (\ref{ineq:R-i-2}) we are led to a contradiction. Therefore, any level-$i$ landmark in $RING[o;i](t_o)$ either does not possess a travel-time summary for $d$, or causes an early-stopping of $\alg{HQA}$ upon its settlement.
\end{proof}
The last lemma proves that, given that an $(ALH)$-termination occurs, $\alg{HQA}$ achieves the desired approximation guarantee.
\begin{lemma}
	\label{lemma:HQA+ALU-Approximation-Guarantee}
	For $i^*\in \{2,3,\ldots,k+1\}$, let $N_{i^*-1}< \Gamma[o,d](t_o)\leq N_{i^*}$. Assume also that an $(ALH)$-termination occurred. Then, an $\left(1+\eps\frac{(1+\eps/\psi)^{r+1}}{(1+\eps/\psi)^{r+1} - 1}\right)-$approximate solution is returned by $\alg{HQA}$, with probability $1-\Order{\frac{1}{n}}$. The expectation of the query-time is $\Order{(N_{i^*})^a \polylog(n)}$.
\end{lemma}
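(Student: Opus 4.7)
The plan is to first establish that, conditional on $(ALH)$-termination and with probability $1-\Order{1/n}$, $\alg{HQA}$ correctly identifies $i^*$ as the appropriate level. To do this I will chain together the three preparatory lemmas. Lemma~\ref{lemma:probability-of-landmark-appearing-in-ring} gives, for each level, existence of a landmark in the corresponding ring with failure probability $\Order{1/n}$; a union bound over the $k+1 \in \Order{\log\log n}$ levels still yields an aggregate failure probability of $\Order{1/n}$. Lemma~\ref{lemma:HQA-Failure-In-Previous-Levels} ensures that, as long as no $(ESC)$-termination is triggered, no level-$j$ landmark settled in $RING[o;j](t_o)$ for $j<i^*$ is informed about $d$, so $\alg{HQA}$ cannot prematurely ``guess'' any level below $i^*$. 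Lemma~\ref{lemma:HQA-Success-In-Appropriate-Level} ensures that the first level-$i^*$ landmark settled within $RING[o;i^*](t_o)$ is informed about $d$. Combining these three facts, conditioned on $(ALH)$ firing, the level assigned by $\alg{HQA}$ is indeed $i^*$ w.h.p.

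Next I will analyze the stretch guarantee once $i^*$ has been identified. The algorithm then executes $\alg{RQA}_{i^*}$ on the landmark set $M_{i^*} = \bigcup_{j=i^*}^{k+1} L_j$, whose effective density is dominated by $\rho_{i^*} = N_{i^*}^{-\delta/(r+1)}$ since $\rho_j$ is decreasing in $j$. The key point is that each recursion-node center $(w_k,t_k)$ lies within the time-dependent neighborhood of the (unknown) optimal $od$-path, and the closest landmark $\ell_k\in M_{i^*}$ settled from $(w_k,t_k)$ satisfies the same inequalities as in Lemma~\ref{lemma:HQA-Success-In-Appropriate-Level}. Consequently $d\in C[\ell_k]$, so an $\alg{FCA}$-style candidate $od$-path through $\ell_k$ is available at every recursion depth. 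The best such candidate then inherits the standard $\alg{RQA}$ stretch $1+\eps\cdot\frac{(1+\eps/\psi)^{r+1}}{(1+\eps/\psi)^{r+1}-1}$ of Theorem~\ref{thm:BIS+RQA-Complexities}, because the algorithm sees a landscape equivalent to that of $\alg{FLAT}$ with sampling rate $\rho_{i^*}$ and recursion budget $r$. Finally, if $(ESC)$-termination does occur before reaching any level-$i^*$ landmark, Lemma~\ref{lemma:early-stopping-criterion} already certifies at least a $(1+\eps+\psi/(\varphi(r+1)))$-approximation; setting $\varphi$ as in the statement of Theorem~\ref{thm:HORN-Complexities} reduces this to the same $(1+\s(r))$ bound.

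For the expected query time, I will note that the initial $\alg{TDD}$ ball from $(o,t_o)$ is guaranteed to terminate by the time it settles a landmark in $RING[o;i^*](t_o)$, i.e.\ at Dijkstra-Rank at most $\ln(n)\cdot N_{i^*}^{\delta/(r+1)}$. Each of the $\Order{(1/\rho_{i^*})^{r+1}} = \Order{N_{i^*}^{\delta}}$ recursive $\alg{TDD}$ balls of $\alg{RQA}_{i^*}$ has, in expectation, size $\Order{1/\rho_{i^*}} = \Order{N_{i^*}^{\delta/(r+1)})}$, and each $\alg{TDD}$ step pays the usual $\Order{\log n \cdot \log\log K_{\max}}$ per-settlement cost. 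Aggregating yields expected query-time $\Order{N_{i^*}^{\delta}\cdot\polylog(n)}$, which matches the claimed bound.

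The main obstacle will be the rigorous propagation of the ``coverage'' argument through every recursive invocation: the statement that $d\in C[\ell_k]$ for each settled landmark $\ell_k\in M_{i^*}$ must be re-derived from Assumptions \ref{assumption:Bounded-Travel-Time-Slopes}, \ref{assumption:Bounded-Opposite-Trips} and \ref{assumption:Travel-Time-vs-Dijkstra-Rank}, using the slack $\xi_{i^*}$ in the definition of $C[\ell_{i^*}]$ and the constraint $\la\in\order{\sqrt{\log n/\log\log n}}$, exactly as in Lemma~\ref{lemma:HQA-Success-In-Appropriate-Level}, but now at every recursion depth rather than only at the root. The bookkeeping needs to confirm that the residual travel-time from the current center to $d$ remains polynomially related to $N_{i^*}$, so that the coverage budget $n^{(\g^{i^*}-1)/\g^{i^*}+\xi_{i^*}}$ still absorbs the corresponding Dijkstra-Rank after applying Assumption~\ref{assumption:Travel-Time-vs-Dijkstra-Rank} twice (once to convert time to rank, once to factor in the asymmetry of $\zeta$).
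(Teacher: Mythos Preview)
Your proposal is correct and follows essentially the same route as the paper: chain Lemmata~\ref{lemma:probability-of-landmark-appearing-in-ring}, \ref{lemma:HQA-Success-In-Appropriate-Level}, \ref{lemma:HQA-Failure-In-Previous-Levels} to certify (w.h.p.) that the $(ALH)$-guess equals $i^*$, then run $\alg{RQA}_{i^*}$ and re-derive, for every ball center $w$ on the unknown shortest $od$-path and its closest landmark $\ell\in M_{i^*}$, the bound $\underline{\Gamma}[\ell,d]\leq n^{(\g^{i}-1)/\g^{i}+\xi_i}$ via Lemma~\ref{lemma:upper-bound-on-free-flow-distance} and two applications of Assumption~\ref{assumption:Travel-Time-vs-Dijkstra-Rank}; this is exactly the computation the paper carries out explicitly. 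Two minor remarks: the coverage step at recursion depth only needs $\la\in\order{\log n/\log\log n}$ (the stronger $\order{\sqrt{\log n/\log\log n}}$ is used in Lemma~\ref{lemma:HQA-Failure-In-Previous-Levels}, not here), and your query-time bound $N_{i^*}^{\delta}\cdot\polylog(n)$ is what the paper's argument actually yields (consistent with Theorem~\ref{thm:HORN-Complexities}); the exponent $a$ in the lemma's statement appears to be a typo for $\delta$.
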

%
\begin{proof}[Proof of Lemma~\ref{lemma:HQA+ALU-Approximation-Guarantee}]

As it was already explained by Lemmata~\ref{lemma:probability-of-landmark-appearing-in-ring}, \ref{lemma:HQA-Success-In-Appropriate-Level} and \ref{lemma:HQA-Failure-In-Previous-Levels}, we know that a successful ``guess'' of $i^*$ will occur with probability $1-\Order{\frac{1}{n}}$. Then, $\alg{HQA}$ proceeds with the execution of $\alg{RQA}_{i^*}$, whose expected time is $\Order{(N_{i^*})^a \polylog(n)}$ (cf. Theorem~\ref{thm:BIS+RQA-Complexities}).

As for the approximation guarantee, since the analysis for $\alg{RQA}_{i^*}$ is based solely on the quality of the paths via landmarks discovered from ball centers located along a minimum-travel-time $od$-path $p^*\in SP[o,d](t_o)$ (cf. corresponding proof for $\alg{RQA}$ in \cite{2014-Kontogiannis-Zaroliagis}), it suffices to prove that all the level-$(\geq i^*)$ landmarks discovered from ball centers which reside at the (unknown) shortest $od$-path, are informed about $d$.

For an arbitrary ball center $w_{i^*,j}\in p^*$ and its closest level-$i$ landmark for some $i\geq i^*$, $\ell_{i,j}\in L_i$, let $R_{i^*,j} = D[w_{i^*,j},\ell_{i,j}](t_o+D[o,w_{i^*,j}](t_o))$. Then, either $R_d\leq R_{i^*,o}+\ldots+R_{i^*,j}$, in which case an exact solution is returned, or else the following holds:
\begin{eqnarray*}
	\underline{D}[\ell_{i,j},d]
	&\leq& \frac{\zeta}{1-\La_{\min}} R_{i^*,j} + R_d
	< \left(1 + \frac{\zeta}{1-\La_{\min}}\right)\cdot R_d
	\\
	&\DueTo{\leq}{\mbox{\tiny As.}~\ref{assumption:Travel-Time-vs-Dijkstra-Rank}}&
	\left(1 + \frac{\zeta}{1-\La_{\min}}\right)\cdot g(n)\cdot n^{(\g^i-1)/(\la \g^i)}
	\\
	\Rightarrow
	\underline{\Gamma}[\ell_{i,j},d]	
	&\DueTo{\leq}{\mbox{\tiny As.}~\ref{assumption:Travel-Time-vs-Dijkstra-Rank}}&
	f(n)\cdot(\underline{D}[\ell_{i,j},d])^{\la}
	\leq f(n)\cdot \left(1 + \frac{\zeta}{1-\La_{\min}}\right)^{\la}\cdot g^{\la}(n)\cdot n^{(\g^i-1)/\g^i}
	\\
	&\leq& n^{(\g^i-1)/\g^i + \xi_i}
\end{eqnarray*}
for
\(
	\la\log(g(n)) + \log(f(n)) \leq \xi_i\log(n)
	\Leftrightarrow
	\xi_i \geq (\la+1)\cdot\Order{\frac{\log\log(n)}{\log(n)}}
\). Again, a sufficient condition for this is that $\la\in\order{\frac{\log(n)}{\log\log(n)}}$. We therefore conclude that $d\in C[\ell_{i,j}]$.

Since all the discovered landmarks from ball centers along the unknown shortest path are informed, the analysis of $\alg{RQA}_{i^*}$ provides the claimed approximation guarantee.
\end{proof}
Having proved all the required building blocks, we are now ready to provide the proof of our main argument for $\alg{HORN}$.
%
\begin{proof}[Proof of Theorem~\ref{thm:HORN-Complexities}]

We start with a short sketch of the proof, before delving into the details: The claimed preprocessing space and time are proved, mainly due to the appropriate choice of the $\xi_i$ parameters. We thus focus on the more involved analysis of the time-complexity of $\alg{HQA}$. First of all, exploiting Assumptions~\ref{assumption:Bounded-Travel-Time-Slopes} and \ref{assumption:Bounded-Opposite-Trips}, we prove an upper bound on the free-flow travel-time of the destination $d$ from an arbitrary landmark $\ell$, as a function of the (unknown) travel-time $D[o,d](t_o)$. We then prove that if the $(ESC)$-termination criterion occurs, then we already have a satisfactory approximation guarantee. Otherwise, assuming that the Dijkstra Rank $\Gamma[o,d](t_o)$ lies within the two consecutive values $N_{i^*-1}$ and $N_{i^*}$, we exploit the triangle inequality along with Assumptions ~\ref{assumption:growth-of-free-flow-ball-sizes} and \ref{assumption:Travel-Time-vs-Dijkstra-Rank}, to prove that, with high probability, the closest level-$i^*$ landmark $\ell_{i^*,o}$ to the origin $o$ lies in a certain ring (i.e., difference of two cocentric balls from $o$) and is indeed ``\emph{informed}''. Additionally, all the landmarks discovered prematurely with respect to the ring of the corresponding level to which they belong, cannot be ``\emph{informed}'', given the failure of the $(ESC)$-termination criterion. Moreover, discovered landmarks of higher levers, despite being informed, do not lie in the ring of the corresponding levels and therefore do not interrupt the guessing procedure. This then implies that the appropriate variant of $\alg{RQA}$ will be successfully executed and will return a travel-time estimation with the required approximation guarantee and query-time complexity sublinear in $N_{i^*}$ (the targeted Dijkstra-Rank).

We now proceed with the detailed presentation of the proof. For each level $i\in[k]$, the targeted Dijkstra-Rank is $N_i = n^{(\g-1) / \g^i}$. The coverage of each level-$i$ landmark contains $c_i = N_i\cdot n^{\xi_i} = n^{(\g-1) / \g^i + \xi_i}$ destinations, $F_i = (c_i)^{\theta_i / \nu}$ of which are discovered by $\alg{BIS}$ and the remaining $c_i - F_i$ destinations are discovered by $\alg{TRAP}$.

We start with the analysis of the approximation guarantee and the query-time complexity for $\alg{HQA}$.
Observe that the query-time is dominated by the scenario in which there is an $(ALH)$-termination (the discovery of the destination, or an $(ESC)$-termination will only improve the performance of the algorithm).
Assume that for some $i^*\in \{2,3,\ldots,k+1\}$ (which we call the \term{appropriate level} for the query at hand), $N_{i^*-1} = n^\frac{\g^{i^*-1}-1}{\g^{i^*-1}} < \Gamma[o,d](t_o)\leq n^\frac{\g^{i^*}-1}{\g^{i^*}} = N_{i^*}$.
Lemma~\ref{lemma:early-stopping-criterion} assures that, when an $(ESC)$-termination occurs, an $\left(1+\eps+\frac{\psi}{\varphi\cdot(r+1)}\right)$-approximate solution was anyway discovered and thus there is no need to guarantee the success of the guessing procedure. By setting $\varphi = \frac{\eps (r+1)}{\psi (1+\eps/\psi)^{r+1} - 1}$, the approximation guarantee when $(ESC)$-termination occurs is equal to $1+\eps\frac{(1+\eps/\psi)^{r+1}}{(1+\eps/\psi)^{r+1} -1 }$.
Lemma~\ref{lemma:probability-of-landmark-appearing-in-ring} proves that, with probability $1-\Order{\frac{1}{n}}$, at least one appearance of a level-$i^*$ landmark occurs in the appropriate ring: $L_{i^*}\intersection RING[o;i^*](t_o) \neq \emptyset$.
Lemma~\ref{lemma:HQA-Success-In-Appropriate-Level} then proves that, given that an $(ALH)$-termination occurs, the first level-$i^*$ landmark $\ell_{i^*,o}$ settled by the initial $\alg{TDD}$ ball within $RING[o;i^*](t_o)$ is indeed an ``informed'' landmark: $d\in C[\ell_{i^*,o}]$.
Lemma~\ref{lemma:HQA-Failure-In-Previous-Levels} proves that, given an $(ALH)$-termination, no landmark of a previous level $j<i^*$ that was settled before $\ell_{i^*,o}$ may be informed: $\forall \ell\in\union_{j\in[i^*-1]} (L_j\intersection RING[o;j](t_o)),~ d\notin C[\ell]$.
As for landmarks $\ell\in \union_{j=i^*+1}^{k+1} L_j$, if we settle $\ell$ before $\ell_{i^*,o}$ and it happens that $d\in C[\ell]$ (i.e., $\ell$ is ``informed''), this event will not interrupt the guessing procedure of $\alg{HQA}$, because it is not in the ring of the corresponding level: $(B[o](t_o)\intersection\union_{j=1}^{i^*} RING[o;j](t_o) )\intersection (\union_{j=i^*+1}^{k+1} RING[o;j](t_o)) = \emptyset$.
Therefore we conclude that, in the case of an $(ALH)$-termination, with probability $1-\Order{\frac{1}{n}}$, the ``guess'' of the appropriate level-$i^*$ is indeed correct.
Finally, Lemma~\ref{lemma:HQA+ALU-Approximation-Guarantee} demonstrates that, when an $(ALH)$-termination occurs with a successful ``guess'' of $i^*$, a $\left(1+\eps\frac{(1+\eps/\psi)^{r+1}}{(1+\eps/\psi)^{r+1} -1 }\right)$-approximate solution is returned by $\alg{HQA}$.

As for the expected query-time of $\alg{HQA}$, we wish to assure that it is sublinear in the Dijkstra-Rank of the appropriate level $i^*$ of the hierarchy. Suppose that the exponent of sublinearity is $\delta\in(a,1)$, i.e., the query-time is comparable to $\Order{(N_{i^*})^{\delta}}$, where $a\in (0,1)$ is the exponent relating the period of the metric with the network size ($T = n^a$). We focus on the highly-probable event that $\alg{HQA}$, if it terminates due to the (ALH) criterion, makes a correct guess of $i^*$. Conditioned on this event, the expected cost of $\alg{RQA}_{i^*}$ gives the appropriate value for the exponent $\omega_i$ of the landmark-sampling probability:
\[
	\left(\frac{1}{\rho_{i^*}}\right)^{r+1}\log\left(\frac{1}{\rho_{i^*}}\right)\polylog(n)
	= n^{\omega_i(r+1) + \order{1}}
	= (N_{i^*})^{\delta}
	\Rightarrow
	\fbox{$\omega_i = \frac{\delta}{r+1}\frac{\g^{i^*}-1}{\g^{i^*}}$}
\]
The contribution to the expected query-time of $\alg{HQA}$ of the unlikely event that the algorithm makes a wrong guess about the appropriate level of the query at hand is negligible, due to the quite small probability of this happening.

We proceed next with the study of the required time and space for the $\alg{BIS+TRAP}$-based preprocessing of the $\alg{HORN}$ oracle. We wish to bound the preprocessing requirements with $(k+1)\cdot n^{2-\beta}$, for a given $\beta>0$. We shall make the appropriate choices of our tuning parameters so that in each level it holds that the preprocessing requirements are $S_i ~,~ P_i \leq n^{2-\beta}$.

We begin with the determination of the requirements of level-$i$, for each $i\in[k]$. For this level we generate $|L_i| = n^{1-\omega_i}$ landmarks, each of which possesses travel-time summaries for all the $c_i$ destinations contained in its own coverage. Based on the analysis of the preprocessing analysis of $\alg{FLAT}$ (cf. Theorem~\ref{thm:FLAT-Complexities}), but now restricting ourselves within the coverage of each landmark, we know that the overall preprocessing requirements of level-$i$ are bounded as follows:
\[
	\begin{array}{rcl}
	S_i ~,~ P_i
	&\in& \Order{|L_i|\cdot\left( F_i^2\polylog(F_i) + T^{1-\theta_i} c_i\right)}
	\\
	&=& \Order{n^{1-\omega_i}
					\cdot\left[
						n^{\frac{2\theta_i}{\nu}\left(\frac{\g^i-1}{\g^i} +\xi_i\right) + \order{1}}
						+ n^{a(1-\theta_i) + \frac{\g^i-1}{\g^i} + \xi_i}
					\right]}
	\\
	&\subseteq& \Order{n^{1-\omega_i}
					\cdot\left[
						n^{\frac{2\theta_i}{\nu} + \order{1}}
						+ n^{a(1-\theta_i) + 1}
					\right]}
	\end{array}
\]
provided that $\xi_i\leq \g^{-i}$. We choose again $\theta_i = \frac{1+a}{2/\nu+a}$, which in turn assures that
\[
	S_i ~,~ P_i \in n^{2-\omega_i + a(1-\theta_i) + \order{1}}
\]

We correlate our demand for subquadratic preprocessing with the recursion budget (and thus, the approximation guarantee) that we can achieve:
\[
	\begin{array}{rcl}
	&& 2-\omega_i + a\cdot(1-\theta_i)
	< 2 - \beta < 2
	\\
	\Leftrightarrow
	&& \beta < \omega_i - a \cdot (1-\theta_i)
	= \frac{\delta(1-\g^{-i})}{r+1} - a\cdot \frac{2 / \nu - 1}{2 / \nu + a}
	\\
	\Leftrightarrow
	&& \fbox{$r < \frac{\delta}{a} \cdot \frac{(2 / \nu + a)(1 - \g^{-i})}{\beta \cdot (2/(a\nu) + 1) + 2 / \nu - 1} - 1$}
	\end{array}
\]

Therefore, the overall preprocessing requirements of $\alg{HORN}$ are
\(
	S_{\alg{HORN}} ~,~ P_{\alg{HORN}} \in (k+1)\cdot n^{2-\beta}
	= n^{2-\beta+\frac{\log(k+1)}{\log(n)}}
	= n^{2-\beta+\order{1}}
\)
for $k\in\polylog(n)$.
\end{proof}


\end{document}